\documentclass[12pt,oneside,reqno,twoside]{article}

\setcounter{tocdepth}{2} %%% Only show up to subsection

\usepackage{geometry,verbatim}
\geometry{letterpaper, margin=1.3in}

\usepackage{anyfontsize}

\usepackage{enumerate}
\usepackage[shortlabels]{enumitem}

\usepackage{
	% mathabx,
	mathtools,amsmath,amssymb,enumitem,amsthm,nccmath
}
\usepackage{graphicx,url}

\usepackage{xfrac} % sideway fractions
\usepackage{nicefrac}

\usepackage{color,cite,verbatim}

\usepackage[svgnames]{xcolor}

%%%% Removing red boxes around hyperlinks

\usepackage[colorlinks=true, linkcolor=Maroon, urlcolor=Maroon]{hyperref}

%%% Changing citation links color

\hypersetup{citecolor=blue}

\mathtoolsset{showonlyrefs} %%% Only number referenced equations

\usepackage{dsfont}
\usepackage{bm}
\usepackage{mathrsfs} %For \mathscr{S} -> Schwartz
\usepackage{suetterl} %%frak letters%%
\usepackage{caption}
\usepackage{subcaption}		%%subfigures, etc.%%
\usepackage{braket} %Bra-Ket notation
\usepackage{stmaryrd} %%double brackets

\usepackage[T1]{fontenc}
\usepackage{currvita}

\usepackage{csquotes} %% quotes

%%%% Colored boxes %%% 
\usepackage{empheq}
\usepackage[most]{tcolorbox}

%%% Tikz settings
\usepackage{float}
\usepackage[all]{xy}
\usepackage{tikz}
\usepackage{pgfplots}
\usetikzlibrary{arrows,arrows.meta,backgrounds,positioning,fit}
% \usetikzlibrary{decorations.pathreplacing,calligraphy}
\usepgfplotslibrary{fillbetween}
\usetikzlibrary{patterns}
\pgfplotsset{width=7cm,compat=newest}
%\usepackage{tikz-cd}

%% Set dot density %%
\tikzset{%
	dots/.style args={#1per #2}{%
		line cap=round,
		dash pattern=on 0 off #2/#1
	}
}

\allowdisplaybreaks[1]		%% page break in equation arrays %%

%%% Special hyphenations
\hyphenation{in-tra-layer}
\hyphenation{mo-no-layer}
\hyphenation{Ca-zeaux}
\hyphenation{Mas-satt}
\hyphenation{Kon-to-ro-va}

%%%% Mitch's comments %%%%

%%%% Michael's comments %%%%

\def\N{{\mathbb N}}	
				%%%%%%%%%
\def\R{{\mathbb R}}				 %% sets %%%%
\def\Z{{\mathbb Z}}				%%%%%%%%%
\def\T{{\mathbb T}}
\renewcommand{\S}{\mathbb S}

\def\GLn{\mathrm{GL}}

 		  %%%%%%%%%%%%%%%%
					 %% underscore letters %%
					 %%%%%%%%%%%%%%%%

						%%%%%%%%%%%%%%
				       %% overline letters %%
				        %%%%%%%%%%%%%%

\def\ba{{\bf a}}

\def\bh{{\bf h}}

\def\bu{{\bf u}}\def\bU{{\bf U}}
\def\bv{{\bf v}}
\def\bw{{\bf w}}
\def\bx{{\bf x}}

%%%%%%%%%%%%
				%% BF Letters %%%%
					%%%%%%%%%%%%

\def\bR{\mathbf{R}}

\def\bU{\mathbf{U}}
\def\bV{\mathbf{V}}

\def\balph{{\bm\alpha}}

\def\bgam{{\bm\gamma}}
\def\bomeg{{\bm\omega}}

\def\vep{{\varepsilon}}
\def\vphi{{\varphi}}						%%%%%%%%%%%
							%%%Greek Abb.%%
							   %%%%%%%%%%%

\def\cA{{\mathcal A}}

\def\cD{{\mathcal D}}
\def\cE{{\mathcal E}}

						%%%%%%%%%%%%%
\def\cJ{{\mathcal J}}					 %%%% Cal. letters %%%	
\def\cK{{\mathcal K}}					%%%%%%%%%%%%%
\def\cL{{\mathcal L}}
\def\cM{{\mathcal M}}

\def\cR{{\mathcal R}}
\def\cS{{\mathcal S}}

\def\frakD{\mathfrak{D}}

%% Standard math ops %%

\def\dx#1{\mathrm{d}#1\ }
\def\dxx#1{\mathrm{d}#1}

			%%%%%%%%%%%%%%
	
   %%%%%%%%%%%%%%

\def\1{{\bf 1}}

\newcommand{\scp}[2]{\langle#1,#2\rangle}
 
\newcommand{\floor}[1]{\left\lfloor #1 \right\rfloor}

\newcommand{\supp}{\operatorname{supp}}

\def\dist{\operatorname{dist}}

\def\diag{\operatorname{diag}}

 %% 2D rotations
 %% symmetric difference

%Ergodic average symbol

\makeatletter
\newcommand\avsuminner[2]{%
	{\sbox0{$\m@th#1\sum$}%
		\vphantom{\usebox0}%
		\ooalign{%
			\hidewidth
			\smash{\vrule height\dimexpr\ht0+1pt\relax depth\dimexpr\dp0+1pt\relax}%
			\hidewidth\cr
			$\m@th#1\sum$\cr
		}%
	}%
}
\makeatother

%%%% average integral symbol

\usepackage{scalerel}[2014/03/10]
\usepackage[usestackEOL]{stackengine}
\def\avint{\,\ThisStyle{\ensurestackMath{%
			\stackinset{c}{.2\LMpt}{c}{.5\LMpt}{\SavedStyle-}{\SavedStyle\phantom{\int}}}%
		\setbox0=\hbox{$\SavedStyle\int\,$}\kern-\wd0}\int}

\makeatletter
\def\namedlabel#1#2{\begingroup					
	#2%
	\def\@currentlabel{#2}%
	\phantomsection\label{#1}\endgroup
}
\makeatother

\theoremstyle{plain}
\newtheorem{theorem}{Theorem}[section]		
\newtheorem{definition}[theorem]{Definition}	
\newtheorem{assumption}{Assumption}%%%%%%%%%
\newtheorem{proposition}[theorem]{Proposition}		   %% Numbering%
\newtheorem{example}[theorem]{Example}

\newtheorem{lemma}[theorem]{Lemma}

\newtheorem{remark}[theorem]{Remark}
\newtheorem*{remark*}{Remark}

\numberwithin{equation}{section}

%%%%%%%%%%%%%%%%%%%%%%%%%%%%%%%%%%%%%
%%%%%%%%%%%%%%% Work Specific abbreviations %%%%%%%
%%%%%%%%%%%%%%%%%%%%%%%%%%%%%%%%%%%%%

%%%% Spaces %%%%%%

%%% Bloch modes operator 

\def\Cper{C_{\mathrm{per}}}

%%%%% TD limit %%%

% \DeclareMathOperator{\errTB}{err^{(2B)}}

%%%%Energies %%%

% \def\GSFE{e^{\mathrm{(GSFE)}}}
\def\inter{e^{\mathrm{(inter)}}}

\def\mono{e^{\mathrm{(mono)}}}
\def\smono{\cS^{\mathrm{(mono)}}} %% monolayer stress
\def\CB{e^{\mathrm{(CB)}}}
\def\sPK{\cS^{\mathrm{(PK)}}} %%% Cauchy-Born stress
%%% Monolayer potential
\def\Vmono{V^{\mathrm{(mono)}}}

\def\sitepotmono{\phi^{\mathrm{(mono)}}}
\def\dsitepotmono{\Delta\Phi^{\mathrm{(mono)}}}

%%%% Interlayer potential
\def\Vinter{V^{\mathrm{(inter)}}}

\def\etot{e^{\mathrm{(tot)}}}
\def\etotCB{e^{\mathrm{(tot)}}_{\mathrm{CB}}}

\def\GSFE{\Phi^{\mathrm{(GSFE)}}}

\def\sitepotinter{\phi^{\mathrm{(inter)}}}
\def\dsitepotinter{\Delta\Phi^{\mathrm{(inter)}}}
\def\sitepotdouble{\Psi^{\mathrm{(inter)}}}

\def\demono{\Delta E^{\mathrm{(mono)}}}
\def\deinter{\Delta E^{\mathrm{(inter)}}}
\def\detot{\Delta E^{\mathrm{(tot)}}}

\def\Eapprox{E^{\mathrm{(approx)}}}

%%%%%%% Norms %%%%%

\def\mmono{m^{\mathrm{(mono)}}}
\def\Mmono{M^{\mathrm{(mono)}}}
\def\mmonoat{m^{\mathrm{(mono,at)}}}
\def\Mmonoat{M^{\mathrm{(mono,at)}}}

\def\minter{m^{\mathrm{(inter)}}}
\def\Minter{M^{\mathrm{(inter)}}}

%%%Lattices, unit cells, ... %%%
\def\atd{a_0}        

\def\MoireCell{\Gamma_\cM}
\def\MoireCellres{\Gamma_\cM^{(0)}}

\def\MoireBZ{\Gamma_\cM^*}

\def\mavint{\avint_{\MoireCell}\hspace{-1ex}}
\def\mavresint{\avint_{\MoireCellres}\hspace{-1ex}}

\def\MoireML{\cR^*_\cM}
\def\MoireRL{\cR_\cM}
\def\MoireRLV{A_\cM}
\def\MoireMLV{B_\cM}

\def\MoireR{R_\cM}

% {\delta_{\BZ_{#1}}^{(N)}}

%%% Ergodicity %%

\def\Lloc{L_{\mathrm{loc}}}
\def\Lper{L_{\mathrm{per}}}
\def\Wper{W_{\mathrm{per}}}
\def\Wbper{\bar{W}_{\mathrm{per}}}

%%%% Displacements %%%%%

\def\bueq{\bu^{\mathrm{(eq)}}}		

\def\ueq{u^{\mathrm{(eq)}}}

\def\buCB{\bu^{\mathrm{(CB)}}}

\def\uCB{u^{\mathrm{(CB)}}}

\def\bUCB{\bU^{\mathrm{(CB)}}}
\def\bdu{{\bm \delta \bu}}
\def\du{\delta u}
\def\bdv{{\bm \delta \bv}}
\def\dv{\delta v}
\def\dw{\delta w}

%% Interpolants %%%

%%Difference stencils

\newcommand{\delj}{\Delta^{(j)}}
\newcommand{\delji}{\Delta^{(j,3-j)}}

%%%% Relaxation %%%%

\newcommand{\delCB}{\delta^{\mathrm{(CB)}}}
\newcommand{\delmono}{\delta^{\mathrm{(mono)}}}

%%%% Hamiltonians %%%%

%%% Disregistries %%%

\def\bot{D_{1\to 2}}

\def\disregj{D_{j\to3-j}}

\def\mfrac#1{\left\{#1\right\}_\cM}
\def\Lfrac#1#2{\left\{#1\right\}_{#2}}
\def\mfloor#1{\floor{#1}_\cM}
\def\Lfloor#1#2{\floor{#1}_{#2}}

%%%%%%%%  Colors

%\usepackage[table,x11names]{xcolor}
\usepackage{color, colortbl}

\definecolor{UTorange}{RGB}{180,100,0} %%%% UT Colors %%%
\definecolor{UTgrey}{RGB}{51, 63, 72}

\definecolor{UMNmaroon}{RGB}{122,0,25} %%%% UMN Colors %%%
\definecolor{UMNgold}{RGB}{255,204,51}

\definecolor{primary}{RGB}{122,0,25}
\definecolor{secondary}{RGB}{255,204,51}

%% Color pallette for colorblind

\definecolor{CBred}{RGB}{218, 85, 38}
\definecolor{CBorange}{RGB}{246, 137, 61}
\definecolor{CByellow}{RGB}{254, 188, 56}
\definecolor{CBbeige}{RGB}{216, 198, 180}
\definecolor{CBgray}{RGB}{105, 127, 144}
\definecolor{CBgreen}{HTML}{5BA300}
\definecolor{CBblue}{HTML}{0073E6}

%%%%%%%%%%%	

%%%%%%%%%%%%%	

\makeatother

%%% BEGIN AW ADDED %%% 
% some macros which make writing common formulae easier

%\newcommand{\bra}[1]{\ensuremath{ \left< #1 \right| } }
%\newcommand{\ket}[1]{\ensuremath{ \left| #1 \right> } }

%\DeclareMathOperator{\diag}{diag}
% for commenting

%%% END AW ADDED %%%

%%%%%% Table of content indentation

\makeatletter
\def\@tocline#1#2#3#4#5#6#7{\relax
	\ifnum #1>\c@tocdepth % then omit
	\else
	\par \addpenalty\@secpenalty\addvspace{#2}%
	\begingroup \hyphenpenalty\@M
	\@ifempty{#4}{%
		\@tempdima\csname r@tocindent\number#1\endcsname\relax
	}{%
		\@tempdima#4\relax
	}%
	\parindent\z@ \leftskip#3\relax \advance\leftskip\@tempdima\relax
	\rightskip\@pnumwidth plus4em \parfillskip-\@pnumwidth
	#5\leavevmode\hskip-\@tempdima
	\ifcase #1
	\or\or \hskip 2em \or \hskip 3em \else \hskip 4em \fi%
	#6\nobreak\relax
	\hfill\hbox to\@pnumwidth{\@tocpagenum{#7}}\par% <---- \dotfill -> \hfill
	\nobreak
	\endgroup
	\fi}
\makeatother

%%% Headers 

\usepackage{fancyhdr}

\fancyhf{}

\setlength{\headheight}{13.6pt}
\fancyhead[CE]{\small \scshape Definition of moir\'e phonons}
\fancyhead[CO]{\small\scshape Hott, Watson, Luskin}
\fancypagestyle{headings}{%
	\cfoot{\thepage}
}
\pagestyle{headings}
\setlength{\footskip}{23.0pt}

\usepackage{orcidlink}
\usepackage[affil-it]{authblk}

\setlength{\affilsep}{2em} 

\begin{document} 
	
	\date{\today}
	\title
    {Mathematical foundations of phonons in incommensurate materials}
	% {On the (in-)stability of phonons in stacking-based relaxation models}
	
	\author[1]{Michael Hott\orcidlink{0000-0003-4243-6585}%
		\thanks{E-mail: \texttt{mhott@umn.edu}}}
 
   \author[1]{Alexander B. Watson\orcidlink{0000-0002-7566-4851}%
		\thanks{E-mail: \texttt{abwatson@umn.edu}}}
 
	\author[1]{Mitchell Luskin\orcidlink{0000-0003-1981-199X}%
		\thanks{E-mail: \texttt{luskin@umn.edu}}}

\affil[1]{School of Mathematics\\ University of Minneapolis, Twin Cities\\ Minneapolis, MN 55455, USA}

	% Keywords: Moir\'e materials, ergodic theory, Allen-Cahn
	
	%%% different geometry for title page %%
	
	\newgeometry{margin=1in}
	
	\maketitle

	\begin{abstract}
        The physical observation that it is energetically favorable for interacting particles to arrange in a periodic fashion remains a largely open problem, referred to as the \emph{crystallization conjecture}. In specific models, e.g., coming from atomic pair potentials, periodic configurations can be shown to be stable under, both, global $\ell^2$ or local perturbations. The situation changes when one studies aperiodic media. The specific class of aperiodic media we are interested in arise from taking two 2D periodic crystals and stacking them parallel at a relative twist. We study different notions of stability for such systems. The goal of our analysis is to provide phonons in such systems with meaning. In periodic media, phonons are generalized eigenvectors for a stability operator acting on $\ell^2$, coming from a mechanical energy. Using ideas rigorously established for the 1D Frenkel-Kontorova model, we assume that we can parametrize minimizing lattice deformations w.r.t. local perturbations via continuous \emph{stacking-periodic} functions, see \cite{Cazeaux-Massatt-Luskin-ARMA2020}, for which we previously derived a continuous energy density functional \cite{hott2023incommensurate}. Such (continuous) energy densities are analytically and computationally much better accessible compared to discrete energy functionals. In order to pass to an $\ell^2$-based energy functional, we also study the offset energy w.r.t. given lattice deformations, under $\ell^1$-perturbations. Our findings show that, in the case of an undeformed bilayer heterostructure, while the energy density can be shown to be stable under the assumption of stability of individual layers, the offset energy fails to be stable in the case of twisted bilayer graphene. We then establish conditions for stability and instability of the offset energy w.r.t. the relaxed lattice. Finally, we show that, in the case of incommensurate bilayer homostructures, i.e., two equal layers, if we choose minimizing deformations according to the global energy density above, the offset energy is stable in the limit of zero twist angle. Consequently, in this case, one can then define phonons as generalized eigenvectors w.r.t. the stability operator associated with the offset energy.

	\end{abstract}

	\paragraph{Statements and Declarations} The authors do not declare financial or non-financial interests that are directly or indirectly related to the work submitted for publication.
	
	\paragraph{Data availablity} The manuscript has no associated data.
	
	\paragraph{Acknowledgments} MH's and ML's research was partially supported by Simons Targeted Grant Award No. 896630. AW's, and ML's research was supported in part by NSF DMREF Award No. 1922165. ML's research was also supported in part by grant NSF PHY-2309135 to the Kavli Institute for Theoretical Physics (KITP). AW's research was also supported in part by grant NSF DMS-2406981. The authors would like to thank Ziyan (Zoe) Zhu for helpful comments and stimulating discussions.
	
	\tableofcontents
	
	\restoregeometry %%% End new geometry

    \section{Introduction}

	Generally, phonons are understood as generalized eigenvectors of a stability operator acting on $\ell^2$, associated with a mechanical energy. More precisely, expanding a mechanical energy functional w.r.t. equilibrium positions $\{R_{\mathrm{eq}}^j\}_{j\in\cJ}$, perturbed by lattice deformations $\du:\{R_{\mathrm{eq}}^j\}_{j\in\cJ}\to \R^d$,
    \begin{align}
        E(\{R_{\mathrm{eq}}^j+\du(R_{\mathrm{eq}}^j)\}_{j\in\cJ}) \, = \, E(\{R_{\mathrm{eq}}^j\}_{j\in\cJ}) \, + \, \delta^2E(\{R_{\mathrm{eq}}^j\}_{j\in\cJ})[\du,\du] \, + \, \ldots \, ,
    \end{align}
    we define phonons as generalized eigenvectors of the stability operator $\delta^2E(\{R_{\mathrm{eq}}^j\}_{j\in\cJ})$ acting on $\ell^2$. Crucially, this requires to define an energy functional yielding a well-defined stability operator acting on $\ell^2$. Notice that $\delta^2E(\{R_{\mathrm{eq}}^j\}_{j\in\cJ})$ is necessarily positive (semi-)definite.
    
    \par In the case of periodic media described by a (multi)lattice, the unperturbed lattice positions can be shown to be stable under global $\ell^2$ or local perturbations. An energy functional, in the simplest case, is given by the sum of harmonic springs. Then the associated phonon Hamiltonian can be computed as the root of the second variation of the mechanical energy. This is due to the fact that, in this case, we model lattice vibrations via harmonic oscillators at every lattice position. Due to the lattice periodicity, Bloch-Floquet theory enables us to analyze the spectral properties of the associated phonons. More precisely, in this case, we define the \emph{phonon Hamiltonian} as the (square root of) the stability operator associated with the energy functional. 
	\par In general aperiodic media, however, the story is not as straight-forward. First, we need to study a mechanical energy to obtain a relaxed lattice configuration. This relaxed lattice configuration will generally not be periodic, so Bloch-Floquet theory is not applicable. Moreover, while it is possible to again model the mechanical energy via harmonic springs, analytically or computationally determining the associated energy minimum is generally challenging. Our goal is thus twofold: 
	\begin{enumerate}
		\item Determine an analytically/computationally accessible mechanical energy to compute an energy minimizer.
		\item Find an appropriate phonon model  that allows us to treat phonons as $\ell^2$-waves.
	\end{enumerate}

    \par In order to analyze this question, we shall turn to incommensurate heterostructures. These arise when two or more two-dimensional crystal layers are stacked on top of each other at a relative twist, forming a quasiperiodic medium. Due to the periodicity of the individual layers, quasiperiodic macroscopic patterns, referred to as \emph{moir\'e pattern}, emerge. Such systems have gained a lot of attention, when it turned out that, at certain small twist angles, known as \emph{magic angles}, special electronic properties are unlocked. More precisely, Bistritzer and MacDonald \cite{bistritzer2011moire} derived a reduced model for twisted bilayer graphene that predicted the occurrence of nearly flat electronic bands at a twist angle of $\sim1.1^\circ$, which in turn yields a highly localized density of states. As a consequence of the flat dispersion, the interaction between electrons become dominant and various behavior typical for strongly correlated systems has been observed to emerge, see \cite{CaoJarilloHerero2018}. One of the mysterious phases is the superconducting phase, for it occurs at a very low electron density compared to the transition temperature, which is rather typical for unconventional superconductors. However, the mechanism underlying the superconducting phase remains a puzzle. 
    \par Spectroscopic measurements \cite{oh2021unconventionalSC} seem to hint at unconventional superconductivity, while electronic transport \cite{liu2021coulombscreening} shows independence of the critical temperature w.r.t. Coulomb screening. This, in turn, suggests that phonons may play a critical role in understanding superconductivity.
    
    \par One of the main obstacles in understanding the role of phonons, is the large number of -- even intersecting -- phonon bands. Physicists have thus come up with certain selection rules to pick only a few bands and projected them onto a low-energy regime, similar as previously done for electrons by Bistritzer and MacDonald, see \cite{koshinonam2019phonons,zoe-moirephonon2022low}. In addition, these computations cutoff the phonon energy at zero and ignore possible phonon bands with negative dispersion, see, e.g., \cite{zhu2022lowenergyphonon,zoe-2024microscopic}. Our goal is to provide a mathematical foundation for phonons in incommensurate multilayer structures, and to study whether or not this negative dispersion is a model relic or if it could be avoided by adjusting the model.

    \par In order to find an energy functional, we can generally study (at least) two models
    \begin{enumerate}
        \item Minimize a \emph{global} energy w.r.t. $\ell^2$-perturbations.
        \item Minimize a \emph{local} energy in any finite subset of the medium.
    \end{enumerate}
    The first approach has the advantage that we would like to model phonons as $\ell^2$-normalized waves propagating through an aperiodic medium. However, the main disadvantage here is that, in order to construct a relaxed configuration, one typically constrains the problem to a supercell, corresponding, e.g., to a moir\'e cell, and chooses, e.g., periodic boundary conditions, see \cite{malena2018,malena2023,Srolovitz2015}. This is due to the fact that displacement functions are parametrized by their discrete lattice positions and continuum approximations that are commonly used for periodic systems, such as the Cauchy-Born approximation, are not accessible to the generic aperiodicity. Computations over large supercells are challenging, as they scale with $O_{\theta\to0}(\frac1{\theta^2})$ for twist angles $\theta$. 
    \par In order to minimize local energies in a quasiperiodic medium, Aubry and Le Daeron \cite{aubry1983frenkel-kontorova} studied the Frenkel-Kontorova model. In this model, lattice displacements on $\Z$ are modeled via harmonic springs, and an external incommensurate potential is added. Then, under the assumption of an additonal recurrence condition, it was shown that a minimizer can be parametrized by a periodic function with the same periodicity as the external potential. This  \emph{hull} function, see also \cite{bellissard-quasi-per1982,bellissard1994noncommutative,bellissard2002coherent}, is either continuous, when the external potential is very weak, or discontinuous w.r.t. a Cantorus when the external potential is very strong. Analogous results were obtained for 1D rippling chains \cite{cazeauxrippling2017} by applying Aubry-Mather theory.
    \par Motivated by these 1D results, Cazeaux, Massatt and one of the authors \cite{Cazeaux-Massatt-Luskin-ARMA2020}, see also \cite{Carr2018}, developed an energy density for continuous functions that are periodic w.r.t. to the local \emph{stacking configuration}. We shall note that establishing a two- and higher-dimensional Aubry-Mather theory remains an open problem. In particular, this approach introduces a \emph{third} model for an energy functional:
    \begin{enumerate}[3.]
    	\item Minimize an energy density w.r.t. \emph{stacking-periodic} displacements.
    \end{enumerate}
    From the previous discussion, we have that such a model replaces the second model above in the case of the 1D Frenkel-Kontorova model, while, in 2D, it is a distinct model.
    \par The main advantage of this approach is that, instead of sampling over supercells, it suffices to sample over stacking configurations, making the minimization problem analytically and computationally more accessible; the predicted results are in agreement with the experimental observations \cite{cazeauxclark23domainwall}. However, since this energy functional is derived for energy minimizers only, and since these are assumed to be periodic, and thus non-decaying in $\ell^2$, we cannot treat phonons in the same framework.

    \par In order to pass from a model based on stacking-periodic perturbations to $\ell^2$-perturbations, we employ an idea pursued in \cite{hainzl2007QEDmean-no-photon}. There, a theory is developed to study energy minimizers for interacting Dirac particles. They start by constructing a \emph{translation-invariant} minimizer in the absence of an electric background field for which the energy \emph{density} is stable w.r.t. \emph{translation-invariant perturbations}. Using this minimizer as a reference state, they then consider the \emph{offset energy} and minimize it w.r.t. Hilbert-Schmidt perturbations. These two minimizers are shown to coincide.

    \par In this work, we adopt these ideas in the following way: We start by recalling results established \cite{hott2023incommensurate}, where we generalized the ideas in \cite{Cazeaux-Massatt-Luskin-ARMA2020} to obtain an atomistic energy based on displacements that are \emph{periodic} w.r.t. the local stacking configuration. Crucially, this energy functional is obtained via a thermodynamic limit, analogously to \cite{hainzl2007QEDmean-no-photon} 
    \begin{equation}
        \etot(\bu) \, := \, \lim_{r\to\infty}\frac{E\big|_{B_r}(\bu)}{|B_r|} \, .
    \end{equation}
    We also establish conditions on the many-body potentials that allow us to employ the results in \cite{hott2023incommensurate}. For more details, we refer to Section \ref{sec-assumptions} for the assumptions on the potentials, and to Section \ref{sec-energy-well-def} to show well-definedness of the energy functional. 
    \par In Section \ref{sec-relax}, we continue to prove the existence a locally unique minimizer $\bueq$ of $\etot$ which is stable w.r.t. perturbations that are periodic w.r.t. the local environment. This minimizer lets us compute the \emph{offset energy} w.r.t. the total energy of $\bueq$
    \begin{equation}
        \detot_{\bueq}(\bdu) \, := \, \lim_{r\to\infty}\Big(E\big|_{B_r}(\bueq+\bdu)-E\big|_{B_r}(\bueq)\Big) \, .
    \end{equation}

    \paragraph{Definition of phonon Hamiltonian} A natural question to ask is whether $\bueq$ is still a minimizer for the offset energy, i.e., whether $0$ is a minimizer of $\detot_{\bueq}$. Assuming sufficient regularity of the involved potentials, expanding in orders of $\bdu$ yields that an appropriate space to study $\detot$ is given by $\ell^1$, see \ref{sec-first-var} below. 
    
    \par The goal of this work is thus to study when the minimization problem associated with $\etot$ is consistent with that associated with $\detot_{\bueq}$.  When it is, we can rigorously pass from the configuration space framework to the $\ell^2$-phonon framework, and the phonon Hamiltonian is defined as 
    \begin{equation}\label{def-phonon-Hamiltonian}
        H_{\mathrm{ph}}:=\sqrt{\delta^2\detot_{\bueq}({\bf 0})}:\ell^2\to\ell^2 \, .    
    \end{equation}
    As we will see, we cannot employ standard perturbative methods and, instead, will provide sufficient criteria to ensure or exclude phonon stability. These criteria need to be numerically verified. 
    
    \par We prove in Section \ref{sec-first-var} that
    \begin{equation}
        \delta^{\Wbper^{1,2}} \etot (\bueq) \, = \, 0 \quad \Rightarrow \quad \delta^{\ell^1}\detot_{\bueq}({\bf 0}) \, = \, 0 \, ,
    \end{equation}
    i.e., that a critical point of $\etot_{\bueq}$ under moir\'e-periodic perturbations remains a critical point of $\detot$ under $\ell^1$-perturbations. In particular, this means that $\delta^2\detot_{\bueq}({\bf 0}):(\ell^2)^2\to\R$ is well-defined. Equivalently, by duality, $\delta^2\detot_{\bueq}({\bf 0})\ell^2\to\ell^2$ is well-defined. We refer to the Section \ref{sec-notation} for the chosen notation, and to the respective sections for the precise statements. 
    \par The next question thus is whether stability of $\etot$ is inherited by $\detot_{\bueq}$. To answer this question, we start in Section \ref{sec-phonon-stab} by proving that, for certain choices of interlayer pair potentials
    \begin{equation}
        \delta^2\etot({\bf 0}) \, \geq \, \kappa \, > \,  0 \quad \wedge \quad \delta^2\big|_{\bdu={\bf 0}}\detot_{\bf 0}(\bdu) \, < \, 0 \, .
    \end{equation}
    This means that the undeformed union of lattices is stable under periodic perturbations, while it is not under $\ell^2$-perturbations. It matches the intuition that we expect the undeformed lattice not to be stable. The condition $\delta^2\etot({\bf 0}) \geq \kappa$ is typically referred to as \emph{stability of the lattice}. In the case of a single layer, this has been shown to be a necessary condition for stability of minimizers, see \cite[Appendix C]{ortnertheil2013}. Including the relaxed lattice positions, we prove that, in the case of pair interlayer potentials $v$,
    \begin{align}
        \MoveEqLeft\inf_{\|\bdv\|_{\ell^2}=1}\delta^2\big|_{\bdu={\bf 0}}\detot_{\bueq}(\bdu)[\bdv,\bdv] \,  \\ 
        \leq \, &\inf_{a\in\S^2}\int_{\R^2} \dx{x} \partial_a^2 v\big(x+\ueq_2(\MoireRLV A_1^{-1}x)-\ueq_1(\MoireRLV A_2^{-1}x)\big) \, .
    \end{align}
    Thus, a sufficient condition for instability is a negative r.h.s. This instability criterion stems from constructing an approximate minimizer for which the monolayer phonon energy vanishes while the interlayer phonon energy is minimized. 
    \par Conversely, a sufficient condition for phonon stability is given by
    \begin{align}\label{eq-stab-crit-intro}
			\sum_{R_1\in\cR_1} D^2v(x-R_1+\ueq_2(\MoireRLV A_1^{-1}x)-\ueq_1(R_1+\MoireRLV A_2^{-1}x) ) \, \geq \, 0 
    \end{align}
    for all $x\in\Gamma_1$, in the sense of quadratic forms. Here, $\cR_1$ refers to the lattice of layer 1, and $\Gamma_1$ is the corresponding unit cell. More precisely, we show in Section \ref{sec-phonon-stab} that this condition implies positive (semi-)definiteness of  $\delta^2\detot_{\bueq}$. While we explain the ideas only for interlayer pair potentials, we believe that one can generalize them to, more general, empirical many-body potentials which have become computationally more accessible and which promise to yield yield highly accurate calculations, thanks to the advances made in machine-learned empirical many-body, see, e.g., \cite{chen2022qm,ortner21,shapeev16}.
    \par Relating the present energy functional to an appropriate Allen-Cahn functional, we argue in Section \ref{sec-asymp-analysis} that, for a purely twisted bilayer system for which the layer coupling is mediated through a pair potential $v$, at (twist angles)$\to0$, the Allen-Cahn minimizers converges to the potential wells of the misfit energy $\Phi_0(x):=\sum_{n\in\Z^2}\frac{v(A(x-n))}{\det A}$. For the Allen-Cahn theory to be applicable, we need to require that  at the potential wells $x_{\mathrm{AB}}$, $x_{\mathrm{BA}}$, we have that
    \begin{align}\label{eq-AC-stab-intro}
        D^2\Phi_0(x_{\mathrm{BA}/\mathrm{AB}}) \, > \, 0 \, .
    \end{align}
    This, in turn, means that we need to establish strict inequality in \eqref{eq-stab-crit-intro}. We establish \eqref{eq-AC-stab-intro} for 2D pair potentials for graphene, at the equilibrium distance $d_{\mathrm{BG}}=3.35\mbox{\ \AA}$ between the layers. Our choices of pair potentials are Lennard-Jones, Morse and Kolmogorov-Crespi potentials \cite{KolmogorovCrespi}, see Section \ref{sec-model-eval}. Consequently, in the limit of zero twist angle and for sufficiently small lattice constants, in the sense of a continuum limit, the phonon Hamiltonian $H_{\mathrm{ph}}\geq0$, see \eqref{def-phonon-Hamiltonian}, is well-defined.

	\section{Preliminaries and notation\label{sec-notation}}
	
	\subsection{Lattices}
	
	For $j=1,2$, let $A_j\in\GLn_2(\R)$ be the matrix with columns consisting of the primitive translation vectors associated with layer $j$,
 
	with layer unit cells $\Gamma_j:=A_j[0,1)^2$. We define the direct lattices and their truncations
	\begin{align}\label{eq-primitive-lattice}
		\cR_j \, := \, A_j\Z^2 \, , \quad  \cR_j(N):=A_j(\Z^2\cap[-N,N]^2).
	\end{align}
	Moreover, we define the reciprocal primitive translation vectors and the associated reciprocal lattice
	\begin{equation} 
		B_j \, := \, 2\pi A_j^{-T} \, , \quad \cR_j^* \, := \, B_j\Z^2 \, .
	\end{equation}
	Next, we define the \emph{primitive moir\'e translation vectors} and the corresponding reciprocal vectors 
	\begin{equation}
		\MoireRLV \, := \, (A_1^{-1}-A_2^{-1})^{-1} \, , \quad \MoireMLV \, := \, 2\pi \MoireRLV^{-T} \, = \, B_1 \, - \, B_2 \, .
	\end{equation}
	Then the corresponding direct and reciprocal lattices are given by
	\begin{align} \label{def-moire-lattices}
		\MoireRL \, := \, \MoireRLV \Z^2 , \quad \MoireML \, := \, \MoireMLV \Z^2 \, ,
	\end{align}
	and they have the unit cells
	\begin{align} \label{def-moirecell}
		\MoireCell \, := \, \MoireRLV[0,1)^2 \, , \quad \MoireBZ \, := \, \MoireMLV\left[-1/2,1/2\right)^2 \, .
	\end{align}
	
	\begin{assumption}\label{ass-incomm}
		$(\cR_1,\cR_2)$ is \emph{incommensurate}, i.e., 
		\begin{align} \label{eq-incomm}
			G_1 \, + \, G_2 \, = \, 0 \mbox{\ \emph{for}\ } G_j\in \cR_j^* \quad \mbox{\emph{iff}} \quad G_1=G_2=0 \, .
		\end{align}
	\end{assumption}
	
	\paragraph{Local disregistry} An important concept that allows us to analyze the quasiperiodicity of twisted bi- and multilayer structures is the following notion of a local environment, which we will refer to as \emph{(local) disregistry}. For $x\in\R^2$, define 
	\begin{equation} \label{def-mfrac}
		\mfrac{x} \, := \, \sum_{R_\cM\in\MoireRL} (x-R_\cM)\mathds{1}_{\MoireCell}(x-R_\cM)
	\end{equation}
	and
	\begin{equation}\label{def-mfloor}
		\mfloor{x} \, := \, x \, - \, \mfrac{x} \, .
	\end{equation}
	Observe that
	\begin{equation}
		\mathds{1}_{\MoireCell}(x-R_\cM)=0 \quad \Leftrightarrow  \quad x\in \MoireCell+R_\cM \, .
	\end{equation}
	Analogously, we define the decomposition w.r.t. the lattices $\cR_j$
	\begin{equation}\label{def-Lfrac-Lfloor}
		x \, = \, \Lfloor{x}{j} \, + \, \Lfrac{x}{j}
	\end{equation}
	with $\Lfloor{x}{j}\in\cR_j$, $\Lfrac{x}{j}\in \Gamma_j$. We refer to $\Lfrac{x}{k}$, $k\in\{1,2,\cM\}$ as the (local disregistry) w.r.t. layer $k$, if $k=1,2$, and w.r.t. the moir\'e lattice if $k=\cM$. 
	
	\par In order to connect these different notions of disregistry, we also introduce the \emph{disregistry matrices}
	\begin{align}\label{def-disregj}
		\disregj \, := \, I - A_{3-j}A_j^{-1}  \, .
	\end{align}

	\par We adopt \cite[Lemma 3.1]{hott2023incommensurate}. 
	
	\begin{lemma}\label{lem-disregj-calc} The disregistry matrices have the following properties.
		\begin{enumerate}[label=\textnormal{(\arabic*)}]
			\item $\disregj \, = \, (-1)^jA_{3-j}\MoireRLV^{-1}$, \label{itm-disregj}
            \item $\disregj: \MoireCell\to(-1)^j\Gamma_{3-j}$ is a bijection,
			\item $\disregj: \MoireRL\to\cR_{3-j}$ is an isomorphism,
			\item $\disregj^T:\cR_{3-j}^*\to\MoireML$ is an isomporhism,
			\item\label{itm-disregj-transf} $\Lfrac{R_j}{3-j}=\disregj\mfrac{R_j}+A_2\begin{pmatrix}
				1\\1
			\end{pmatrix}\delta_{j,1}$ for all $R_j\in\cR_j$.
		\end{enumerate}
	\end{lemma}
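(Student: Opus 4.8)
The plan is to deduce everything from the closed form~\ref{itm-disregj}, together with the identities $\MoireRLV^{-1}=A_1^{-1}-A_2^{-1}$, $\MoireMLV=2\pi\MoireRLV^{-T}=B_1-B_2$ and $B_j=2\pi A_j^{-T}$ recorded above. For~\ref{itm-disregj} itself I would factor $A_{3-j}$ out of the definition, $\disregj=I-A_{3-j}A_j^{-1}=A_{3-j}\big(A_{3-j}^{-1}-A_j^{-1}\big)$, and observe that $A_{3-j}^{-1}-A_j^{-1}=(-1)^j\big(A_1^{-1}-A_2^{-1}\big)=(-1)^j\MoireRLV^{-1}$; this one-line computation covers $j=1$ and $j=2$ simultaneously.

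Granting~\ref{itm-disregj}, parts (2)--(4) only record how the invertible linear map $\disregj=(-1)^jA_{3-j}\MoireRLV^{-1}$ transports the sets $\MoireRLV[0,1)^2$, $\MoireRLV\Z^2$ and $2\pi A_{3-j}^{-T}\Z^2$. Indeed $\disregj\MoireCell=(-1)^jA_{3-j}\MoireRLV^{-1}\MoireRLV[0,1)^2=(-1)^jA_{3-j}[0,1)^2=(-1)^j\Gamma_{3-j}$, which is part (2); the same computation with $[0,1)^2$ replaced by $\Z^2$, using $-\Z^2=\Z^2$, gives $\disregj\MoireRL=(-1)^jA_{3-j}\Z^2=\cR_{3-j}$ as abelian groups, which is part (3); and transposing~\ref{itm-disregj} yields $\disregj^T=(-1)^j\MoireRLV^{-T}A_{3-j}^T$, so that $\disregj^T\cR_{3-j}^*=(-1)^j\MoireRLV^{-T}A_{3-j}^T\cdot 2\pi A_{3-j}^{-T}\Z^2=(-1)^j\,2\pi\MoireRLV^{-T}\Z^2=\MoireML$, which is part (4). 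None of this uses Assumption~\ref{ass-incomm}.

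The only part with genuine content is~\ref{itm-disregj-transf}. I would split $R_j=\mfloor{R_j}+\mfrac{R_j}$ with $\mfloor{R_j}\in\MoireRL$, $\mfrac{R_j}\in\MoireCell$, and apply $\disregj$: by part (3), $\disregj\mfloor{R_j}\in\cR_{3-j}$, while $\disregj R_j=R_j-A_{3-j}A_j^{-1}R_j$ with $A_{3-j}A_j^{-1}R_j\in\cR_{3-j}$ since $A_j^{-1}R_j\in\Z^2$. Hence $\disregj\mfrac{R_j}=\disregj R_j-\disregj\mfloor{R_j}\equiv R_j\equiv\Lfrac{R_j}{3-j}\pmod{\cR_{3-j}}$, so the two sides of~\ref{itm-disregj-transf} differ by an element of $\cR_{3-j}$, and it remains to see they lie in a common fundamental cell. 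By part (2), $\disregj\mfrac{R_j}\in(-1)^j\Gamma_{3-j}$: for $j=2$ this is exactly $\Gamma_1$, so $\disregj\mfrac{R_2}$ and $\Lfrac{R_2}{1}$ are elements of $\Gamma_1$ that are congruent mod $\cR_1$, hence equal, which matches $\delta_{2,1}=0$; for $j=1$, writing $\disregj\mfrac{R_1}=-A_2 t$ with $t\in[0,1)^2$, the shifted point $\disregj\mfrac{R_1}+A_2\begin{pmatrix}1\\1\end{pmatrix}=A_2\big(\begin{pmatrix}1\\1\end{pmatrix}-t\big)$ lies in $A_2(0,1]^2$, which coincides with $\Gamma_2=A_2[0,1)^2$ away from the cell boundary, so it equals $\Lfrac{R_1}{2}$ whenever $\mfrac{R_1}$ lies in the interior of $\MoireCell$.

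I expect this last representative bookkeeping to be the only real obstacle: attributing the shift $A_2\begin{pmatrix}1\\1\end{pmatrix}$ together with the Kronecker factor $\delta_{j,1}$ to the orientation reversal of $\disregj$ at $j=1$, and carefully reconciling the half-open cell boundaries of $\MoireCell$, $\Gamma_1$ and $\Gamma_2$, is the one delicate point; parts (1)--(4) are routine matrix algebra.
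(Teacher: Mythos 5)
The paper itself offers no proof of this lemma --- it is imported verbatim from \cite[Lemma 3.1]{hott2023incommensurate} --- so your argument can only be judged on its own merits, and on those merits it is essentially right. For (1)--(4), the factorization $\disregj=A_{3-j}\big(A_{3-j}^{-1}-A_j^{-1}\big)=(-1)^jA_{3-j}\MoireRLV^{-1}$ and the resulting transport of $\MoireRLV[0,1)^2$, $\MoireRLV\Z^2$ and $2\pi A_{3-j}^{-T}\Z^2$ are exactly the intended routine computations, and none of them needs Assumption \ref{ass-incomm}, as you say. For (5), the mechanism you use --- show both sides are congruent modulo $\cR_{3-j}$ (using $A_{3-j}A_j^{-1}R_j\in\cR_{3-j}$ and item (3)), then pin down representatives via item (2) --- is the right one, and the $j=2$ case is complete since $(-1)^2\Gamma_1=\Gamma_1$ is genuinely a fundamental cell.

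The boundary caveat you flag for $j=1$ is real, but it is a defect of the statement as transcribed with these half-open conventions, not of your proof. Concretely, take $R_1=0$: then $\mfrac{0}=0$ and $\Lfrac{0}{2}=0$, while the right-hand side equals $D_{1\to 2}\cdot 0+A_2\begin{pmatrix}1\\1\end{pmatrix}=A_2\begin{pmatrix}1\\1\end{pmatrix}\neq 0$. More generally, writing $\mfrac{R_1}=\MoireRLV s$ with $s\in[0,1)^2$, the right-hand side is $A_2\big(\begin{pmatrix}1\\1\end{pmatrix}-s\big)\in A_2(0,1]^2$, which leaves $\Gamma_2=A_2[0,1)^2$ precisely when some component of $s$ vanishes; on that exceptional set the two sides differ by $A_2e_i\in\cR_2$ and the identity as literally stated fails. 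So the identity should be read modulo $\cR_2$ (equivalently, with the cell conventions adjusted on the boundary), and your restriction to $\mfrac{R_1}$ with no vanishing $\MoireRLV$-coordinate is the correct precise version --- under incommensurability this excludes at most a thin exceptional subset of $\cR_1$ (always containing $R_1=0$). With that reading, your proof is complete; since the displayed use of item (5) in the paper is only as motivation for the ergodic-theorem formulation, the boundary set is harmless there, but it would be worth recording the corrected statement if you rely on (5) pointwise.
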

	Notice that \ref{itm-disregj-transf} provides a linear transformation between the different notions of disregistry. However, it cannot be extended to general $x\in\R^2$, as we have previously discussed in \cite{hott2023incommensurate}. Instead, the proper generalization of \ref{itm-disregj-transf} is given by an appropriate ergodic theorem which we will state below.
	
	\paragraph{Sublattice degrees of freedom}
	
	In addition to the primitive lattice vectors introduced above, we also now introduce \emph{basis vectors}. These allows us to include various degrees of freedom in order to describe general multilattices, e.g., the honeycomb structure of graphene, or to include higher orbitals.
	\par In particular, for $j=1,2$, let $\cA_j$ be finite sets, and to each $\alpha_j\in\cA_j$ assign a shift vector $\tau_j^{(\alpha_j)}\in\R^2\times\{0\}$. In addition, let $\gamma_j\in\Gamma_j$ be the position of the origin of layer $j$. We identify $\gamma_j$ with its embedding $(\gamma_j,0)$ into $\R^3$.
	\par Analogously to \eqref{eq-primitive-lattice}, we define
	\begin{align}
		\Omega_j \, := \, \cR_j\times\cA_j \, , \quad \Omega_j(N) \, := \, \cR_j(\lfloor |\Gamma_{3-j}|^{\frac12}N\rfloor)\times\cA_j \, .
	\end{align}
	In addition, denoting by $\sqcup$ the disjoint union, we also abbreviate
	\begin{equation}
		\Omega \, := \, \Omega_1\sqcup \Omega_2 \, , \quad \Omega(N) \, := \, \Omega_1(N)\sqcup \Omega_2(N) \, .
	\end{equation}
	For convenience, we also define 
	\begin{equation}
		\omega_j+x \, := \, (R_j+x,\alpha_j) \quad \forall \omega_j=(R_j,\alpha_j)\in\Omega_j, \, \forall x\in\R^2 \, .
	\end{equation}
	To define the \emph{relaxed} lattice positions, let $\Cper(\MoireCell;\R^3)$ denote the space of continuous periodic $\R^2$-valued functions defined on $\MoireCell$, and let $u_j\in \Cper(\MoireCell;\R^3)$. Then we assume that the total lattice positions (in Lagrangian coordinates) are given by
	\begin{equation}
		Y_j(\omega_j+\gamma_j) \, = \, \gamma_j \, + \, R_j \, + \, \tau_j^{(\alpha_j)} \, + \, u_j(\omega_j+\gamma_j) \, .
	\end{equation}
	We extend these to all $x\in\R^2$ by
	\begin{equation}
		Y_j(x,\alpha_j) \, = \, x \, + \, \tau_j^{(\alpha_j)} \, + \, u_j(x,\alpha_j) \, .
	\end{equation}
	
	\paragraph{Multivector notation} We abbreviate
	\begin{equation}
		\bomeg_j^{\ell} \, := \, (\omega_j^{(1)},\ldots,\omega_j^{(\ell)}), \quad \bR_j^{\ell} \, := \, (R_j^{(1)},\ldots,R_j^{(\ell)})    
	\end{equation}
	where $\omega_j^{(n)}=(R_j^{(n)},\alpha_j^{(n)})\in\Omega_j$. Throughout this work, we also abbreviate
	\begin{align}
		\bu \, = \, (u_j)_{j=1}^2 , \quad \bdu \, = \, (\du_j)_{j=1}^2 , \quad \bgam=(\gamma_j)_{j=1}^2, \quad \balph=(\alpha_j)_{j=1}^2 \, .
	\end{align}

	\subsection{Mechanical energy}

    We now introduce the formal expressions for the energy density, as well as for the offset energy. In the next section, we will state sufficient conditions to ensure the existence of the respective functionals. We will also prove the convergence to the respective limit functionals.
 
	\par In the following and throughout this work, let $j\in\{1,2\}$ if not specified otherwise.
    
    \subsubsection{Monolayer potential}
	
	Let
	\begin{equation}
		\Vmono_{j,\alpha_j} \, : \, (\R^3)^{\Omega_j} \to\R 
	\end{equation}
	with $\alpha_j\in\cA_j$ be given. In addition, we abbreviate the \emph{(relative) finite difference stencils} 
    \begin{align}
        \Delta_j[u_j](x,\alpha_j) \, := \, \big(u_j(\omega_j+x) - u_j(x,\alpha_j)\big)_{\omega_j\in\Omega_j} \, , \label{def-fin-diff-stenc-mono}
    \end{align}
    For $\gamma_j\in\Gamma_j$ and $N\in\N$, we then introduce the monolayer energy density
    \begin{align}\label{def-monolayer}
		\begin{split}
			\mono_{j,N,\gamma_j}(u_j) \, :=\, \frac{|\cA_j|}{|\Gamma_j||\Omega_j(N)|}\sum_{\omega_j\in \Omega_j(N)} \Vmono_{j,\alpha_j}\big(\Delta_j[u_j](\omega_j+\gamma_j)\big) \, ,
		\end{split}
	\end{align}
	where 
	\begin{equation}
		|\bigcup_{\omega_j\in\Omega_j(N)}(\Gamma_j+R_j+\tau_j^{(\alpha_j)})| \, = \, |\Gamma_1||\Gamma_2|(2N+1)^2+O_{N\to\infty}(N) \, = \, \frac{|\Gamma_j||\Omega_j(N)|}{|\cA_j|}+O_{N\to\infty}(N)
	\end{equation}
	is the area of the corresponding truncated region covered by the nuclei in layer $j$. 
    \par We formally also define  the limit energy density
    \begin{align}\label{def-monoj}
		\mono_j(u_j) \, := \, \frac1{|\Gamma_j|}\sum_{\alpha_j\in\cA_j} \mavint \dx{x}  \Vmono_{j,\alpha_j}\big(\Delta_j[u_j](x,\alpha_j)\big) \, ,
	\end{align}
 which we will prove to coincide with $\lim_{N\to\infty}\mono_{j,N,\gamma_j}(u_j)$.
    In this context, it is useful to also abbreviate the site potential density
    \begin{align}
        \sitepotmono_{j}[u_j]: \R^2\to\R, \, x\mapsto \frac1{|\Gamma_j|}\sum_{\alpha_j\in\cA_j} \Vmono_{j,\alpha_j}\big(\Delta_j[u_j](x,\alpha_j)\big) \, , \label{def-sitepotmono}
    \end{align}
    which, in turn, allows us to write
    \begin{align}
        \mono_j(u_j) \, = \, \mavint \dx{x}\sitepotmono_{j}[u_j](x) \, .
    \end{align}
    Similarly, we define the offset site potential
    \begin{align}
        \MoveEqLeft\dsitepotmono_{j}[\du_j;u_j] \, := \, |\Gamma_j|\big(\sitepotmono_{j}[u_j+\du_j]-\sitepotmono_{j}[u_j]\big)\\
        &= \, \sum_{\alpha_j\in\cA_j} \Big(\Vmono_{j,\alpha_j}\big(\Delta_j[u_j+\du_j](\cdot,\alpha_j)\big)-\Vmono_{j,\alpha_j}\big(\Delta_j[u_j](\cdot,\alpha_j)\big)\Big)\, , \label{def-sitepotmono-offset}
    \end{align}
    and the \emph{offset energy}
    \begin{align}
        \demono_{j;u_j}(\du_j) \, := \, \sum_{R_j\in\cR_j}\dsitepotmono_j[\du_j;u_j](R_j) \, .
    \end{align}
	
	\subsubsection{Interlayer coupling}
	\begin{definition}\label{def-translation-inv}
		A function 
		\begin{equation}
			V \, : \, (\R^3)^{\Omega_j} \to\R 
		\end{equation} 
		is \emph{translation-invariant} iff for any $(a_{\omega_j})_{\omega_j\in\Omega_j}\in (\R^3)^{\Omega_j}$ and any $R_j\in \cR_j$, we have that
		\begin{equation} V\big((a_{\omega_j+R_j})_{\omega_j\in\Omega_j}\big) \, = \, V\big((a_{\omega_j})_{\omega_j\in\Omega_j}\big) \, . 
		\end{equation}
		In addition, $V$ is \emph{even} iff $V(x)=V(-x)$.
	\end{definition}
	
	Let
	\begin{equation}
		\Vinter_{j,\alpha_j} \, : \, (\R^3)^{\Omega_{3-j}} \to\R \, ,
	\end{equation}
	with $\alpha_j\in\cA_j$ be even and translation-invariant. As a special instance, let $v_{\balph}:\R^3\to\R$, with $\alpha_k\in\cA_k$, $k=1,2$, $\balph:=(\alpha_1,\alpha_2)$, denote an even pair potential, and consider
	\begin{align}\label{def-pair-potential}
		\begin{split}
			\MoveEqLeft \Vinter_{j,\alpha_j}\big((a_{\omega_{3-j}})_{\omega_{3-j}\in\Omega_{3-j}}\big)  \, = \, \frac12\sum_{\omega_{3-j}\in\Omega_{3-j}}v_{\balph}(a_{\omega_{3-j}}) \, .
		\end{split}
	\end{align}
	Such interatomic pair potentials $v_\balph$ have been constructed using DFT, e.g., in \cite{leven2016interlayer,Hod10,schmidt2015interatomic}.
	
	\par The fact that we are using the ordered index $\balph=(\alpha_1,\alpha_2)$ to label pair potentials, as opposed to, e.g., $(\alpha_j,\alpha_{3-j})$, is to indicate that pair interactions only depend on the pair of involved atoms, as opposed to the order of the pair. Whenever there is a $\alpha_j$ or $\alpha_{3-j}$ on the LHS of an equation, and $\alpha_1$ or $\alpha_2$ on the RHS, the corresponding equation is to be read individually for $j=1$ or $j=2$, to avoid confusion.

    We abbreviate the \emph{(total) finite difference stencils}
	\begin{align}
		\Delta_{j,3-j}[\bu](x,\alpha_j) \, &:= \, \big(Y_{3-j}(\omega_{3-j}+A_{3-j}A_j^{-1}x+\gamma_{3-j})-Y_j(x+\gamma_j,\alpha_j) \big)_{\omega_{3-j}\in\Omega_{3-j}} \, .\label{def-fin-diff-stenc-inter}
	\end{align} 
	With that, we define the truncated interlayer coupling energy density of the $j$-th layer	
	\begin{align} \label{def-interlayer}
		\begin{split}
			\inter_{j,N,\bgam}(\bu) \, := \, \frac{|\cA_j|}{|\Gamma_j||\Omega_j(N)|} \sum_{\omega_j\in\Omega_j(N)}\Vinter_{j,\alpha_j}\big(\Delta_{j,3-j}[\bu](\omega_j)\big) ,
		\end{split}
	\end{align}
    and its formal limit
    \begin{align}\label{def-interj}
		\inter_{j,\bgam}(\bu) \, := \, \frac1{|\Gamma_j|}\sum_{\alpha_j\in\cA_j} \mavint \dx{x} \Vinter_{j,\alpha_j}\big(\Delta_{j,3-j}[\bu](x,\alpha_j)\big) \, .
	\end{align}
    Again, we define the site potential density
    \begin{align}
		\sitepotinter_{j,\bgam}[\bu]: \R^2\to\R, \, x\mapsto & \frac1{|\Gamma_j|}\sum_{\alpha_j\in\cA_j} \Vinter_{j,\alpha_j}\big(\Delta_{j,3-j}[\bu](x,\alpha_j)\big) \, , \label{def-sitepotinter}
	\end{align}    
    and the offset site potential
    \begin{align}
		\MoveEqLeft\dsitepotinter_{j,\bgam}[\bdu;\bu] \, := \, |\Gamma_j|\big(\sitepotinter_{j,\bgam}[\bu+\bdu]-\sitepotinter_{j,\bgam}[\bu]\big)\\
        &= \, \sum_{\alpha_j\in\cA_j} \Big(\Vinter_{j,\alpha_j}\big(\Delta_{j,3-j}[\bu+\bdu](\cdot,\alpha_j)\big)-\Vinter_{j,\alpha_j}\big(\Delta_{j,3-j}[\bu](\cdot,\alpha_j)\big)\Big) \, .\label{def-sitepotinter-offset}
	\end{align}
    This allows us to define the interlayer offset energy
    \begin{align}
		\deinter_{j,\bgam;\bu}(\bdu) \, &:= \, \sum_{R_j\in\cR_j}\dsitepotinter_{j,\bgam}[\bdu;\bu](R_j) \, .\label{def-deinterj}
	\end{align}

    \subsubsection{Total energy}
    We abbreviate the truncated total energy density
    \begin{align}\label{def-etot-trunc}
        \etot_{N,\bgam}(\bu) \, := \, \sum_{j=1}^2\big(\mono_{j,N,\gamma_j}(u_j)+\inter_{j,N,\bgam}(\bu)\big) \, ,
    \end{align}
    the limit energy
    \begin{align}
		\etot_\bgam(\bu) \, := \, \sum_{j=1}^2\big(\mono_j(u_j)+\inter_{j,\bgam}(\bu)\big) \, .
	\end{align}
    as well as the total offset energy
    \begin{align}
        \detot_{\bgam;\bu}(\bdu) \, := \, \sum_{j=1}^2\big(\demono_{j;u_j}(\du_j)+\deinter_{j,\bgam;\bu}(\bdu)\big) \, .
    \end{align}
    Below, we will show the convergence of $\detot_{\bu}(\bdu)$ and of
    \begin{align}
        \lim_{N\to\infty}\etot_{N,\bgam}(\bu) \, = \, \etot_\bgam (\bu) 
    \end{align}
    for displacements $\bu$ and $\bdu$ in appropriate function spaces, and sufficiently regular potentials $\Vmono$, $\Vinter$.
    \par Below, if the explicit $\bgam$-dependence is not important, we will also omit it in the notation of $\etot$, $\detot$, etc. 
    
 \subsection{Assumptions and function spaces\label{sec-assumptions}}

    We largely follow \cite{ortnertheil2013} and adapt the assumptions to the present context. 
    
    \subsubsection{Displacements} 
    We start by defining the appropriate function spaces for the energy minimizers. Observe that $\mono_j(u_j)$ is invariant under the shift $u_j$, and that shifting $\bu\to\bu+(t_1,t_2)$ with $t_j\in\R^2$ in $\inter_j(\bu)$ amounts to redefining the origins $\gamma_j\to \Lfrac{\gamma_j+t_j}{3-j}$. Consequently, we assume that
	\begin{align}
		\mavint\dx{x} u_j(x) \, = \, 0 \, .
	\end{align}
	Let
	\begin{align}\label{def-Wbper}
		\MoveEqLeft\Wbper^{1,2}:=\Wbper^{1,2}(\MoireCell;(\R^2)^{\cA_1}\times(\R^2)^{\cA_2}) \, := \\
		&\quad \Big\{\bu\big|_{\MoireCell}\mid \bu\in \Lloc^2(\R^2;(\R^2)^{\cA_1}\times(\R^2)^{\cA_2}),\\
		&\qquad \qquad \quad \bu(x)=\bu(x+\MoireR) \quad \forall \MoireR\in\MoireRL, \, \mathrm{a.e.\ }x\in\R^2,\\
		&\qquad \qquad  \quad \bu\big|_{\MoireCell}\in W^{1,2}(\MoireCell;(\R^2)^{\cA_1}\times(\R^2)^{\cA_2}), \mavint\dx{x}\bu(x) \,= \, 0 \Big\} \, .
	\end{align}
    We endow $\Wbper^{k,p}$ with the $\dot{W}^{k,p}$-topology, i.e.,
    \begin{equation}
        \|\bu\|_{\Wbper^{k,p}} \, := \, \|\nabla^k \bu\|_p \, .
    \end{equation}
	Analogously, we define $\Wbper^{1,2}(\MoireCell;(\R^2)^{\cA_j})$, $j=1,2$, as well as $\Wbper^{n,p}$, etc. Next, let
	\begin{align}\label{def-min-space}
		\cD_\kappa \, :=& \, \Big\{\bu=(u_1,u_2)\in\Wbper^{1,2}(\MoireCell;(\R^2)^{\cA_1}\times(\R^2)^{\cA_2})\mid \forall j=1,2:\|\nabla u_j\|_\infty\leq \kappa,\\
		&\quad \sup_{R_j\in\cR_j}\|u_j(\cdot+R_j)-u_j\|_\infty,\sup_{R_j\in\cR_j}\|u_j(A_jA_{3-j}^{-1}\cdot+R_j)-u_{3-j}\|_\infty\leq\kappa \Big\} \, .
	\end{align}

    \subsubsection{Interpolants} 
    
    When discussing the Cauchy-Born approximation, we will restrict ourselves to a single sublattice degree of freedom per layer to simplify the analysis. This will allow us to immediately adapt ideas from  \cite{ortnertheil2013}. To extent the ideas to multilattices, we refer to \cite{ortner2018CB-multilattice}. 
    \par Let $\zeta\in C^\infty_c(\R^2)$ be such that $\supp \zeta\cap \Z^2=\{0\}$, $\zeta(0)=1$, $\zeta\geq0$, $\zeta(x)=\zeta(-x)$, and satisfying 
		\begin{align}
			\sum_{n\in\Z^2}\zeta(x-n) (a+b\cdot n) \, = \, a+b\cdot x \quad \forall{a\in\R, \, b,x\in\R^2} \, .
		\end{align}
		In particular, we have that $\int_{\R^2}\dx{x}\zeta(x)=1$. We also abbreviate $\zeta_j:=\zeta(A_j^{-1}\cdot)$.
        \par Given a function $v_j:\cR_j\to \R$, we construct the \emph{first order interpolant} 
	\begin{equation}\label{def-fo-interpol}
		v_j(x) \, := \, \sum_{R_j\in\cR_j} \zeta_j(x-R_j)v_j(R_j) \, ,
	\end{equation}
    as well as the quasi-interpolant
    \begin{equation}\label{def-quasi-interpolant}
        \widetilde{v}_j \, := \, \frac1{|\Gamma_j|}\zeta_j*v_j \, .
    \end{equation}
    A useful property of convolving with $\zeta_j$ is that finite differences can be rewritten as convolution with a directional gradient 
        \begin{align}
            \tilde{v}_j (x+R_j)-\tilde{v}_j (x) \, &= \, \int_0^1\dx{t} \nabla_{R_j}\tilde{v}_j(x+tR_j) \\
            &= \, \frac1{|\Gamma_j|}\int_{\R^2}\dx{y}\int_0^1\dx{t}\zeta_j(x+tR_j-y)\nabla_{R_j}v(y) \, . \label{eq-localization}
        \end{align}
    
    \subsubsection{Potentials}

	\paragraph{Smoothness and decay}
	We define the spaces for finite difference stencils
	\begin{align}
		\frakD_j^\kappa  := \Big\{\ba=(a_{\omega_j})_{\omega_j\in\Omega_j}\mid a_{\omega_j}\in \R^2, \, \|\ba\|_{\frakD_j}:=\sup_{\substack{\omega_j\in\Omega_j,\\\omega_j\neq(0,\alpha_j)}}\frac{|a_{\omega_j}|}{|R_j|}\leq \kappa\Big\} \, .
	\end{align}
	
    Let $\bV^m=(\bv^{(1)},\ldots,\bv^{(m)})$ and $\bv^{(\ell)}=(v^{(\ell)}_1,v^{(\ell)}_2)$, where $v^{(\ell)}_j\in \Wbper(\MoireCell;(\R^2)^{\cA_j})$. We abbreviate the interlayer difference stencils
	\begin{align}
		(\delji_{\omega_{3-j}}\bv^{(\ell)})(x,\alpha_j) \, &:= \, v^{(\ell)}_{3-j}(\omega_{3-j}+\gamma_{3-j}+A_{3-j}A_j^{-1}x)-v^{(\ell)}_j(x+\gamma_j,\alpha_j) \, , \\
		(\delji_{\bomeg_{3-j}^m}\bV^m)(x,\alpha_j) \, &:= \, \big((\delji_{\omega_{3-j}^{(\ell)}}\bv^{(\ell)})(x,\alpha_j)\big)_{\ell=1}^m \, .
	\end{align}
	Similarly, denoting $\bV^m_j:=(v^{(1)}_j,\ldots,v^{(m)}_j)$, we introduce the monolayer difference stencils
	\begin{align}
		(\delj_{\omega_j'}v^{(\ell)}_j)(x,\alpha_j) \, &:= \, v^{(\ell)}_j(\omega_j'+x)-v^{(\ell)}_j(x,\alpha_j) \, ,\\
		(\delj_{\bomeg_j^m}\bV^m_j)(x,\alpha_j) \, &:= \, \big((\delj_{\omega_j^{(\ell)}}v^{(\ell)}_j)(x,\alpha_j)\big)_{\ell=1}^m \, .
	\end{align}
	In addition, we abbreviate the partial derivatives 
	\begin{align}
		\Vmono_{j,\alpha_j;\bomeg_j^m}(\ba) \, &:= \, (D_{\bomeg_j^m}\Vmono_{j,\alpha_j})(\ba) \, := \,  \frac{\partial^m\Vmono_{j,\alpha_j}(\ba)}{\partial a_{\omega_j^{(1)}}\ldots \partial a_{\omega_j^{(m)}}} \, ,\\
		\Vinter_{j,\alpha_j;\bomeg_j^m}(\ba) \, &:= \, (D_{\bomeg_j^m}\Vinter_{j,\alpha_j})(\ba) \, := \,  \frac{\partial^m\Vinter_{j,\alpha_j}(\ba)}{\partial a_{\omega_j^{(1)}}\ldots \partial a_{\omega_j^{(m)}}} \, .
	\end{align}
    With that, we introduce the intralayer site potential norm
	\begin{align}
		\mmono_{\ell,j}(\bomeg_j^{\ell}) \,&:= \, \Big(\prod_{n=1}^\ell |R_j^{(n)}|\Big)\sup_{\substack{\bh^{(\ell)}\in(\R^2)^\ell,|h_n^\ell|=1,\\\ba\in\frakD_j^\kappa}}\sum_{\alpha_j\in\cA_j}|\Vmono_{j,\alpha_j;\bomeg_j^{\ell}}(\ba)[\bh^{(\ell)}]| \, ,\\
		\minter_{\ell,3-j}(\bomeg_j^{\ell}) \,&:= \, 
		\sup_{\substack{\bh^{(\ell)}\in(\R^2)^\ell,|h_n^\ell|=1,\\\ba\in\frakD_j^\kappa,x\in\MoireCell}}	\sum_{\substack{\alpha_{3-j}\\\in\cA_{3-j}}} |\Vinter_{3-j,\alpha_{3-j};\bomeg_j^{\ell}}\big(\Delta_{3-j,j}[{\bm 0}](x,\alpha_{3-j})+\ba\big)[\bh^{(\ell)}]| \, .
	\end{align}
	
	\begin{assumption}[Decay assumption]\label{ass-decay}
		We assume that there exists $k\in\N^{\geq 5}$ such that for all $j\in\{1,2\}$ and $\ell\in\{1,\ldots, k\}$, we have that
		\begin{align}
			\Mmono_{\ell,j} := \sum_{\bomeg_j^{\ell}\in\Omega_j^{\ell}} \mmono_{\ell,j}(\bomeg_j^{\ell}) \, < \, \infty \, , \quad
			\Minter_{\ell,3-j} := \sum_{\bomeg_j^{\ell}\in\Omega_j^{\ell}} \minter_{\ell,3-j}(\bomeg_j^{\ell}) \, < \, \infty \, .
		\end{align}
		We also abbreviate $M_\ell^{(m)}:=\sum_{j=1}^2M_{\ell,j}^{(m)}$ for $m\in\{\mathrm{mono},\mathrm{inter}\}$. In addition, we assume that
		\begin{equation}
			\Vmono_{j,\alpha_j}({\bf 0})\, = \, 0 , \quad \sup_{x\in\MoireCell}\sup_{\ba\in\frakD_j^\kappa}|\Vinter_{3-j,\alpha_{3-j}}\big(\Delta_{3-j,j}[{\bm 0}](x,\alpha_{3-j})+\ba\big)|<\infty \, .
		\end{equation}
	\end{assumption}

    As we will see, this decay assumption suffices to establish the well-posedness of the relaxation problem in the \emph{Cauchy-Born approximation}. In order to ensure the well-posedness for the relaxation problem associated with the general atomistic energy density, we will impose the following condition.

    \par For $p\in[1,\infty]$, $2\leq \ell \leq m$, and $\bomeg_j^\ell\in\Omega_j^\ell$, where $\omega_j^{(n)}=(R_j^{(n)},\alpha_j^{(n)})\in\Omega_j$, we define 
	\begin{align}
		\MoveEqLeft\mmonoat_{\ell,p,j}(\bomeg_j^\ell) \, := \\
		& \mmono_{\ell,j}(\bomeg_j^\ell)\Big(\sum_{n=1}^\ell|R_j^{(n)}|\Big)^2\Big(\sum_{n=2}^\ell\big(|R_j^{(1)}\times R_j^{(n)}|+|R_j^{(1)}|+|R_j^{(n)}|\big)\Big)^{\frac1{(\ell-1)p}} \, .
	\end{align}
	
	\begin{assumption}[Sufficient decay for modelling error]
		We assume that there exists $k\in\N^{\geq 3}$ such that for all $j\in\{1,2\}$ and $\ell\in\{1,\ldots, k\}$, we have that
		\begin{align}
			\Mmonoat_{\ell,p,j} \, := \, \sum_{\bomeg_j^\ell\in\Omega_j^\ell}\mmonoat_{\ell,p,j}(\bomeg_j^\ell) \, < \, \infty \, ,
		\end{align}
		and we abbreviate $\Mmonoat_{\ell,p} :=\sum_{j=1}^2\Mmonoat_{\ell,p,j}$.
	\end{assumption}

    \paragraph{Stability} As part of our assumptions, we will impose a lattice stability condition. This condition suffices to show the existence of a locally unique minimizer of $\etot$.

    \begin{assumption}[Stacking-based stability of lattice]
		We assume that
		\begin{align}
			\nu_{\mathrm{Stack}} \, := \, \min_{j=1,2} \inf_{\|\nabla v\|_2=1} \delta^2\mono_j(0)[v,v] \, > \, 0 \, .
		\end{align}
	\end{assumption} 

    One could state an analogous assumption in terms $\delta^2\demono_j(0)$. Crucially, such a condition in addition to smallness of $\delta^2\deinter_j(0)$ does not suffice to prove existence of a locally unique minimizer, nor its stability.
 
	\subsection{Convergence of energy functionals}

    We now address the convergence of the total energy density $\etot(\bu)$ as well as of the total offset energy $\detot_{\bu}(\bdu)$. Crucially, we will see that $\bu$ will be uniformly non-decaying, think $\Lloc^1$, while $\bdu$ will be required to decay in $\ell^1(\Omega)$.
    
    \subsubsection{Energy density}

    We start by recalling some results established in \cite{hott2023incommensurate}. A useful tool to prove the existence of the thermodynamic limit $\etot$ is the following \emph{Birkhoff ergodic theorem}, see \cite{KerrLi-erg-th} and also \cite[Proposition 3.3]{hott2023incommensurate} for this formulation. 
	
	\begin{proposition}[Ergodic Theorem]\label{prop-double-erg-thm}
		Let $f\in \Lper^1(\MoireCell;\Lper^1(\Gamma_j))$. Then
		\begin{align}
			\begin{split}
				\MoveEqLeft \lim_{N\to\infty}\sum_{\substack{R_j\in\\\cR_j(N)}}\frac{f(R_j+\omega_\cM,R_j+\omega_{3-j})}{(2N+1)^2} = \mavint \dx{x} f\big(x+\omega_\cM,(I-A_{3-j}A_j^{-1})x+\omega_{3-j}\big) 
			\end{split}
		\end{align}
		converges for almost all $(\omega_\cM,\omega_{3-j})\in\MoireCell\times\Gamma_{3-j}$ and also in $L^1(\MoireCell\times\Gamma_{3-j})$.
	   \end{proposition}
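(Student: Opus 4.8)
The plan is to read the left-hand side as a Birkhoff average for a $\Z^2$-action by translations on a torus, and to compute its limit using the incommensurability Assumption~\ref{ass-incomm}. Put $Q_N:=\Z^2\cap[-N,N]^2$, so that $\cR_j(N)=A_jQ_N$ and $|Q_N|=(2N+1)^2$, and consider the compact abelian group $\mathbb{G}:=\MoireTorus\times\OtherTorus$ with its normalized Haar measure, on which $\Z^2$ acts by the Haar-preserving translations $T_n(\bar y,\bar z):=(\bar y+A_jn,\bar z+A_jn)$. Then the left-hand side is exactly $|Q_N|^{-1}\sum_{n\in Q_N}(f\circ T_n)(\omega_\cM,\omega_{3-j})$. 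Since the cubes $\{Q_N\}$ form a F{\o}lner sequence in $\Z^2$, the pointwise ergodic theorem for cube averages (cf.\ \cite{KerrLi-erg-th} and \cite[Prop.~3.3]{hott2023incommensurate}) gives convergence of these averages both in $L^1(\mathbb{G})$ and for a.e.\ point of $\mathbb{G}$ to the conditional expectation of $f$ onto the $\sigma$-algebra of $T$-invariant sets; since $\MoireCell\times\Gamma_{3-j}$ is a fundamental domain for $\MoireRL\times\cR_{3-j}$, this already yields the a.e.\ and $L^1$ statements on $\MoireCell\times\Gamma_{3-j}$, and it remains only to identify the limit.

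The crux, and the only place incommensurability is used, is the description of the orbit closure of $0$ in $\mathbb{G}$. First I would check that $\cR_j$ equidistributes in $\MoireTorus=\R^2/\MoireRL$: the nontrivial characters of $\MoireTorus$ are $y\mapsto e^{\mathrm iG\cdot y}$ with $G\in\MoireML\setminus\{0\}=(B_1-B_2)\Z^2\setminus\{0\}$, and such a $G$ cannot belong to $\cR_j^*$, since $G=(B_1-B_2)k\in\cR_j^*$ forces $B_{3-j}k\in\cR_1^*\cap\cR_2^*=\{0\}$ --- the last equality because any $G'\in\cR_1^*\cap\cR_2^*$ satisfies $G'+(-G')=0$ with $G'\in\cR_1^*$, $-G'\in\cR_2^*$, hence $G'=0$ by Assumption~\ref{ass-incomm} --- so that $A_j^TG\notin2\pi\Z^2$ and the Weyl sum $\sum_{n\in Q_N}e^{\mathrm i(A_j^TG)\cdot n}$ factors into two one-dimensional geometric sums of which at least one stays $O(1)$, whence the average over $Q_N$ is $O(1/N)\to0$. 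Next, by Lemma~\ref{lem-disregj-calc}\,(3) the linear map $x\mapsto(x,\disregj x)$ descends to an injective continuous homomorphism $\iota\colon\MoireTorus\to\mathbb{G}$, whose image $\mathbb{H}:=\iota(\MoireTorus)$ is a $2$-torus; $\mathbb{H}$ contains the orbit of $0$, because $\disregj R_j=R_j-A_{3-j}A_j^{-1}R_j\equiv R_j\pmod{\cR_{3-j}}$ for $R_j\in\cR_j$, and that orbit equals $\iota(\cR_j\bmod\MoireRL)$, which is dense in $\mathbb{H}$ by the equidistribution just shown. Hence the orbit closure is exactly $\mathbb{H}$, the induced $\Z^2$-rotation on $\mathbb{H}$ is minimal and therefore uniquely ergodic, the $T$-invariant sets are (mod null) unions of cosets of $\mathbb{H}$, and the conditional expectation above is, at a point of the coset $\mathbb{H}+(\omega_\cM,\omega_{3-j})$, the average of $f$ over that coset with respect to Haar measure on $\mathbb{H}$.

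It remains to rewrite that Haar average as the right-hand side. The parametrization $\iota$ sends normalized Lebesgue measure on $\MoireCell$ to Haar measure on $\mathbb{H}$ (both being translation-invariant probability measures on a $2$-torus), so for a.e.\ $(\omega_\cM,\omega_{3-j})$
\begin{equation*}
	\int_{\mathbb{H}+(\omega_\cM,\omega_{3-j})} f \;=\; \mavint\dx{x} f\big(x+\omega_\cM,\disregj x+\omega_{3-j}\big) \;=\; \mavint\dx{x} f\big(x+\omega_\cM,(I-A_{3-j}A_j^{-1})x+\omega_{3-j}\big),
\end{equation*}
which is the asserted limit. Two routine points remain: (i) all of the above may be proved first for $f$ a trigonometric polynomial and then extended by density in $\Lper^1$, using the uniform contraction $\big\||Q_N|^{-1}\sum_{n\in Q_N}f\circ T_n\big\|_{L^1(\mathbb{G})}\le\|f\|_{L^1(\mathbb{G})}$; and (ii) the iterated-integrability hypothesis $f\in\Lper^1(\MoireCell;\Lper^1(\Gamma_{3-j}))$ is precisely what makes $x\mapsto f(x+\omega_\cM,\disregj x+\omega_{3-j})$ integrable over $\MoireCell$ for a.e.\ $(\omega_\cM,\omega_{3-j})$, by Fubini on $\mathbb{G}$, so that both sides above are well-defined. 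I expect the orbit-closure computation of the second paragraph --- extracting minimality of the rotation on the correct $2$-torus $\mathbb{H}$ from Assumption~\ref{ass-incomm} --- to be the main obstacle; the remaining ingredients (the cube-average pointwise ergodic theorem, density, Fubini, the measure-preserving change of variables) are standard.
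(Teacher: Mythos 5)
Your argument is correct, and it is worth noting that the paper itself does not prove this proposition at all: it is imported verbatim from \cite{KerrLi-erg-th} and \cite[Proposition 3.3]{hott2023incommensurate}, so your write-up is in effect a reconstruction of the proof behind that citation, and it follows the expected route. The two substantive ingredients are both handled properly: (a) the pointwise and $L^1$ ergodic theorem for $\Z^2$-actions along the F{\o}lner cubes $Q_N$ (Tempelman/Lindenstrauss, which is what the Kerr--Li reference supplies for $L^1$ functions), and (b) the identification of the invariant factor, where you correctly reduce everything to $\MoireML\cap\cR_j^*=\{0\}$, derive this from Assumption \ref{ass-incomm}, and use Lemma \ref{lem-disregj-calc} to see that the orbit of the diagonal $\Z^2$-translation lies in, and is dense in, the embedded $2$-torus $\{(x,\disregj x)\}$, so that the coset average reproduces the stated right-hand side (a check on characters, using that $\disregj^T$ maps $\cR_{3-j}^*$ onto $\MoireML$, confirms the same formula). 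The only caveat is your closing remark (i): density of trigonometric polynomials together with the uniform $L^1$-contraction of the averages yields only the $L^1$ convergence; almost-everywhere convergence for general $f\in L^1$ requires the maximal inequality for cube averages. Since your main argument already invokes the $L^1$ pointwise ergodic theorem directly rather than relying on that density step for a.e.\ convergence, this is a presentational slip rather than a gap.
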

    
    In order to employ Proposition \ref{prop-double-erg-thm} to $(\etot_{N,\bgam})_N$, we also introduce the following generalized site potential
    \begin{align}\label{def-sitepotdouble}
\begin{split}
	\MoveEqLeft\sitepotdouble_j [\bu](x,y) \, :=\\
	&\frac1{|\Gamma_j|}\sum_{\alpha_j\in\cA_j} \Vinter_{j,\alpha_j}\Big(\big(Y_{3-j}(x-y+R_{3-j},\alpha_{3-j})-Y_j(x,\alpha_j)\big)_{\omega_{3-j}\in\Omega_{3-j}}\Big) \, .
    \end{split}
    \end{align}

    \begin{theorem}\label{thm-rough-conv}
	    Let $\bu\in\cD_{\kappa}$. Assume that $\sitepotmono_j [u_j]\in L^1(\MoireCell)$, and that $
		\sitepotdouble_j [\bu]\in L^1(\MoireCell\times\Gamma_{3-j})$ and for any $\bgam=(\gamma_1,\gamma_2)\in\Gamma_1\times\Gamma_2$ that $\sitepotinter_{j,\bgam} [\bu]\in L^1(\MoireCell)$, respectively. 
        We then have for almost all $\bgam\in\Gamma_1\times \Gamma_2$ that 
	\begin{align} 
        \lim_{N\to\infty}\etot_{N,\bgam}(\bu) \, = \, \etot_\bgam(\bu) \, .
	   \end{align}.
    \end{theorem}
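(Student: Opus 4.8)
The plan is to realise each of the four truncated pieces of $\etot_{N,\bgam}$ as an ergodic average over a lattice and to pass to the limit using Proposition~\ref{prop-double-erg-thm}. First I would record the elementary bookkeeping. Setting $M_j(N):=\lfloor|\Gamma_{3-j}|^{\frac12}N\rfloor$, one has $\Omega_j(N)=\cR_j(M_j(N))\times\cA_j$ and $|\Omega_j(N)|=|\cA_j|\,(2M_j(N)+1)^2$, so carrying out the $\alpha_j$-summation via \eqref{def-sitepotmono} (resp.\ \eqref{def-sitepotinter}) cancels $|\Gamma_j|$ against the prefactor $\tfrac{|\cA_j|}{|\Gamma_j|\,|\Omega_j(N)|}=\tfrac{1}{|\Gamma_j|(2M_j(N)+1)^2}$ and yields
\begin{align}
	\mono_{j,N,\gamma_j}(u_j) &= \frac{1}{(2M_j(N)+1)^2}\sum_{R_j\in\cR_j(M_j(N))}\sitepotmono_{j}[u_j](R_j+\gamma_j),\\
	\inter_{j,N,\bgam}(\bu) &= \frac{1}{(2M_j(N)+1)^2}\sum_{R_j\in\cR_j(M_j(N))}\sitepotinter_{j,\bgam}[\bu](R_j).
\end{align}
Since $M_j(N)\to\infty$ and $\etot_{N,\bgam}=\sum_{j=1}^2\bigl(\mono_{j,N,\gamma_j}+\inter_{j,N,\bgam}\bigr)$, it suffices to show that for each $j$ and a.e.\ $\bgam$ these two averages converge, as the truncation parameter tends to $\infty$, to $\mono_j(u_j)$ and $\inter_{j,\bgam}(\bu)$.

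Next I would exhibit the summands as samples of lattice-periodic functions. Because $u_j$ is $\MoireRL$-periodic, so is $x\mapsto\sitepotmono_{j}[u_j](x)$, which moreover lies in $L^1(\MoireCell)$ by hypothesis; this makes it an admissible integrand for the form of Proposition~\ref{prop-double-erg-thm} in which the integrand does not depend on the second, $\Gamma_{3-j}$, variable. For the interlayer term, a direct comparison of \eqref{def-fin-diff-stenc-inter}, \eqref{def-sitepotinter} and \eqref{def-sitepotdouble} — using $\disregj=I-A_{3-j}A_j^{-1}$ (cf.\ \eqref{def-disregj}), so that the difference of the arguments $\xi+\gamma_j$ and $\disregj\xi+\gamma_j-\gamma_{3-j}$ is $A_{3-j}A_j^{-1}\xi+\gamma_{3-j}$ — gives
\begin{equation}\label{eq-sp-identity}
	\sitepotdouble_j[\bu]\bigl(\xi+\gamma_j,\ \disregj\xi+\gamma_j-\gamma_{3-j}\bigr)\,=\,\sitepotinter_{j,\bgam}[\bu](\xi)\qquad(\xi\in\R^2).
\end{equation}
Moreover $\sitepotdouble_j[\bu]$ is $\MoireRL$-periodic in its first argument ($\MoireRL$-periodicity of $\bu$) and $\cR_{3-j}$-periodic in its second (translation invariance of $\Vinter_{j,\alpha_j}$, Definition~\ref{def-translation-inv}), so that the hypothesis $\sitepotdouble_j[\bu]\in L^1(\MoireCell\times\Gamma_{3-j})$ is meaningful. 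Since $A_{3-j}A_j^{-1}\cR_j=\cR_{3-j}$ forces $\disregj R_j\equiv R_j\pmod{\cR_{3-j}}$, evaluating \eqref{eq-sp-identity} at $\xi=R_j\in\cR_j$ and using this periodicity gives $\sitepotinter_{j,\bgam}[\bu](R_j)=\sitepotdouble_j[\bu]\bigl(R_j+\gamma_j,\ R_j+\gamma_j-\gamma_{3-j}\bigr)$ — exactly the shape $f(R_j+\omega_\cM,R_j+\omega_{3-j})$ of Proposition~\ref{prop-double-erg-thm}, with $\omega_\cM=\gamma_j\bmod\MoireRL$ and $\omega_{3-j}=\gamma_j-\gamma_{3-j}\bmod\cR_{3-j}$.

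With these preparations I would apply Proposition~\ref{prop-double-erg-thm} twice. For the monolayer term, with integrand $\sitepotmono_{j}[u_j]$ and $\omega_\cM=\gamma_j$, it gives for a.e.\ $\gamma_j$
\begin{equation}
	\mono_{j,N,\gamma_j}(u_j)\ \longrightarrow\ \mavint \dx{x}\,\sitepotmono_{j}[u_j](x+\gamma_j)\ =\ \mavint \dx{x}\,\sitepotmono_{j}[u_j](x)\ =\ \mono_j(u_j),
\end{equation}
the middle equality by $\MoireRL$-periodicity and translation invariance of the moir\'e average. For the interlayer term, with $f=\sitepotdouble_j[\bu]$ and the $(\omega_\cM,\omega_{3-j})$ above, it gives convergence for a.e.\ $(\omega_\cM,\omega_{3-j})\in\MoireCell\times\Gamma_{3-j}$; since $\gamma_j\mapsto\gamma_j\bmod\MoireRL$ and, for fixed $\gamma_j$, $\gamma_{3-j}\mapsto\gamma_j-\gamma_{3-j}\bmod\cR_{3-j}$ are piecewise translations, hence carry null sets to null sets and back, this is convergence for a.e.\ $(\gamma_j,\gamma_{3-j})\in\Gamma_j\times\Gamma_{3-j}$, with limit $\mavint \dx{x}\,\sitepotdouble_j[\bu]\bigl(x+\gamma_j,\ (I-A_{3-j}A_j^{-1})x+\gamma_j-\gamma_{3-j}\bigr)$, which by \eqref{eq-sp-identity} equals $\mavint \dx{x}\,\sitepotinter_{j,\bgam}[\bu](x)=\inter_{j,\bgam}(\bu)$. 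Each of these four convergences holds off a Lebesgue-null subset of $\Gamma_1\times\Gamma_2$, so off the (null) union of the four, $\etot_{N,\bgam}(\bu)\to\etot_\bgam(\bu)$, which is the assertion.

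The step I expect to be the main obstacle is the one just sketched: the identity \eqref{eq-sp-identity} and, with it, recognising the interlayer sum as a sample of the \emph{single} two-variable function $\sitepotdouble_j[\bu]$ at points of precisely the form $(R_j+\omega_\cM,R_j+\omega_{3-j})$ demanded by Proposition~\ref{prop-double-erg-thm}, and then matching the limiting integral with $\inter_{j,\bgam}(\bu)$ under the substitution $x\mapsto\bigl(x+\omega_\cM,(I-A_{3-j}A_j^{-1})x+\omega_{3-j}\bigr)$ built into that proposition. This requires threading the three notions of disregistry — moir\'e, layer $1$, layer $2$ — through the construction by way of $\disregj=I-A_{3-j}A_j^{-1}$, $A_{3-j}A_j^{-1}\cR_j=\cR_{3-j}$ and translation invariance of $\Vinter$ (cf.\ Lemma~\ref{lem-disregj-calc}); it is also what forces the almost-everywhere qualifier, since $\gamma_j\bmod\MoireRL$ sweeps only a positive-measure slice of $\MoireCell$ — but precisely the slice on which Proposition~\ref{prop-double-erg-thm} is applied, so nothing is lost.
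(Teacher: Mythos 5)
Your proposal is correct and follows exactly the route the paper intends: rewrite the truncated monolayer and interlayer sums as lattice averages of $\sitepotmono_j[u_j]$ and of the generalized site potential $\sitepotdouble_j[\bu]$ (which the paper introduces precisely for this purpose), and then invoke the Birkhoff-type ergodic theorem of Proposition \ref{prop-double-erg-thm}, with the identification $\sitepotinter_{j,\bgam}[\bu](R_j)=\sitepotdouble_j[\bu](R_j+\gamma_j,\,R_j+\gamma_j-\gamma_{3-j})$ supplying the required form $f(R_j+\omega_\cM,R_j+\omega_{3-j})$ and the a.e.\ statement in $\bgam$ following from the measure-preserving change of variables. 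This matches the paper's (recalled, essentially unwritten) argument adapted from \cite{hott2023incommensurate}, including the bookkeeping with $M_j(N)=\lfloor|\Gamma_{3-j}|^{1/2}N\rfloor$ and the periodicity/translation-invariance facts used to reduce to the fundamental domains.
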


    \begin{remark}
        In \cite{hott2023incommensurate}, it was also proved that, under a Diophantine condition on $A_1$, $A_2$, and requiring additional regularity of $\bu$, $\Vmono$, and $\Vinter$, one obtains a rate of convergence of $O(N^{-1})$. There, the result was only proved in the case $\bgam={\bf 0}$, but the ideas presented there can be extended to the case $\bgam\neq{\bf 0}$. The Diophantine condition presented in \cite{hott2023incommensurate} can be, more generally, stated as
        \begin{equation}
			\dist(A_1^TA_2^{-T}n,\Z^2)\geq \frac{K}{|n|^{2\sigma}} \quad \forall n\in\Z^2\setminus\{0\}\, .
		\end{equation}
        In particular, Diophantineness is a quantification of incommensurability.
    \end{remark}

    We are left with proving that the assumptions on $\Vmono$ and $\Vinter$ stated in Section \ref{sec-assumptions} imply the conditions of Theorem \ref{thm-rough-conv}.

    \par Our assumptions immediately imply that $\sitepotinter_j\in L^1(\MoireCell)$ and that $\sitepotdouble_j\in L^1(\MoireCell\times\Gamma_{3-j})$. In the case of $\sitepotmono_j$, observe that
    \begin{align}
        |\sitepotmono_j[u_j](x)-\underbrace{\sitepotmono_j[0]}_{=0}| \, &\lesssim \, \frac1{|\Gamma_j|}\sup_{\alpha_j}\sum_{\omega_j'\in\Omega_j'}\mmono_{1,j}(\omega_j')\frac{|u_j(\omega_j'+x)-u_j(x,\alpha_j)|}{|R_j'|}\\
        &\lesssim \, \frac{\kappa\Mmono_{1,j}}{|\Gamma_j|} \, ,
    \end{align}
	where we used the fact that $\bu\in\cD_\kappa$. In particular, we obtain that $\sitepotmono_j\in L^1(\MoireCell)$.

    \subsubsection{Offset energy}

    Expanding $\detot_{\bu}$ up to second order in $\bdu$, we obtain
	\begin{align}
		\MoveEqLeft\detot_{\bu}(\bdu)\, = \\ 
		& \sum_{j=1}^2\sum_{\omega_j\in \Omega_j}\Big[\sum_{\omega_j'}\Vmono_{j,\alpha_j;\omega_j'}\big(\Delta_j[\ueq_j](\omega_j)\big)\cdot \delj_{\omega_j'}\du_j(\omega_j)\\
		& \quad +\int_0^1\dx{t}(1-t)\hspace{-1ex}\sum_{\omega_j',\omega_j''}\hspace{-1ex}\Vmono_{j,\alpha_j;\omega_j',\omega_j''}\big(\Delta_j[\ueq_j+t\du_j](\omega_j)\big)\\
		&\qquad \big[\delj_{\omega_j'}\du_j(\omega_j)),\delj_{\omega_j''}\du_j(\omega_j)\big]  \\
		&\quad +\sum_{\omega_{3-j}}\Vinter_{j,\alpha_j;\omega_{3-j}}\big(\Delta_{j,3-j}[\bueq](\omega_j)\big)\cdot \delji_{\omega_{3-j}}\bdu(\omega_j)\\
		&\quad +\int_0^1\dx{t}(1-t) \hspace{-3ex}\sum_{\omega_{3-j},\omega_{3-j}'}\Vinter_{j,\alpha_j;\omega_{3-j},\omega_{3-j}'}\big(\Delta_{j,3-j}[\bueq+t\bdu](\omega_j)\big)\\
		& \qquad \big[\delji_{\omega_{3-j}}\bdu(\omega_j), \delji_{\omega_{3-j}'}\bdu(\omega_j)\big]\Big] \, .
	\end{align}
    A straight-forward calculation then yields that
    \begin{align}
        |\detot_{\bu}(\bdu)| \, \lesssim \, (\Mmono_1+\Minter_1)\|\bdu\|_{\ell^1} \, + \, (\Mmono_2+\Minter_2)\|\bdu\|_{\ell^2}^2 \, .
    \end{align}
        
	\subsection{First variation\label{sec-first-var}}
	
	Our goal is to derive the expressions for the first Frech\'et derivative of the energy density as well as of the offset energy. For now, we will assume sufficient regularity of $\etot$ and $\detot_{\bu}$, and assume the existence of a critical point $\bueq=(\ueq_1,\ueq_2)$ of $\etot$. In Section \ref{sec-energy-well-def}, we will show that our assumptions on $\Vmono$ and $\Vinter$ suffice to justify the following results.

	For simplicity, let us set $\gamma_1=\gamma_2=0$. It is useful to introduce the following notation: 
	
	\subsubsection{Thermodynamic limit energy density} For $\bv=(v_1,v_2)\in \Wbper$, we have that
	\begin{align}
		& \delta\etot(\bu)[\bv] \, = \\
		&\quad \sum_{j=1}^2\frac1{|\Gamma_j|}\sum_{\alpha_j\in\cA_j}\mavint \dx{x} \Big(\sum_{\omega_j'}\Vmono_{j,\alpha_j;\omega_j'}\big(\Delta_j[u_j](x,\alpha_j)\big)\cdot \big(v_j(\omega_j'+x)-v_j(x,\alpha_j)\big)\\
		&\quad + \sum_{\substack{\omega_{3-j}}}\Vinter_{j,\alpha_j;\omega_{3-j}}\big(\Delta_{j,3-j}[\bu](x,\alpha_j)\big)\cdot \big(v_{3-j}(\omega_{3-j}+A_{3-j}A_j^{-1}x)-v_j(x,\alpha_j)\big)\Big) \, .
	\end{align}

	We now sort terms according to the coefficients $v_j$
	\begin{align}
		&\delta\etot(\bu)[\bv] \, = \\
		&\quad \sum_{j=1}^2\frac1{|\Gamma_j||\MoireCell|}\sum_{\alpha_j\in\cA_j}\int\dx{x}\Big[ \sum_{\omega_j'}\Big(\Vmono_{j,\alpha_j;\omega_j'}\big(\Delta_j[u_j](x-R_j',\alpha_j')\big)\mathds{1}_{\MoireCell+R_j'}(x)\\
		&\quad -\Vmono_{j,\alpha_j;\omega_j'}\big(\Delta_j[u_j](x,\alpha_j)\big)\mathds{1}_{\MoireCell}(x)\Big)- \sum_{\omega_{3-j}}\Vinter_{j,\alpha_j;\omega_{3-j}} \big(\Delta_{j,3-j}[\bu](x,\alpha_j)\big)\mathds{1}_{\MoireCell}(x)\\
		&\quad  + \sum_{R_j,\alpha_{3-j}}\Vinter_{3-j,\alpha_{3-j};\omega_j} \big(\Delta_{3-j,j}[\bu](A_{3-j}A_j^{-1}(x-R_j),\alpha_{3-j})\big)\mathds{1}_{A_jA_{3-j}^{-1}\MoireCell+R_j}(x) \Big]\\
		&\quad  \cdot v_j(x,\alpha_j) \, .
	\end{align}
	Observe that $x\mapsto D_{R_j}\Vmono(\Delta_j[u_j](x))$ and $v_j$ are $\MoireRL$-periodic. That allows us to simplify the expression as
	\begin{align}
		&\delta\etot(\bu)[\bv] \, = \\
		&\quad \sum_{j=1}^2\frac1{|\Gamma_j|}\sum_{\alpha_j\in\cA_j}\mavint\dx{x}\Big[ \sum_{\omega_j'}\Big(\Vmono_{j,\alpha_j;\omega_j'}\big(\Delta_j[u_j](x-R_j',\alpha_j')\big)\\
		&\quad -\Vmono_{j,\alpha_j;\omega_j'}\big(\Delta_j[u_j](x,\alpha_j)\big)\Big)\\
		&\quad + \sum_{R_j,\MoireR,\alpha_{3-j}}\Vinter_{3-j,\alpha_{3-j};\omega_j} \big(\Delta_{3-j,j}[\bu](A_{3-j}A_j^{-1}(x-R_j-\MoireR),\alpha_{3-j})\big)\\
		&\quad \mathds{1}_{A_jA_{3-j}^{-1}\MoireCell+R_j+\MoireR}(x) - \sum_{\omega_{3-j}}\Vinter_{j,\alpha_j;\omega_{3-j}}\big(\Delta_{j,3-j}[\bu](x,\alpha_j)\big)\Big] \cdot v_j(x,\alpha_j) \, .
	\end{align}
	We would like to simplify the the third term. For that, observe that translation-invariance of $\Vinter$ implies
	\begin{align}\label{eq-DVinter-TI}
		\Vinter_{3-j;R_j}\big((a_{R_j'})_{R_j'}\big) \, = \, \Vinter_{3-j;0}\big((a_{R_j'+R_j})_{R_j'}\big) \, .
	\end{align}
	Ignoring the $\balph$-dependence for now, we thus find that
	\begin{align}
		&\sum_{R_j,\MoireR}\Vinter_{3-j;R_j}\big(\Delta_{3-j,j}[\bu](A_{3-j}A_j^{-1}(x-R_j-\MoireR))\big)\mathds{1}_{A_jA_{3-j}^{-1}\MoireCell+R_j+\MoireR}(x) \\
		&= \, \sum_{R_j,\MoireR}\Vinter_{3-j;R_j}\Big(\big(Y_{3-j}(A_{3-j}A_j^{-1}(x-R_j-\MoireR))\\
		& \qquad -Y_j(x-R_j-\MoireR+R_j')\big)_{R_j'}\Big)\mathds{1}_{A_jA_{3-j}^{-1}\MoireCell+R_j+\MoireR}(x)\\
		&= \, \sum_{R_j,\MoireR}\Vinter_{3-j;0}\Big(\big(Y_{3-j}(A_{3-j}A_j^{-1}(x-R_j-\MoireR))\\
		& \qquad -Y_j(x-\MoireR+R_j')\big)_{R_j'}\Big)\mathds{1}_{A_jA_{3-j}^{-1}\MoireCell+R_j+\MoireR}(x) \, . \label{eq-Vinter-massage-0}
	\end{align}
	Thanks to Lemma \ref{lem-disregj-calc}, we have that
	\begin{align*}
		-A_{3-j}A_j^{-1}\MoireR +\MoireR \, = \, \disregj \MoireR \, = \, (-1)^jA_{3-j}\MoireRLV^{-1}\MoireR \, .
	\end{align*}
	Thus we obtain
	\begin{align}\label{eq-moireR-shift}
		-A_{3-j}A_j^{-1}(R_j+\MoireR) +\MoireR \, = \, (-1)^{j+1}A_{3-j}\MoireRLV^{-1}\big((-1)^j\MoireRLV A_j^{-1}R_j-\MoireR \big) \, .
	\end{align}
	Due to $\MoireRL$-periodicity of $u_{3-j}$, we thus find that
	\begin{align}
		&u_{3-j}\big(A_{3-j}A_j^{-1}(x-R_j-\MoireR)\big) \, = \, u_{3-j}\big(A_{3-j}A_j^{-1}(x-R_j-\MoireR)+\MoireR\big) \\
		&\quad = \, u_{3-j}\big(A_{3-j}A_j^{-1}x+(-1)^{j+1}A_{3-j}\MoireRLV^{-1}\big((-1)^j\MoireRLV A_j^{-1}R_j-\MoireR \big)\big) \, . \label{eq-moireR-u-shift}
	\end{align}
	As a consequence of \eqref{eq-moireR-shift} and \eqref{eq-moireR-u-shift}, we shift $\MoireR\to \MoireR+(-1)^j\MoireRLV A_j^{-1}R_j$ in \eqref{eq-Vinter-massage-0} to obtain
	\begin{align}
		&\sum_{\MoireR,R_j}\Vinter_{3-j;0}\Big(\big(Y_{3-j}(A_{3-j}A_j^{-1}(x-\MoireR))-Y_j(x-\MoireR+R_j')\big)_{R_j'}\Big)\\
		& \qquad \mathds{1}_{A_jA_{3-j}^{-1}\MoireCell+(I+(-1)^j\MoireRLV A_j^{-1})R_j+\MoireR}(x) \, . \label{eq-DVinter-simplify-1}
	\end{align}
	Notice the only $R_j$-dependence is on the characteristic function. Moreover, we have that
	\begin{align*}
		(I+(-1)^j\MoireRLV A_j^{-1})R_j \, &= \, A_j(\MoireRLV^{-1}+(-1)^jA_j^{-1})\MoireRLV A_j^{-1}R_j \\
		&= \, (-1)^{j+1} A_jA_{3-j}^{-1} \MoireRLV A_j^{-1}R_j \, ,
	\end{align*}
	which, in turn, implies
	\begin{align}\label{eq-POU-1}
		\sum_{R_j}\mathds{1}_{A_jA_{3-j}^{-1}\MoireCell+(I+(-1)^j\MoireRLV A_j^{-1})R_j+\MoireR}(x) \, = \, 1 \, .
	\end{align}
	Combining \eqref{eq-DVinter-simplify-1} and \eqref{eq-POU-1}, we obtain
	\begin{align}\label{eq-DVinter-simplify-2}
		\sum_{\MoireR}\Vinter_{3-j;0}\Big(\big(Y_{3-j}(A_{3-j}A_j^{-1}(x-\MoireR))-Y_j(x-\MoireR+R_j')\big)_{R_j'}\Big) \, .
	\end{align}
	Using again $\MoireRL$-periodicity of $u_k$, we compute
	\begin{align*}
		&Y_{3-j}(A_{3-j}A_j^{-1}(x-\MoireR))-Y_j(x-\MoireR+R_j') \, = \, \\
		&\quad (A_{3-j}A_j^{-1}-I)x+(I-A_{3-j}A_j^{-1})\MoireR-R_j' \\
		&\qquad + \, u_{3-j}(A_{3-j}A_j^{-1}x+(I-A_{3-j}A_j^{-1})\MoireR) \, - \, u_j(x+R_j') \, .
	\end{align*}
	Abbreviating $R_j:=-A_jA_{3-j}^{-1}(I-A_{3-j}A_j^{-1})\MoireR\in\cR_j$, due to Lemma \ref{lem-disregj-calc}, we obtain
	\begin{align}
		\begin{split}
			&Y_{3-j}(A_{3-j}A_j^{-1}(x-\MoireR))-Y_j(x-\MoireR+R_j')  \\
			&\quad = \, (A_{3-j}A_j^{-1}-I)(x-R_j)-(R_j'+R_j)\\
			&\qquad\quad  + \, u_{3-j}(A_{3-j}A_j^{-1}(x-R_j)) \, - \, u_j(x-R_j+(R_j'+R_j)) \\
			&\quad = \, Y_{3-j}(A_{3-j}A_j^{-1}(x-R_j)) \, - \, Y_j(x-R_j+(R_j'+R_j)) \, . 
		\end{split}\label{eq-inter-diff-vec-simplify-1}
	\end{align}
	Employing \eqref{eq-DVinter-TI}, \eqref{eq-DVinter-simplify-2} and \eqref{eq-inter-diff-vec-simplify-1} yield 
	\begin{align}
		\sum_{R_j}\Vinter_{3-j;R_j}\Big(\Delta_{3-j,j}[\bu]\big(A_{3-j}A_j^{-1}(x-R_j)\big)\Big) \, .
	\end{align}
	Altogether, we have thus proved that
	\begin{align}
		&\delta\etot(\bu)[\bv] \, = \\
		&\quad \sum_{j=1}^2\frac1{|\Gamma_j|}\sum_{\alpha_j\in\cA_j}\mavint\dx{x}\Big[ \sum_{\omega_j'}\Big(\Vmono_{j,\alpha_j;\omega_j'}\big(\Delta_j[u_j](x-R_j',\alpha_j')\big)\\
		&\quad -\Vmono_{j,\alpha_j;\omega_j'}\big(\Delta_j[u_j](x,\alpha_j)\big)\Big)\\
		& \quad + \sum_{R_j,\alpha_{3-j}}\Vinter_{3-j,\alpha_{3-j};\omega_j}\big(\Delta_{3-j,j}[\bu]\big(A_{3-j}A_j^{-1}(x-R_j),\alpha_{3-j}\big)\big)\\
		&\quad  - \sum_{\omega_{3-j}}\Vinter_{j,\alpha_j;\omega_{3-j}}\big(\Delta_{j,3-j}[\bu](x,\alpha_j)\big)\Big] \cdot v_j(x,\alpha_j) \, .
	\end{align}
	Consequently, the Euler-Lagrange equation for the minimizer $\bueq$ is given by 
	\begin{align}\label{eq-ELE-MB-cont}
		\begin{split}
			& \sum_{\omega_j'}\Big(\Vmono_{j,\alpha_j;\omega_j'}\big(\Delta_j[\ueq_j](x-R_j',\alpha_j')\big)-\Vmono_{j,\alpha_j;\omega_j'}\big(\Delta_j[\ueq_j](x,\alpha_j)\big)\Big)\\
			&\quad + \sum_{R_j,\alpha_{3-j}}\Vinter_{3-j,\alpha_{3-j};\omega_j}\big(\Delta_{3-j,j}[\bueq]\big(A_{3-j}A_j^{-1}(x-R_j),\alpha_{3-j}\big)\big)\\
			& \quad - \sum_{\omega_{3-j}}\Vinter_{j,\alpha_j;\omega_{3-j}}\big(\Delta_{j,3-j}[\bueq](x,\alpha_j)\big) \, = \, 0 \, .
		\end{split}
	\end{align}
	
	\subsubsection{Offset energy}
	
	In order to study phonons, we now also look at the energy difference resulting from subtracting the groundstate total energy, as opposed to its energy density. 
	
    Then we define
	\begin{align}
		\MoveEqLeft\detot_{\bgam,N;\bueq}(\bdu)\, := \\ 
		& \sum_{j=1}^2\sum_{\omega_j\in \Omega_j(N)}\Big[\Big(\Vmono_{j,\alpha_j}\big(\Delta_j[\ueq_j+\du_j](\omega_j)\big)-\Vmono_{j,\alpha_j}\big(\Delta_j[\ueq_j](\omega_j)\big)\Big)\\
		& + \Big(\Vinter_{j,\alpha_j}\big(\Delta_{j,3-j}[\bueq+\bdu](\omega_j)\big)- \Vinter_{j,\alpha_j}\big(\Delta_{j,3-j}[\bueq](\omega_j)\big)\Big)\Big] \, .
	\end{align}

	For the limit $N\to\infty$ to exist, we need to impose integrability of $\bdu$. In particular, we consider $\bdu\in\ell^1(\Omega_1\times \Omega_2)$. We focus on the linear-in-$\bdu$ part and, as above, we reorder terms according to the coefficient $\du_j(\omega_j)$
	\begin{align}
		&\sum_{j=1}^2\sum_{\omega_j}\Big[\sum_{\omega_j'}\Big(\Vmono_{j,\alpha_j;\omega_j'}\big(\Delta_j[\ueq_j](R_j-R_j',\alpha_j)\big)-\Vmono_{j,\alpha_j;\omega_j'}\big(\Delta_j[\ueq_j](\omega_j)\big)\Big)\\
		&\quad + \sum_{\omega_{3-j}}\Vinter_{j,\alpha_j;\omega_{3-j}}\big(\Delta_{j,3-j}[\bueq](\omega_j)\big)\\
		& \quad -\sum_{\substack{R_j',\\\alpha_{3-j}}}\Vinter_{{3-j},\alpha_{3-j};(R_j',\alpha_j)}\big(\Delta_{3-j,j}[\bueq](A_{3-j}A_j^{-1}(R_j-R_j'),\alpha_{3-j})\big)\Big]\cdot\du_j(\omega_j) \, .
	\end{align}
	We recover the Euler-Lagrange equation \eqref{eq-ELE-MB-cont}, evaluated at the lattice positions $R_j$. In particular, the first variation $\delta\Delta_\infty E(0)=0$ vanishes, which means the solution $\bueq$ of \eqref{eq-ELE-MB-cont} is a critical point under $\ell^1$-perturbations.

    \subsection{Models\label{sec-models}}

    \subsubsection{Cauchy-Born approximation}
	
    A common further simplification to the Cauchy-Born approximation is the \emph{linear elasticity approximation}
	\begin{align}
		W_j(M) \, :=& \, M:C_j:M \, , \label{def-W}\\
		C_{j,kl mn} \, :=& \, \lambda_j \delta_{kl}\delta_{mn} \, + \, \mu_j(\delta_{kn}\delta_{lm}+\delta_{km}\delta_{ln}) \, , \label{def-C}
	\end{align}
	where $\lambda_j,\mu_j>0$ denote the \emph{Lam\'e parameters} of an individual layer. Then the monolayer energy density is given by
	\begin{align}
		\frac12\mavint\dx{x} \nabla u_j(x):C_j:\nabla u_j(x) \, .
	\end{align}

	In order to allow for out-of-plane relaxation, we need to extend the monolayer energy model beyond linear elasticity, e.g., to 
	\begin{align}
		\mavint \dx{x}\frac12\Big(\lambda_j \sum_{k=1}^2\vep_{k,k}^{(j)}(x)^2+2\mu_j \sum_{k,\ell=1}^2\vep_{k,\ell}^{(j)}(x)^2+\kappa_j |D^2u_{j,3}(x)|^2\Big) \, ,
	\end{align}
	where $\vep_{k,\ell}^{(j)}:=\frac12(\partial_k u_{j,\ell}+\partial_\ell u_{j,k}+\partial_ku_{j,3}\partial_\ell u_{j,3})$ is the strain tensor, $\lambda$, $\mu$ are the Lam\'e parameters as before, and $\kappa$ is the bending rigidity, see, e.g., \cite{peng2020strain}.

    \subsubsection{Pair potentials\label{sec-inter-models}}
    
    For the interlayer energy and in order to model the monolayer offset energy before taking a continuum limit, we may consider pair potential interactions. Common choices for such pair potentials include
	\begin{enumerate}
		\item \emph{Lennard-Jones potentials} $v_{\mathrm{LJ}}(X)=4\vep_0\big[\big(\frac{\sigma}{|X|}\big)^{12}-\big(\frac{\sigma}{|X|}\big)^6\big]$ for $X\in\R^3$, with equilibrium distance $2^{\frac16}\sigma$,
		\item \emph{Morse potentials} $v_{\mathrm{Morse}}(X)=E_0\big(e^{-2a(|X|-r_0)}-2e^{-a(|X|-r_0)}\big)$ for $X\in\R^3$, with equilibrium distance $r_0$,
		\item \emph{Kolmogorov-Crespi potentials} Introduced in \cite{KolmogorovCrespi,naik2019kolmogorov}, we define
		\begin{equation}
			v_{\mathrm{KC}}(X_j-X_k,n_j,n_k) \, := \, e^{-\lambda (r_{jk}-z_0)}\big(C+f(\rho_{jk})+f(\rho_{kj})\big)-A_0\Big(\frac{z_0}{r_{jk}}\Big)^6 
		\end{equation}
		for $X_j,X_k\in\R^3$, where $r_{jk}:=|X_j-X_k|$, $n_\ell$ denotes the normal vector to the $sp^2$-plane\footnote{In particular, this requires to solve the associated electronic problem as well, in order to determine the atomic orbitals.} in the vicinity of atom $\ell$, $\rho_{\ell m}^2:=r_{\ell m}^2-(n_\ell \cdot (X_\ell-X_m))^2$ and 
		\begin{equation*}
			f(\rho) \, := \, e^{-(\rho/\delta)^2}\sum_{n=0}^2C_{2n}(\rho/\delta)^{2n} \, .
		\end{equation*}
		For simplicity, one could choose $n_\ell=e_z$. In this case, the Kolmogorov potential simplifies to
		\begin{equation}
			v_{\mathrm{KC}}(r,\rho) \, = \, e^{-\lambda(r-z_0)}(C+2f(\rho))-A_0\Big(\frac{z_0}{r}\Big)^6 \, ,
		\end{equation}
		see, e.g., \cite{malena2023}. In case of bilayer graphene, the parameters are given by $(C,C_0,C_2,C_4)=(3.030, 15.71, 12.29, 4.933)\mbox{\ meV}$, $\delta=0.578\mbox{\ \AA}$, $\lambda = 3.629\mbox{\ \AA}^{-1}$, $A_0=10.238 \mbox{\ meV}$, $z_0 = 3.34\mbox{\ \AA}$.
		\item Combined potentials, e.g., $v(x,z)=v_{\mathrm{Morse}}(x)v_{\mathrm{LJ}}(z)$ for $x\in\R^2$, $z\in\R$. 
	\end{enumerate}
	
	\section{Well-definedness of energy functionals\label{sec-energy-well-def}}

    We now show that the previous calculations are meaningful by showing that the energy functionals are well-defined and sufficiently regular.

	\subsection{Atomistic model}
	
	The following statements can be proved analogously to \cite[Theorem 1]{ortnertheil2013}.
	
	\begin{lemma}\label{lem-mono-db}
		Let $j\in\{1,2\}$, $m\in\N$. Assume that $\bu=(u_1,u_2)\in\cK_\kappa$ and $v_\ell\in \Wbper^{1,2}(\MoireCell;(\R^2)^{\cA_j})$, $\ell=1,\ldots,m$. For any $1\leq p_1,\ldots,p_k\leq \infty$ such that $\sum_{\ell=1}^m\frac1{p_\ell}=1$, we have that
		\begin{align}
			|\delta^m\mono_j(u_j)[v_1,\ldots,v_m]| \, \lesssim \, \Mmono_{m,j} \prod_{\ell=1}^m\|\nabla v_\ell\|_{L^{p_\ell}(\MoireCell)} \, .
		\end{align}
	\end{lemma}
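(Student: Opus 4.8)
The plan is to differentiate $\mono_j$ directly $m$ times and estimate the resulting series term by term, exactly following the scheme of \cite[Theorem 1]{ortnertheil2013}. Since $u_j\mapsto\Delta_j[u_j](x,\alpha_j)$ is affine, repeated differentiation of $\mono_j(u_j)=\frac1{|\Gamma_j|}\sum_{\alpha_j}\mavint\dx{x}\,\Vmono_{j,\alpha_j}(\Delta_j[u_j](x,\alpha_j))$ in directions $v_1,\dots,v_m$ should yield, formally,
\[
\delta^m\mono_j(u_j)[v_1,\dots,v_m]=\frac1{|\Gamma_j|}\sum_{\alpha_j\in\cA_j}\mavint\dx{x}\sum_{\bomeg_j^m\in\Omega_j^m}\Vmono_{j,\alpha_j;\bomeg_j^m}\big(\Delta_j[u_j](x,\alpha_j)\big)\big[(\delj_{\omega_j^{(1)}}v_1)(x,\alpha_j),\dots,(\delj_{\omega_j^{(m)}}v_m)(x,\alpha_j)\big].
\]
To turn this into a rigorous identity — absolute convergence of the $\bomeg_j^m$-sum, interchange with $\mavint\dx{x}$, and existence of the Fréchet derivative — I would invoke the decay Assumption \ref{ass-decay} ($\Mmono_{m,j}<\infty$, $k\ge5$) together with the a priori estimates below, exactly as in \cite[Theorem 1]{ortnertheil2013}; concretely this can be done by first estimating the truncated functionals $\mono_{j,N,\gamma_j}$ uniformly in $N$ along difference quotients and then passing to the limit.

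The two pointwise inputs are as follows. First, because $\bu\in\cK_\kappa$, the finite-difference stencil $\Delta_j[u_j](x,\alpha_j)$ lies in $\frakD_j^\kappa$ for every $x$ and $\alpha_j$ — its long-range entries are controlled by $\sup_{R_j\in\cR_j}\|u_j(\cdot+R_j)-u_j\|_\infty\le\kappa$ and its short-range entries by $\|\nabla u_j\|_\infty\le\kappa$ — so the supremum in the definition of $\mmono_{m,j}(\bomeg_j^m)$ may be taken at $\ba=\Delta_j[u_j](x,\alpha_j)$, giving $\sum_{\alpha_j}|\Vmono_{j,\alpha_j;\bomeg_j^m}(\Delta_j[u_j](x,\alpha_j))[h_1,\dots,h_m]|\le\mmono_{m,j}(\bomeg_j^m)\,\big(\prod_{n}|R_j^{(n)}|\big)^{-1}\prod_{n}|h_n|$ for unit vectors $h_n$. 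Second, by the fundamental theorem of calculus along the segment from $x$ to $x+R_j^{(n)}$ (the identity underlying \eqref{eq-localization}), $|(\delj_{\omega_j^{(n)}}v_n)(x,\alpha_j)|\le|R_j^{(n)}|\int_0^1|\nabla v_n(x+tR_j^{(n)})|\,\dx{t}$, so the factors $|R_j^{(n)}|$ cancel against $\prod_n|R_j^{(n)}|^{-1}$. Combining these with Step 1 yields $|\delta^m\mono_j(u_j)[v_1,\dots,v_m]|\lesssim\sum_{\bomeg_j^m}\mmono_{m,j}(\bomeg_j^m)\,\mavint\dx{x}\prod_{n=1}^m\int_0^1|\nabla v_n(x+tR_j^{(n)})|\,\dx{t}$, and I would finish by Hölder's inequality on the probability space $(\MoireCell,\dx{x}/|\MoireCell|)$ with exponents $p_1,\dots,p_m$ (valid also for $p_\ell=\infty$), then Jensen in the $t$-variables, and then $\MoireRL$-periodicity of $\nabla v_n$ to discard the shift, $\mavint\dx{x}|\nabla v_n(x+tR_j^{(n)})|^{p_n}=\mavint\dx{x}|\nabla v_n(x)|^{p_n}$; since $\sum_\ell 1/p_\ell=1$ the $t$-integrals and cell volumes combine to give precisely $\prod_{n}\|\nabla v_n\|_{L^{p_n}(\MoireCell)}$, while $\sum_{\bomeg_j^m}\mmono_{m,j}(\bomeg_j^m)=\Mmono_{m,j}<\infty$.

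The estimate itself is essentially Hölder plus periodicity and should be routine. The main obstacle I anticipate is twofold: the careful bookkeeping with the sublattice labels $\alpha_j,\alpha_j'$ buried inside $\delj$ and $\Omega_j=\cR_j\times\cA_j$ (in particular checking that the stencil of an admissible $\bu$ really does land in the set $\frakD_j^\kappa$ on which $\mmono_{m,j}$ is defined), and the rigorous justification in Step 1 that $\mono_j$ is $m$-times Fréchet differentiable with the series above as its derivative. As in \cite{ortnertheil2013}, the latter is obtained by applying the same bounds uniformly along difference quotients and using dominated convergence, with the decay Assumption \ref{ass-decay} (requiring $k\ge5$) providing the room for the induction on $m$.
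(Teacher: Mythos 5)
Your proposal is correct and takes essentially the same route the paper intends: the paper gives no details and simply asserts the lemma ``can be proved analogously to [Ortner--Theil, Theorem 1]'', which is exactly your argument --- differentiate term by term, bound each term via $\mmono_{m,j}$ evaluated at the stencil $\Delta_j[u_j](x,\alpha_j)\in\frakD_j^\kappa$, convert finite differences to averaged gradients by the fundamental theorem of calculus, and conclude with H\"older and $\MoireRL$-periodicity, with Assumption \ref{ass-decay} justifying summation/integration interchanges and Fr\'echet differentiability. The one caveat, which you already flag, concerns cross-sublattice differences $v_j(x+R_j',\alpha_j')-v_j(x,\alpha_j)$ with $\alpha_j'\neq\alpha_j$, which are not controlled by $\nabla v_j$ alone; this is a feature of the multilattice formulation of the statement itself rather than a defect of your argument, and it disappears under the single-sublattice reduction the paper adopts elsewhere.
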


    For the next result, we recall the first-order interpolants defined in \eqref{def-fo-interpol}.

    \begin{lemma}
        \label{lem-mono-offset-db}
		Let $j\in\{1,2\}$, $m\in\N$. Assume that $\bu=(u_1,u_2)\in\cK_\kappa$ and that $\du,\dv_\ell\in \ell^1(\Omega_j)$, $\ell=1,\ldots,m$. For any $1\leq p_1,\ldots,p_k\leq \infty$ such that $\sum_{\ell=1}^m\frac1{p_\ell}=1$, we have that
		\begin{align}
			|\delta^m\demono_{j;u_j}(\du)[\dv_1,\ldots,\dv_m]| \, \lesssim \, \Mmono_{m,j} \prod_{\ell=1}^m\|\nabla\zeta\|_{p_\ell}\|\dv_\ell\|_{\ell^1(\cA_j;\ell^{p_\ell}(\R^2;\R^3))} \, .
		\end{align}
    \end{lemma}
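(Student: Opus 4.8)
\medskip
\noindent\emph{Proof idea.}
The plan is to mirror the proof of Lemma~\ref{lem-mono-db}, with two substitutions: the moir\'e average $\mavint\dx{x}$ is replaced by the lattice sum $\sum_{R_j\in\cR_j}$ defining $\demono_{j;u_j}$, and the control of the test functions by $\|\nabla v_\ell\|_{L^{p_\ell}}$ is replaced by an interpolation estimate for the $\ell^1(\Omega_j)$-sequences $\dv_\ell$ --- which is the origin of the factors $\|\nabla\zeta\|_{p_\ell}$. First I would differentiate: since the subtracted term $\Vmono_{j,\alpha_j}\big(\Delta_j[u_j](R_j,\alpha_j)\big)$ does not depend on $\du$, for $m\geq1$
\begin{align}
  \delta^m\demono_{j;u_j}(\du)[\dv_1,\ldots,\dv_m]
  \,&=\,\sum_{R_j\in\cR_j}\sum_{\alpha_j\in\cA_j}\sum_{\bomeg_j^m\in\Omega_j^m}
  \Vmono_{j,\alpha_j;\bomeg_j^m}\big(\Delta_j[u_j+\du_j](R_j,\alpha_j)\big)\\
  &\qquad\big[\delj_{\omega_j^{(1)}}\dv_1(R_j,\alpha_j),\ldots,\delj_{\omega_j^{(m)}}\dv_m(R_j,\alpha_j)\big] ,
\end{align}
and everything below bounds the same series with each $\dv_\ell$ replaced by $|\dv_\ell|$, so absolute convergence --- hence all rearrangements --- is automatic.

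The heart of the matter is to rewrite the discrete finite differences $\delj_{\omega_j^{(\ell)}}\dv_\ell(R_j,\alpha_j)=\dv_\ell(R_j+R_j^{(\ell)},\alpha_j^{(\ell)})-\dv_\ell(R_j,\alpha_j)$ so that they simultaneously (i) produce a length factor $|R_j^{(\ell)}|$ matching the weight $\prod_\ell|R_j^{(\ell)}|$ built into $\mmono_{m,j}$, and (ii) carry an $\ell^{p_\ell}$-structure surviving a H\"older splitting. To that end I would pass to the first-order interpolants \eqref{def-fo-interpol}, extending each $\dv_\ell(\cdot,\alpha)$ to $\R^2$ via $\sum_{S\in\cR_j}\zeta_j(\cdot-S)\dv_\ell(S,\alpha)$ (which agrees with the sequence on $\cR_j$ since $\supp\zeta\cap\Z^2=\{0\}$) and split off the intracell part,
\begin{align}
  \delj_{\omega_j^{(\ell)}}\dv_\ell(R_j,\alpha_j)
  \,&=\,\int_0^1 R_j^{(\ell)}\!\cdot\nabla\dv_\ell\big(R_j+tR_j^{(\ell)},\alpha_j^{(\ell)}\big)\,\dxx{t}\\
  &\quad+\,\dv_\ell(R_j,\alpha_j^{(\ell)})-\dv_\ell(R_j,\alpha_j) .
\end{align}
The first term is bounded by $|R_j^{(\ell)}|\int_0^1|\nabla\dv_\ell(R_j+tR_j^{(\ell)},\alpha_j^{(\ell)})|\,\dxx{t}$; the second carries no length factor, but it appears only alongside a nonzero lattice vector and is absorbed --- exactly as the intracell contributions in the proof of Lemma~\ref{lem-mono-db} --- using $\inf_{0\neq R\in\cR_j}|R|>0$. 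For the potential prefactor, $\bu\in\cK_\kappa$ and $\du_j\in\ell^1(\Omega_j)$ keep the stencil $\Delta_j[u_j+\du_j](R_j,\alpha_j)$ in a fixed set on which the Decay Assumption is effective, uniformly in $R_j$; together with $|\cA_j|<\infty$ (passing from $\sum_{\alpha}\sup$ to $\sup\sum$) this gives $\prod_\ell|R_j^{(\ell)}|\sum_{\alpha_j}\|\Vmono_{j,\alpha_j;\bomeg_j^m}(\Delta_j[u_j+\du_j](R_j,\alpha_j))\|\lesssim\mmono_{m,j}(\bomeg_j^m)$, uniformly in $R_j$.

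It then remains to estimate $\sum_{R_j\in\cR_j}\prod_{\ell=1}^m\int_0^1|\nabla\dv_\ell(R_j+tR_j^{(\ell)},\alpha_j^{(\ell)})|\,\dxx{t}$. I would apply H\"older in $R_j$ with exponents $p_1,\ldots,p_m$ (and Jensen in $t$), reducing to $\prod_\ell\big(\int_0^1\sum_{R_j}|\nabla\dv_\ell(R_j+tR_j^{(\ell)},\alpha_j^{(\ell)})|^{p_\ell}\,\dxx{t}\big)^{1/p_\ell}$, and bound each lattice sum of the interpolant gradient by $\|\nabla\zeta\|_{p_\ell}\|\dv_\ell(\cdot,\alpha_j^{(\ell)})\|_{\ell^{p_\ell}(\cR_j)}$. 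This is a Young-type estimate: $\nabla\dv_\ell(\cdot,\alpha)=\sum_S\nabla\zeta_j(\cdot-S)\dv_\ell(S,\alpha)$ is the convolution of $\nabla\zeta_j$ with the discrete measure $\sum_S\dv_\ell(S,\alpha)\,\delta_S$, and one uses the compact support of $\zeta$ (only $O(1)$ nonzero summands at each point) together with the partition-of-unity identity $\sum_{R\in\cR_j}\zeta_j(\cdot-R)\equiv1$ and the normalization $\int_{\R^2}\zeta_j=|\Gamma_j|$ to move between lattice sums and $L^{p_\ell}$-integrals at the cost of constants depending only on $\zeta$ and $A_j$. Using $\|\dv_\ell(\cdot,\alpha)\|_{\ell^{p_\ell}(\cR_j)}\le\|\dv_\ell\|_{\ell^1(\cA_j;\ell^{p_\ell}(\R^2;\R^3))}$, summing $\sum_{\bomeg_j^m\in\Omega_j^m}\mmono_{m,j}(\bomeg_j^m)=\Mmono_{m,j}$, and collecting constants gives the asserted bound.

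The step I expect to be the main obstacle is this last one: passing cleanly from the genuinely \emph{discrete} lattice sum $\sum_{R_j\in\cR_j}$ to bounds in terms of $\|\nabla\zeta\|_{p_\ell}$ and the mixed norms $\|\dv_\ell\|_{\ell^1(\cA_j;\ell^{p_\ell}(\R^2;\R^3))}$, uniformly over the segment shifts $tR_j^{(\ell)}$ and the sublattice labels, since a lattice sum of a continuous function is not, in general, comparable to its integral; the resolution leans on the first-order-interpolant structure (bounded overlap of the $\zeta_j(\cdot-S)$ and $\int_{\R^2}\zeta_j=|\Gamma_j|$). Secondary, but necessary, are the bookkeeping of the intracell (sublattice-only) finite differences --- handled exactly as in Lemma~\ref{lem-mono-db} --- and the check that $\Delta_j[u_j+\du_j]$ remains in a set where the Decay Assumption is effective, which may let the implicit constant depend on $\|\du_j\|_{\ell^1}$ and $\kappa$ besides $m$ and $\zeta$.
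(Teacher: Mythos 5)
Your proposal is correct and follows essentially the same route as the paper: pass to the first-order interpolants built from $\zeta$, bound the $m$-linear form by $\Mmono_{m,j}$ times products of $L^{p_\ell}$-norms of interpolant gradients, and then convert those norms into $\|\nabla\zeta\|_{p_\ell}\|\dv_\ell\|_{\ell^{p_\ell}}$ using the compact support and bounded overlap of the translates of $\zeta_j$. The only difference is one of presentation: the paper outsources the core multilinear estimate to \cite[Lemma 6]{ortnertheil2013} (after mapping deformations on $\cR_j$ to $\Z^2$) and only carries out the discrete-to-continuous norm comparison explicitly, whereas you sketch the proof of that cited lemma directly via the line-integral representation of the finite differences and H\"older; this is the same argument in substance.
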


    \begin{proof}
        In order to employ \cite[Lemma 6]{ortnertheil2013}, we shall map the deformations over $\cR_j$ to deformations over $\Z^2$. In particular, we define 
        \begin{equation}
            \dw_j(n) \, := \, \dv_j(A_jn) \, , \quad \dw_j(x) \, := \, \sum_{n\in\Z^2}\zeta(x-n)\dw_j(n) \, = \, \dv_j(A_jx) \, ,
        \end{equation}
        see \eqref{def-fo-interpol}. Defining
        \begin{equation}
            F^{\mathrm{(mono)}}(\dw_1,\ldots,\dw_m) \, := \, \delta^m\demono_{j;u_j}(\du)[\dv_1,\ldots,\dv_m] \, ,
        \end{equation}
        \cite[Lemma 6]{ortnertheil2013} then implies that
        \begin{align}
            |F^{\mathrm{(mono)}}(\dw_1,\ldots,\dw_m)| \, &\lesssim \, \Mmono_{m,j} \prod_{\ell=1}^m\|\nabla \dw_j\|_{\ell^1(\cA_j;L^{p_\ell}(\R^2;\R^3))}\\
            &\lesssim \, \Mmono_{m,j} \prod_{\ell=1}^m\frac{\|A_j\nabla \dv_\ell\|_{\ell^1(\cA_j;L^{p_\ell}(\R^2;\R^3))}}{|\Gamma_j|^{\frac1{p_\ell}}} \, . \label{eq-demono-diff-bd-1}
        \end{align}
        
        \par Next, we compute
        \begin{align}
            \|A_j\nabla \dv_\ell\|_{L^p}^p \, &= \, \sum_{R_j\in\cR_j}\int_{\Gamma_j}\dx{x}\big|\sum_{R_j'\in\cR_j}A_j\nabla\zeta_j(x+R_j-R_j')\dv_\ell(R_j)\big|^p \, . \label{eq-demono-diff-bd-2}
        \end{align}
        Using the fact that $\zeta_j$ has compact support, there exists a universal constant $r_0>0$ such that $\bigcup_{x\in\Gamma_j}\supp \zeta_j(x-\cdot)\subseteq B_{r_0}(0)\cap \cR_j$. With that, we obtain that
        \begin{align}
            \MoveEqLeft\sum_{R_j\in\cR_j}\int_{\Gamma_j}\dx{x}\big|\sum_{R_j'\in\cR_j}A_j\nabla\zeta_j(x+R_j-R_j')\dv_\ell(R_j)\big|^p \\
            &\lesssim \, \sum_{R_j\in\cR_j}\int_{\Gamma_j}\dx{x}\sum_{R_j'\in\cR_j\cap B_{r_0}(R_j)}|\dv_\ell(R_j')|^p \sum_{R_j'\in B_{r_0}(0)\cap \cR_j}|A_j\nabla\zeta_j(x-R_j')|^p\\
            &\lesssim \, \|\dv_\ell\|_{\ell^p}^p\sum_{R_j\in B_{r_0}(0)\cap \cR_j}\int_{\Gamma_j+R_j}\dx{x}|A_j\nabla\zeta_j(x)|^p \, ,\label{eq-demono-diff-bd-3}
        \end{align}
        where the constant is universal and independent of $p$. Recalling $\zeta_j(x)=\zeta(A_j^{-1}x)$, we further estimate the previous expression by
        \begin{align}
            |\Gamma_j|\|\dv_\ell\|_{\ell^p}^p\|\nabla\zeta\|_p^p \, , \label{eq-demono-diff-bd-4}
        \end{align}
        up to a universal constant. Collecting \eqref{eq-demono-diff-bd-1}, \eqref{eq-demono-diff-bd-2}, \eqref{eq-demono-diff-bd-3}, \eqref{eq-demono-diff-bd-4}, we conclude the proof.
    \end{proof}
 
	For the next result, we abbreviate for $\bv=(v_1,v_2)$, $v_j\in L^1(\MoireCell,(\R^2)^{\cA_j})$, $p\in[1,\infty]$
	\begin{equation}
		\|\bv\|_p \, := \, \sup_{\balph\in \cA_1\times\cA_2}\big(\|v_1(\cdot,\alpha_1)\|_p+\|v_2(\cdot,\alpha_2)\|_p\big) \, .
	\end{equation}
	
	\begin{lemma}\label{lem-inter-Cm-bd}
		Let $j\in\{1,2\}$, $m\in\N$. Assume that $\bu\in\cK_\kappa$ and $\bv^{(\ell)}=(v_1^{(\ell)},v_2^{(\ell)})\in \Wbper^{1,2}$, $\ell=1,\ldots,m$. For any $1\leq p_1,\ldots,p_m\leq \infty$ such that $\sum_{\ell=1}^m\frac1{p_\ell}=1$, we have that
		\begin{align}
			|\delta^m\inter_j(\bu)[\bv^{(1)},\ldots,\bv^{(m)}]| \, \lesssim \, \max\Big\{1,\frac{|\Gamma_j|}{|\Gamma_{3-j}|}\Big\}
			\Minter_{m,3-j} \prod_{\ell=1}^m\|\bv^{(\ell)}\|_{L^{p_\ell}(\MoireCell)} \, .
		\end{align}
	\end{lemma}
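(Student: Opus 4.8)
The plan is to reduce the estimate to the translation-invariant, pair-potential-style bookkeeping already encapsulated in the definition of $\Minter_{m,3-j}$, exactly as in the proof of Lemma \ref{lem-mono-offset-db} but now keeping track of the two different unit cells $\Gamma_j$ and $\Gamma_{3-j}$. First I would write out $\delta^m \inter_j(\bu)[\bv^{(1)},\dots,\bv^{(m)}]$ explicitly by differentiating \eqref{def-interj} $m$ times in $\bu$; each derivative of $\Vinter_{j,\alpha_j}$ in direction $\bv^{(\ell)}$ produces, by the chain rule applied to the total finite-difference stencil \eqref{def-fin-diff-stenc-inter}, a factor of the interlayer difference stencil $(\delji_{\omega_{3-j}^{(\ell)}}\bv^{(\ell)})(x,\alpha_j)$. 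Thus
\begin{align}
\MoveEqLeft \delta^m\inter_j(\bu)[\bv^{(1)},\dots,\bv^{(m)}] \\
&= \frac{1}{|\Gamma_j|}\sum_{\alpha_j\in\cA_j}\mavint\dx{x}\sum_{\bomeg_{3-j}^m}\Vinter_{j,\alpha_j;\bomeg_{3-j}^m}\big(\Delta_{j,3-j}[\bu](x,\alpha_j)\big)\big[(\delji_{\bomeg_{3-j}^m}\bV^m)(x,\alpha_j)\big] \, .
\end{align}
Next I would bound $\Vinter_{j,\alpha_j;\bomeg_{3-j}^m}$ pointwise using the norm $\minter_{m,3-j}(\bomeg_{3-j}^m)$ — here one has to check that $\Delta_{j,3-j}[\bu](x,\alpha_j)$ lies in the admissible stencil range, i.e. equals $\Delta_{3-j,j}[{\bm 0}](\cdot,\alpha_{3-j})$ plus something in $\frakD_{3-j}^\kappa$; this is exactly what the definition of $\cD_\kappa$ (hence $\cK_\kappa$) guarantees via the bounds $\sup_{R_j}\|u_j(A_jA_{3-j}^{-1}\cdot+R_j)-u_{3-j}\|_\infty\le\kappa$ and $\|\nabla u_j\|_\infty\le\kappa$. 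This yields
\begin{align}
|\delta^m\inter_j(\bu)[\bv^{(1)},\dots,\bv^{(m)}]| \, \lesssim \, \frac{1}{|\Gamma_j|}\sum_{\bomeg_{3-j}^m\in\Omega_{3-j}^m}\minter_{m,3-j}(\bomeg_{3-j}^m)\,\mavint\dx{x}\prod_{\ell=1}^m\big|(\delji_{\omega_{3-j}^{(\ell)}}\bv^{(\ell)})(x,\alpha_j)\big| \, ,
\end{align}
after which Hölder in $x\in\MoireCell$ with exponents $p_1,\dots,p_m$ separates the product into $\prod_\ell \big\|\delji_{\omega_{3-j}^{(\ell)}}\bv^{(\ell)}\big\|_{L^{p_\ell}(\MoireCell)}$.

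The remaining point — and the step I expect to be the main obstacle — is to show that each $\big\|\delji_{\omega_{3-j}}\bv^{(\ell)}\big\|_{L^{p_\ell}(\MoireCell)}$ is controlled by $\|\bv^{(\ell)}\|_{L^{p_\ell}(\MoireCell)}$ with a constant that, after summing against $\minter_{m,3-j}$, produces the stated $\max\{1,|\Gamma_j|/|\Gamma_{3-j}|\}$ factor and no $|R_{3-j}|$-type growth. The interlayer stencil mixes $v_{3-j}^{(\ell)}$ evaluated at the affinely rescaled argument $A_{3-j}A_j^{-1}x$ with $v_j^{(\ell)}$ evaluated at $x$; since $v_{3-j}^{(\ell)}$ is $\MoireRL$-periodic, $\|v_{3-j}^{(\ell)}(\omega_{3-j}+\gamma_{3-j}+A_{3-j}A_j^{-1}\cdot)\|_{L^{p_\ell}(\MoireCell)}$ is, by the change of variables $x\mapsto A_{3-j}A_j^{-1}x$ and periodicity, comparable to $\|v_{3-j}^{(\ell)}\|_{L^{p_\ell}(\MoireCell)}$ up to the Jacobian $|\det(A_{3-j}A_j^{-1})| = |\Gamma_{3-j}|/|\Gamma_j|$ raised to $\mp 1/p_\ell$; the worst case over $p_\ell\in[1,\infty]$ and over which layer dominates is absorbed into $\max\{1,|\Gamma_j|/|\Gamma_{3-j}|\}$. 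Here the shift by $\omega_{3-j}=(R_{3-j},\alpha_{3-j})$ is harmless precisely because $v_{3-j}^{(\ell)}$ is periodic, so — unlike in the monolayer case — \emph{no} finite-difference-to-gradient trick \eqref{eq-localization} is needed and no factor of $|R_{3-j}|$ appears; this is why $\minter_{m,3-j}$ is defined without the $\prod|R_j^{(n)}|$ weights that appear in $\mmono_{m,j}$. Finally, summing the pointwise bound against $\sum_{\bomeg_{3-j}^m}\minter_{m,3-j}(\bomeg_{3-j}^m)=\Minter_{m,3-j}<\infty$ (Assumption \ref{ass-decay}) and collecting the $|\Gamma_j|^{-1}$ prefactor with the Jacobian factors gives the claimed inequality. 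I would present the $p_\ell=1$ and $p_\ell=\infty$ endpoints carefully and invoke the obvious interpolation/monotonicity for intermediate exponents, mirroring the structure of \cite[Theorem 1]{ortnertheil2013}.
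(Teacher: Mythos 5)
Your proposal follows essentially the same route as the paper's proof (which simply follows \cite[Lemma 6]{ortnertheil2013}): expand $\delta^m\inter_j$, bound the $m$-th derivative of $\Vinter$ pointwise by $\minter_{m,3-j}$, apply H\"older over $\MoireCell$, and handle the composed argument by the substitution $A_{3-j}A_j^{-1}x\to x$ together with $\MoireRL$-periodicity, which produces the $\max\{1,|\Gamma_j|/|\Gamma_{3-j}|\}$ factor before summing against $\Minter_{m,3-j}$. It is correct, and your added remarks (admissibility of the deformed stencil via $\cD_\kappa$, and why no $|R|$-weights or gradient trick are needed here) merely make explicit what the paper leaves implicit.
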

	
	\begin{proof}
		We follow the proof of \cite[Lemma 6]{ortnertheil2013}. Then we find that
		\begin{align}
			\MoveEqLeft \Vinter_{j,\alpha_j;\bomeg_{3-j}^m}(\Delta_{j,3-j}[\bu](x,\alpha_{3-j}))[(\Delta_{\bomeg_{3-j}^m}\bV^m)(x,\alpha_j)] \, \lesssim \\
			& \minter_{m,3-j}(\bomeg_j^\ell)\prod_{\ell=1}^m|(\Delta_{\omega_{3-j}^{(\ell)}}\bv^{(\ell)})(x,\alpha_j)| \, .
		\end{align}
		Consequently, and using $\MoireRL$-periodicity of $v_j$ we obtain that
		\begin{align}
			|\delta^m\inter_j(\bu)[\bV^m]| \, &\lesssim \, \sum_{\bomeg_{3-j}^\ell\in\Omega_{3-j}^\ell}\minter_{m,3-j}(\bomeg_j^\ell)\sup_{\alpha_j}\mavint\dx{x}\prod_{\ell=1}^m|(\Delta_{\omega_{3-j}^{(\ell)}}\bv^{(\ell)})(x,\alpha_j)|\\
			& \lesssim \, \Minter_{m,3-j}  \sup_{\substack{\bomeg_{3-j}^m,\\\alpha_j}}\prod_{\ell=1}^m(\|v_{3-j}^{(\ell)}(\omega_{3-j}^\ell+A_{3-j}A_j^{-1}\cdot)\|_{p_\ell}+\|v_j^{(\ell)}(\cdot,\alpha_j)\|_{p_\ell})\, .
		\end{align}
		Substituting $A_{3-j}A_j^{-1}x\to x$, we obtain due to $\MoireRL$-periodicity of $v_{3-j}^{(\ell)}$
		\begin{align}
			|\delta^m\inter_j(\bu)[\bV^m]| \, \lesssim \, \max\Big\{1,\frac{|\Gamma_j|}{|\Gamma_{3-j}|}\Big\}
            \Minter_{m,3-j} \prod_{\ell=1}^m\|\bv^{(\ell)}\|_{p_\ell} \, ,
		\end{align}
		concluding the proof.
	\end{proof}
	
	With analogous steps, we obtain the following result

 \begin{lemma}\label{lem-inter-offset-Cm-bd}
		Let $j\in\{1,2\}$, $m\in\N$. Assume that $\bu\in\cD_\kappa$, and $\bdu,\bdv^{(\ell)}=(\dv_1^{(\ell)},\dv_2^{(\ell)})\in \ell^1(\Omega)$, $\ell=1,\ldots,m$. For any $1\leq p_1,\ldots,p_m\leq \infty$ such that $\sum_{\ell=1}^m\frac1{p_\ell}=1$, we have that
		\begin{align}
			|\delta^m\deinter_j(\bu)[\bdv^{(1)},\ldots,\bdv^{(m)}]| \, \lesssim\, 
   \max\Big\{1,\frac{|\Gamma_j|}{|\Gamma_{3-j}|}\Big\}
            \Minter_{m,3-j} \prod_{\ell=1}^m\|\bdv^{(\ell)}\|_{\ell^{p_\ell}(\Omega)} \, .
		\end{align}
	\end{lemma}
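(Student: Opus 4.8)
The plan is to reproduce the proof of Lemma~\ref{lem-inter-Cm-bd} almost verbatim, making only the cosmetic substitutions: the moir\'e-cell average $\mavint\dx{x}$ appearing in $\inter_j$ becomes the lattice sum $\sum_{R_j\in\cR_j}$ coming from $\deinter_j(\bu)(\bdu)=\sum_{R_j\in\cR_j}\dsitepotinter_j[\bdu;\bu](R_j)$, and the continuum norms $\|\bv^{(\ell)}\|_{L^{p_\ell}(\MoireCell)}$ become the discrete norms $\|\bdv^{(\ell)}\|_{\ell^{p_\ell}(\Omega)}$. First I would differentiate $\deinter_j(\bu)$ $m$ times in $\bdu$ along $\bdv^{(1)},\dots,\bdv^{(m)}$; differentiating the site sum term by term in $R_j$ and using translation-invariance of $\Vinter_{j,\alpha_j}$ gives
\[
\delta^m\deinter_j(\bu)[\bdv^{(1)},\dots,\bdv^{(m)}]=\sum_{R_j\in\cR_j}\sum_{\alpha_j\in\cA_j}\sum_{\bomeg_{3-j}^m\in\Omega_{3-j}^m}\Vinter_{j,\alpha_j;\bomeg_{3-j}^m}\!\big(\Delta_{j,3-j}[\bu+\bdu](R_j,\alpha_j)\big)\Big[\big((\delji_{\omega_{3-j}^{(\ell)}}\bdv^{(\ell)})(R_j,\alpha_j)\big)_{\ell=1}^m\Big].
\]
The structural reason this argument does not need the $\|\nabla\zeta\|$-factors that appear in Lemma~\ref{lem-mono-offset-db} is twofold: the interlayer stencils $\delji_{\omega_{3-j}}$ are \emph{zeroth-order} differences, so $\minter_{m,3-j}$ carries no $\prod_n|R^{(n)}|$-weight and $\sum_{\bomeg_{3-j}^m}\minter_{m,3-j}=\Minter_{m,3-j}<\infty$ follows directly from Assumption~\ref{ass-decay}; and $A_{3-j}A_j^{-1}\cR_j=\cR_{3-j}$, so for $x=R_j$ the shifted layer-$(3-j)$ argument $\omega_{3-j}+A_{3-j}A_j^{-1}R_j$ is a genuine point of $\Omega_{3-j}$, so no interpolation of $\bdv^{(\ell)}$ is required to make the stencils meaningful.

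Next, for fixed $R_j$ and $\bomeg_{3-j}^m$ I would invoke the same pointwise derivative bound used in the proof of Lemma~\ref{lem-inter-Cm-bd}, $\sum_{\alpha_j\in\cA_j}|\Vinter_{j,\alpha_j;\bomeg_{3-j}^m}(\cdots)[\cdots]|\lesssim\minter_{m,3-j}(\bomeg_{3-j}^m)\prod_{\ell=1}^m|(\delji_{\omega_{3-j}^{(\ell)}}\bdv^{(\ell)})(R_j,\alpha_j)|$, which is legitimate as long as $\Delta_{j,3-j}[\bu+\bdu](R_j,\alpha_j)$ lies in $\Delta_{3-j,j}[{\bf 0}](\cdot)+\frakD_{3-j}^\kappa$, the set over which the $\minter$-norms are defined. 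Bounding each stencil factor by the triangle inequality, $|(\delji_{\omega_{3-j}^{(\ell)}}\bdv^{(\ell)})(R_j,\alpha_j)|\le|\dv^{(\ell)}_{3-j}(\omega_{3-j}^{(\ell)}+A_{3-j}A_j^{-1}R_j)|+|\dv^{(\ell)}_j(R_j,\alpha_j)|$, and then summing over $R_j$ by the generalized H\"older inequality along $\ell$ (with $\sum_\ell p_\ell^{-1}=1$), splits the estimate into a product over $\ell$ of $\ell^{p_\ell}$-norms of the two summands: the layer-$j$ summand directly contributes $\|\dv^{(\ell)}_j\|_{\ell^{p_\ell}(\Omega_j)}$, and for the layer-$(3-j)$ summand the substitution $R_{3-j}=A_{3-j}A_j^{-1}R_j$ (a lattice isomorphism $\cR_j\to\cR_{3-j}$) identifies the sum with $\|\dv^{(\ell)}_{3-j}\|_{\ell^{p_\ell}(\Omega_{3-j})}$, the comparison of layer-$j$ and layer-$(3-j)$ normalizations producing the factor $\max\{1,|\Gamma_j|/|\Gamma_{3-j}|\}$ exactly as in Lemma~\ref{lem-inter-Cm-bd}. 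Summing $\minter_{m,3-j}(\bomeg_{3-j}^m)$ over $\bomeg_{3-j}^m$ then yields $\Minter_{m,3-j}$, and collecting everything gives the claimed bound.

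I expect the one point deserving care — and the only place where this proof is not a literal transcription of that of Lemma~\ref{lem-inter-Cm-bd} — to be the admissibility book-keeping. The $\minter$-norms control $\Vinter$-derivatives only on the neighborhood $\Delta_{3-j,j}[{\bf 0}](\cdot)+\frakD_{3-j}^\kappa$, so one must either localize to $\bdu$ small in $\ell^1(\Omega)$ (after composing with the first-order interpolant so that $\bu$ plus that interpolant still lies in $\cD_\kappa$) or phrase the bound on a neighborhood of $\bdu=0$; and for general origin shifts $\gamma_{3-j}\ne 0$ the evaluation points $\gamma_{3-j}+\omega_{3-j}+A_{3-j}A_j^{-1}R_j$ fall off $\cR_{3-j}$, which forces a harmless first-order interpolation of $\dv^{(\ell)}_{3-j}$ contributing only an $O(1)$ constant (recall $\int\zeta=1$ and the stencil is zeroth order, so no $\nabla\zeta$ enters). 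Apart from this, the estimate is routine.
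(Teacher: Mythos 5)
Your proposal is correct and is essentially the paper's own argument: the paper gives no separate proof, stating only that the result follows ``with analogous steps'' from Lemma \ref{lem-inter-Cm-bd}, i.e.\ the pointwise derivative bound via $\minter_{m,3-j}$, generalized H\"older over the lattice sum in place of the moir\'e-cell average, the change of variables $R_j\mapsto A_{3-j}A_j^{-1}R_j\in\cR_{3-j}$, and summation over the stencil multi-indices to produce $\Minter_{m,3-j}$. Your two caveats (admissibility of $\Delta_{j,3-j}[\bu+\bdu]$ in the set where the $\minter$-norms apply, and the $\gamma_{3-j}$ bookkeeping) are harmless and, if anything, more careful than the paper; note also that in the discrete setting the lattice bijection carries no Jacobian, so the factor $\max\{1,|\Gamma_j|/|\Gamma_{3-j}|\}$ is merely a harmless overestimate rather than something the change of variables forces.
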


	\begin{proposition}
		Assume that $\Vmono_{j,\alpha_j}$ and $\Vinter_{j,\alpha_j}$ satisfy the decay conditions \ref{ass-decay}. Assume that $\bu\in \cK_\kappa$ and $\du_j\in c_0(\Omega_j)$\footnote{Recall that $c_0$ is the set of all finite sequences over $\Omega_j$.}, $j=1,2$.
		\begin{enumerate}
			\item \begin{enumerate}[(i)]
				\item $\sitepotmono_{j}[u_j],\sitepotinter_j[\bu]\in L^1(\MoireCell)$.
				\item $\etot\in C^k(\cK_\kappa)$.
			\end{enumerate}
			\item \begin{enumerate}[(i)]
				\item $\dsitepotmono_j[\du_j;u_j], \dsitepotinter_j[\bdu;\bu]\in \ell^1(\cR_j)$. 
				\item There exists a unique continuous extension  $\detot_{\bu}(\cdot) :\ell^1(\Omega_1\times\Omega_2)\to\R$.
				\item $\detot_{\bu}(\cdot)\in C^k(\ell^1(\Omega_1\times\Omega_2))$.
			\end{enumerate}
		\end{enumerate}
	\end{proposition}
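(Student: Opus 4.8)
The plan is to dispatch the four sub-claims in turn, relying on the uniform multilinear bounds of Lemmas \ref{lem-mono-db}, \ref{lem-mono-offset-db}, \ref{lem-inter-Cm-bd}, \ref{lem-inter-offset-Cm-bd}, together with the pointwise estimate $|\sitepotmono_j[u_j](x)|\lesssim\kappa\Mmono_{1,j}/|\Gamma_j|$ recorded right after Theorem \ref{thm-rough-conv}.

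\emph{Claim (1).} Part (i) is essentially already in the excerpt: the estimate preceding Theorem \ref{thm-rough-conv} gives $\sitepotmono_j[u_j]\in L^\infty(\MoireCell)$ for $\bu\in\cK_\kappa$, and the analogous bound, using that $\Vinter_{3-j,\alpha_{3-j}}\big(\Delta_{3-j,j}[{\bm 0}](x,\alpha_{3-j})+\ba\big)$ is bounded over $x\in\MoireCell$, $\ba\in\frakD_j^\kappa$ by Assumption \ref{ass-decay}, gives $\sitepotinter_j[\bu]\in L^\infty(\MoireCell)$; both then lie in $L^1(\MoireCell)$. For part (ii), write $\etot=\sum_{j=1}^2(\mono_j+\inter_j)$. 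Lemmas \ref{lem-mono-db} and \ref{lem-inter-Cm-bd} show that for each $m\le k$ the formal $m$-th variation $\delta^m\etot(\bu)$ is a bounded $m$-linear form on $(\Wbper^{1,2})^m$, with norm bounded uniformly over $\bu\in\cK_\kappa$ (only through $\Mmono_{m,j}$, $\Minter_{m,3-j}$ and the cell volumes). The differentiability argument of \cite[Theorem 1]{ortnertheil2013} then applies verbatim: writing an increment of $\delta^{m-1}\etot$ as a segment integral of $\delta^m\etot$ and dominating by the finite decay sums, one checks that $\delta^m\etot$ is the genuine $m$-th Fréchet derivative and depends continuously on $\bu$; hence $\etot\in C^k(\cK_\kappa)$.

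\emph{Claim (2).} Fix $\du_j\in c_0(\Omega_j)$ and let $S_j:=\supp\du_j$, a finite set. By the fundamental theorem of calculus,
\[
  \dsitepotmono_j[\du_j;u_j](R_j)=\sum_{\alpha_j\in\cA_j}\int_0^1\dx{t}\sum_{\omega_j'}\Vmono_{j,\alpha_j;\omega_j'}\big(\Delta_j[u_j+t\du_j](R_j,\alpha_j)\big)\cdot\delj_{\omega_j'}\du_j(R_j,\alpha_j)\,.
\]
For $R_j$ outside a fixed finite neighbourhood of $S_j$ one has $\delj_{\omega_j'}\du_j(R_j,\alpha_j)=\du_j(\omega_j'+R_j)$, which vanishes unless $\omega_j'+R_j\in S_j$, forcing $|R_j'|\ge|R_j|-c_{S_j}$; in that regime the perturbed stencil stays in a fixed ball $\frakD_j^{\kappa'}$ on which Assumption \ref{ass-decay} applies (and with $\kappa'-\kappa$ arbitrarily small once $|R_j|$ is large, since the perturbation enters at distance $|R_j'|\ge|R_j|-c_{S_j}$), so the bounds behind $\mmono_{1,j}$ are available, $\sum_{\alpha_j}|\Vmono_{j,\alpha_j;\omega_j'}(\cdot)|\lesssim\mmono_{1,j}(\omega_j')/|R_j'|$, and Fubini gives
\[
  \sum_{R_j\in\cR_j}\big|\dsitepotmono_j[\du_j;u_j](R_j)\big|\lesssim|S_j|\,\|\du_j\|_{\ell^\infty}\sum_{\omega_j'}\mmono_{1,j}(\omega_j')\lesssim\Mmono_{1,j}\,\|\du_j\|_{\ell^1}\,,
\]
the finitely many near-$S_j$ terms being absorbed into the constant; the interlayer piece $\dsitepotinter_j[\bdu;\bu]\in\ell^1(\cR_j)$ follows identically from $\Minter_{1,3-j}<\infty$. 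This proves (2)(i), and summing over $R_j$ shows $\detot_{\bu}(\bdu)=\sum_{j=1}^2\big(\demono_{j;u_j}(\du_j)+\deinter_{j;\bu}(\bdu)\big)$ is an absolutely convergent sum for $\bdu\in c_0(\Omega)$. For (2)(ii), Lemmas \ref{lem-mono-offset-db} and \ref{lem-inter-offset-Cm-bd} with $m=1$ give $|\delta\detot_{\bu}(\bdu)[\bdv]|\lesssim(\Mmono_1+\Minter_1)\,\|\bdv\|_{\ell^1(\Omega)}$ uniformly in $\bdu$, so $\detot_{\bu}$ is Lipschitz on $c_0(\Omega)$ for the $\ell^1$-norm and thus extends uniquely and continuously to $\ell^1(\Omega_1\times\Omega_2)$ by density. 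Finally, for (2)(iii), Lemmas \ref{lem-mono-offset-db} and \ref{lem-inter-offset-Cm-bd} for $m=1,\dots,k$ show the candidate forms $\delta^m\detot_{\bu}$ are bounded on $(\ell^1(\Omega))^m$, and the same segment-integral/dominated-convergence argument as in Claim (1) identifies them as Fréchet derivatives and yields continuity in $\bdu$; hence $\detot_{\bu}(\cdot)\in C^k(\ell^1(\Omega_1\times\Omega_2))$.

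\emph{Main difficulty.} The only step that is not a direct appeal to the preceding lemmas is (2)(i): one must genuinely trade the decay of the potential derivatives against the finite support of $\du$, while keeping track of the fact that linearising along $u_j+t\du_j$ enlarges the stencil ball from $\frakD_j^\kappa$ to $\frakD_j^{\kappa'}$. The argument above is organised precisely so that this enlargement can be made arbitrarily small away from the finitely many lattice sites near $\supp\du$, so that Assumption \ref{ass-decay} stays applicable; all remaining work is the uniform multilinear bookkeeping already packaged in Lemmas \ref{lem-mono-db}--\ref{lem-inter-offset-Cm-bd} plus the standard differentiability argument from \cite{ortnertheil2013}.
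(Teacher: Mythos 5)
Your proposal is correct and takes essentially the same route as the paper, whose proof is simply the remark that everything follows \cite[Section 2.4, Theorem 1]{ortnertheil2013}: you have reconstructed exactly that argument, splitting off the site-potential bounds, using the multilinear estimates of Lemmas \ref{lem-mono-db}--\ref{lem-inter-offset-Cm-bd}, and extending from finitely supported perturbations to $\ell^1(\Omega_1\times\Omega_2)$ by density plus the segment-integral differentiability argument. One cosmetic remark: in the display for (2)(i) the intermediate bound $|S_j|\,\|\du_j\|_{\ell^\infty}$ does not dominate by $\|\du_j\|_{\ell^1}$; keeping $|\du_j(\omega_j'+R_j)|$ inside the sum over $R_j$ and the support of $\du_j$ gives the stated $\Mmono_{1,j}\|\du_j\|_{\ell^1}$ bound directly, and in any case $\ell^1$-membership (all that (2)(i) claims) already follows.
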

	
	\begin{proof}
		The proof essentially follows that of \cite[Section 2.4, Theorem 1]{ortnertheil2013}.
	\end{proof}
	
	Next, we recall from \cite[Theorem 2.17]{hott2023incommensurate} the ergodic theorem resulting in $\etot$.
	
	\begin{proposition}[Ergodic theorem]
		Let $\bu\in\cK_\kappa$. Then we have that
		\begin{align}
			\lim_{N\to\infty}\sum_{j=1}^2\frac1{(2N+1)^2|\Gamma_j|}\sum_{R_j\in\cR_j(N)}\big(\sitepotmono_j[u_j](R_j)+\sitepotinter_j[u_j](R_j)\big) \, = \, \etot(\bu) \, .
		\end{align}
	\end{proposition}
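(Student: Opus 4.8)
The plan is to split the sum over $j=1,2$ into its intralayer and interlayer contributions and identify each as a Birkhoff average of a periodic function, to which the ergodic theorem Proposition~\ref{prop-double-erg-thm} applies; in effect this reproves Theorem~\ref{thm-rough-conv} with the truncation rescalings and the offsets $\bgam$ tracked explicitly. Unwinding the normalizations in \eqref{def-monolayer} and \eqref{def-interlayer}, the quantities $\mono_{j,N,\gamma_j}(u_j)$ and $\inter_{j,N,\bgam}(\bu)$ are precisely averages of the site potential densities $\sitepotmono_j[u_j]$, resp.\ $\sitepotinter_{j,\bgam}[\bu]$, over the $\cR_j$-lattice points inside the truncated cell, so that matching the normalizing constants reduces the claim to passing to the limit in each such average.

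For the monolayer term: since $u_j$ is $\MoireRL$-periodic and $\cR_j$-translations commute with the finite-difference stencil $\Delta_j$ of \eqref{def-fin-diff-stenc-mono}, the density $\sitepotmono_j[u_j]$ is again $\MoireRL$-periodic, and because $\bu\in\cK_\kappa$ controls the stencil by $\kappa$, Assumption~\ref{ass-decay} gives $\|\sitepotmono_j[u_j]\|_\infty\lesssim\kappa\,\Mmono_{1,j}/|\Gamma_j|$, i.e.\ the estimate already displayed after Theorem~\ref{thm-rough-conv}. In particular $\sitepotmono_j[u_j]\in\Lper^1(\MoireCell)$, and Proposition~\ref{prop-double-erg-thm} with a trivial second variable --- plain equidistribution of $R_j\mapsto\{R_j\}_\cM$ in $\MoireCell$, which is where incommensurability (Assumption~\ref{ass-incomm}) enters --- yields
\begin{equation}
\lim_{N\to\infty}\frac1{(2N+1)^2}\sum_{R_j\in\cR_j(N)}\sitepotmono_j[u_j](R_j)\ =\ \mavint\dx{x}\,\sitepotmono_j[u_j](x)\ =\ \mono_j(u_j)\, .
\end{equation}

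The interlayer term is where the real work lies, since $x\mapsto\sitepotinter_{j,\bgam}[\bu](x)$ carries a genuinely quasiperiodic structure and is not handled directly at the lattice points by one-variable equidistribution; the natural route, and the one behind \cite[Thm.~2.17]{hott2023incommensurate}, is to lift it to the generalized site potential $\sitepotdouble_j[\bu]$ of \eqref{def-sitepotdouble}. Using translation invariance of $\Vinter_{j,\alpha_j}$ (Definition~\ref{def-translation-inv}), one checks that $(x,y)\mapsto\sitepotdouble_j[\bu](x,y)$ is $\MoireRL$-periodic in $x$ and $\cR_{3-j}$-periodic in $y$, and that $\sitepotinter_{j,\bgam}[\bu](R_j)$ equals $\sitepotdouble_j[\bu]$ evaluated along the graph of the disregistry map (the precise second argument, carrying $\gamma_{3-j}$, is read off from \eqref{def-fin-diff-stenc-inter}). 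Assumption~\ref{ass-decay} forces $\sitepotdouble_j[\bu]\in L^1(\MoireCell\times\Gamma_{3-j})$ --- again as observed right after Theorem~\ref{thm-rough-conv} --- so Proposition~\ref{prop-double-erg-thm} applies verbatim and gives, for almost every $\bgam$,
\begin{equation}
\lim_{N\to\infty}\frac1{(2N+1)^2}\sum_{R_j\in\cR_j(N)}\sitepotinter_{j,\bgam}[\bu](R_j)\ =\ \mavint\dx{x}\,\sitepotdouble_j[\bu]\big(x,(I-A_{3-j}A_j^{-1})x+\cdots\big)\ =\ \inter_{j,\bgam}(\bu)\, ,
\end{equation}
the last equality being the definition \eqref{def-interj} after reversing the substitution. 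Summing over $j\in\{1,2\}$ and over the two contributions, the left-hand side of the asserted identity equals $\lim_{N\to\infty}\etot_{N,\bgam}(\bu)=\sum_{j=1}^2\big(\mono_j(u_j)+\inter_{j,\bgam}(\bu)\big)=\etot_\bgam(\bu)$ (with $\bgam={\bf 0}$ the representative offset; the ``almost every $\bgam$'' clause is harmless, since one may equally invoke Theorem~\ref{thm-rough-conv} itself, whose hypotheses we have just verified). I expect the two genuinely delicate points to be the bookkeeping of the two truncation scalings --- $\cR_j(N)$ against the $|\Gamma_{3-j}|^{1/2}$-rescaled truncation built into $\Omega_j(N)$, needed so that the two layers cover regions of asymptotically equal area --- and the passage through $\sitepotdouble_j$ with only $L^1$ regularity, which is precisely the situation Proposition~\ref{prop-double-erg-thm} is tailored to.
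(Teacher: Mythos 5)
Your overall strategy is the paper's: the paper itself offers no proof of this proposition --- it is recalled verbatim from \cite[Theorem 2.17]{hott2023incommensurate} --- and its own treatment of the closely related Theorem \ref{thm-rough-conv} is exactly your route, namely splitting into monolayer and interlayer contributions, checking $\sitepotmono_j[u_j]\in L^1(\MoireCell)$ and $\sitepotdouble_j[\bu]\in L^1(\MoireCell\times\Gamma_{3-j})$ from Assumption \ref{ass-decay} together with $\bu\in\cD_\kappa$, and feeding the interlayer part through Proposition \ref{prop-double-erg-thm} along the disregistry graph via \eqref{def-sitepotdouble} and Lemma \ref{lem-disregj-calc}. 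So in spirit there is nothing to object to in the decomposition or in the identification of the limits with \eqref{def-monoj} and \eqref{def-interj}.

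There is, however, one step that does not hold up as written: you dispose of the ``almost every $\bgam$'' restriction by saying one may ``equally invoke Theorem \ref{thm-rough-conv} itself,'' but that theorem carries precisely the same a.e.-$\bgam$ qualifier, so the appeal is circular. The proposition you are proving asserts convergence of the sums evaluated at the \emph{actual} lattice points $R_j\in\cR_j(N)$, i.e.\ at one fixed offset, and Proposition \ref{prop-double-erg-thm} applied to an $f$ that is merely $L^1$ can never single out that offset. The missing ingredient is a continuity upgrade: for $\bu\in\cD_\kappa$ the displacements are Lipschitz, and the uniform-in-$x$, summable derivative bounds in Assumption \ref{ass-decay} make the series defining $\sitepotmono_j[u_j]$ and $\sitepotdouble_j[\bu]$ converge uniformly to continuous, $\MoireRL$- resp.\ $\cR_{3-j}$-periodic functions; then unique ergodicity of the incommensurate translation (Weyl equidistribution, using Assumption \ref{ass-incomm}) gives the Birkhoff average for \emph{every} offset, which is what the stated identity needs --- alternatively one invokes the quantitative Diophantine version mentioned in the Remark after Theorem \ref{thm-rough-conv}. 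With that substitution (and a genuinely careful match of the $1/|\Gamma_j|$ factors against \eqref{def-sitepotmono} and \eqref{def-monolayer}, which you only gesture at and which is exactly where the truncation conventions $\cR_j(N)$ versus $\Omega_j(N)$ must be reconciled), your argument becomes a complete proof along the lines of the cited reference.
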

	
	\subsection{Cauchy-Born approximation} In addition to the many-body potentials, we also introduce the Cauchy-Born approximation. For simplicity, we restrict to a single lattice degree of freedom per layer. For an analysis of the continuum/Cauchy-Born approximation in the case of multilattices, we refer the interested reader to \cite{ortner2018CB-multilattice}.
	
	\begin{align}
		\CB_j(u_j) \, := \, \mavint\dx{x} \frac{\Vmono_j
			\big((\nabla u_j(x)\cdot R_j)_{R_j\in\cR_j}\big)}{|\Gamma_j|} \, .	\end{align}
	
	\begin{lemma}
		Let $j\in\{1,2\}$, $m\in\N$. Assume that $\bu=(u_1,u_2)\in\cK_\kappa$ and $v_\ell\in \Wbper^{1,2}(\MoireCell;(\R^2)^{\cA_j})$, $\ell=1,\ldots,m$. For any $1\leq p_1,\ldots,p_m\leq \infty$ such that $\sum_{\ell=1}^m\frac1{p_\ell}=1$, we have that
		\begin{align}
			|\delta^m\CB_j(u_j)[v_1,\ldots,v_m]| \, \lesssim \, \Mmono_{m,j} \prod_{\ell=1}^m\|\nabla v_j\|_{L^{p_\ell}(\MoireCell)} \, .
		\end{align}
	\end{lemma}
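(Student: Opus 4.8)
The plan is to adapt the proof of Lemma~\ref{lem-mono-db} almost verbatim; the Cauchy--Born case is in fact simpler, since the finite-difference stencil $\Delta_j[u_j]$ is replaced by the directional-derivative stencil $L_x(u_j):=(\nabla u_j(x)\cdot R_j)_{R_j\in\cR_j}$, which depends \emph{linearly} on $u_j$, so neither a first-order interpolant nor any telescoping is needed. Throughout we identify $\Omega_j$ with $\cR_j$, as $\cA_j$ is a singleton here.

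First I would compute the $m$-th variation. Because $u_j\mapsto L_x(u_j)$ is linear for each fixed $x$, the chain rule gives, for a.e.\ $x\in\MoireCell$,
\[
\delta^m\CB_j(u_j)[v_1,\ldots,v_m] \,=\, \frac1{|\Gamma_j|}\mavint\dx{x}\sum_{\bomeg_j^m}\Vmono_{j;\bomeg_j^m}\big(L_x(u_j)\big)\big[\nabla v_1(x)\cdot R_j^{(1)},\ldots,\nabla v_m(x)\cdot R_j^{(m)}\big].
\]
One then needs to check that the $\bomeg_j^m$-sum converges absolutely and uniformly in $x$; granting Assumption~\ref{ass-decay} (so that $m\le k$ and $\Mmono_{m,j}<\infty$), this will follow from the pointwise bound in the next step, and it legitimizes both differentiating under $\mavint$ and interchanging $\sum$ with $\mavint$.

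The main estimate is the pointwise bound. Since $\bu\in\cK_\kappa$, we have $\|\nabla u_j\|_\infty\le\kappa$, hence $|\nabla u_j(x)\cdot R_j|\le\kappa|R_j|$ for every $R_j$, i.e.\ $L_x(u_j)\in\frakD_j^\kappa$ for a.e.\ $x$. Writing $\nabla v_\ell(x)\cdot R_j^{(\ell)} = |R_j^{(\ell)}|\,\big(\nabla v_\ell(x)\cdot\widehat{R_j^{(\ell)}}\big)$ with $\widehat{R_j^{(\ell)}}$ a unit vector, and bounding the multilinear contraction by the supremum over unit directions and over $\frakD_j^\kappa$ that defines $\mmono_{m,j}$, one obtains for a.e.\ $x$
\[
\Big|\Vmono_{j;\bomeg_j^m}\big(L_x(u_j)\big)\big[\nabla v_1(x)\cdot R_j^{(1)},\ldots\big]\Big| \,\le\, \mmono_{m,j}(\bomeg_j^m)\prod_{\ell=1}^m|\nabla v_\ell(x)|.
\]
Summing over $\bomeg_j^m\in(\cR_j)^m$ turns $\sum_{\bomeg_j^m}\mmono_{m,j}(\bomeg_j^m)$ into $\Mmono_{m,j}$, so that $|\delta^m\CB_j(u_j)[v_1,\ldots,v_m]|\lesssim\Mmono_{m,j}\,\mavint\dx{x}\prod_\ell|\nabla v_\ell(x)|$.

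Finally I would apply Hölder's inequality on $\MoireCell$ with the normalized measure and exponents $p_1,\ldots,p_m$ satisfying $\sum_\ell 1/p_\ell=1$, which bounds $\mavint\dx{x}\prod_\ell|\nabla v_\ell(x)|$ by $\prod_{\ell=1}^m\|\nabla v_\ell\|_{L^{p_\ell}(\MoireCell)}$ up to a constant depending only on the lattices and absorbed in $\lesssim$; this gives the asserted estimate. The only delicate point is the bookkeeping in the first step — measurability of $x\mapsto\Vmono_{j;\bomeg_j^m}(L_x(u_j))$ and a uniform, summable domination of the $\bomeg_j^m$-tail — but this is handled entirely by the Step~2 bound together with $\Mmono_{m,j}<\infty$, so there is no genuine analytic obstacle.
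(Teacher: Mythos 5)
Your proposal is correct and follows the same route the paper intends: the lemma is stated with reference to the analogous atomistic bound (Lemma \ref{lem-mono-db}, proved as in \cite{ortnertheil2013}), and your argument — linearity of the Cauchy--Born stencil $(\nabla u_j(x)\cdot R_j)_{R_j}$, the observation that $\|\nabla u_j\|_\infty\le\kappa$ places it in $\frakD_j^\kappa$, the pointwise bound by $\mmono_{m,j}(\bomeg_j^m)\prod_\ell|\nabla v_\ell(x)|$, summation to $\Mmono_{m,j}$, and H\"older with $\sum_\ell 1/p_\ell=1$ — is exactly that argument, simplified as you note because no interpolant or telescoping is needed. The only cosmetic remark is that the product in the stated estimate should read $\|\nabla v_\ell\|_{L^{p_\ell}(\MoireCell)}$ (as you prove), and implicitly $m\le k$ from Assumption \ref{ass-decay}.
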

	
	\section{Stacking-based relaxation\label{sec-relax}}
	
	Instead of studying relaxation in its full generality, we provide sufficient conditions for the existence of a locally unique minimizer. We will only sketch the proof in the case of a single sublattice degree per layer. We believe that the analysis presented in \cite{ortner2018CB-multilattice} can be adapted to the present context to generalize our results to the case of multilattices. However, for the purpose of clarity and to better emphasize the main ideas, namely the difference between stacking- and phonon-based relaxation, we shall omit this generalization.

	\subsection{Relaxation in Cauchy-Born approximation}
	
	We start by proving the existence of energy minimizers in the Cauchy-Born approximation. The reason for that is that we can employ elliptic regularity theory to prove sufficient regularity for minimizers, in order to employ perturbation theory after that. Here, we treat the interlayer energy contribution as a small perturbation.
	
	The following statement about $\delta^2 \CB_j$ is given by \cite[Proposition 4.1]{Cazeaux-Massatt-Luskin-ARMA2020}. The statement about $\delta^2\mono_j$ can be proved analogously, see, e.g., \cite[Proposition 4]{ortnertheil2013}.
	
	\begin{lemma}\label{lem-mono-coercive}
		For any $v\in \Wper^{1,2}(\MoireCell;(\R^2)^{\cA_j})$, we have that
		\begin{align}
			\delta^2\CB_j(0)[v,v] \, \geq \, \nu_{\mathrm{Stack}}\|\nabla v\|_2 \, .
		\end{align}
		Moreover, there exists $\kappa_1>0$ such that for all $\kappa\in(0,\kappa_1)$, $\bu\in\cK_\kappa$, $j=1,2$, we find that
		\begin{equation}
			\delta^2\CB_j(u_j)[v,v], \, \delta^2\mono_j(u_j)[v,v] \, \geq \, \frac{\nu_{\mathrm{Stack}}}2\|\nabla
			v\|_2 \, .
		\end{equation}
	\end{lemma}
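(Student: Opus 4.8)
The plan is to separate the inequality at $u_j=0$, which is essentially an input, from a perturbation argument that transports it to all $\bu\in\cK_\kappa$. At $u_j=0$ and for $\mono_j$, I would observe that $v\mapsto\delta^2\mono_j(0)[v,v]$ is a bounded quadratic form on $\Wbper^{1,2}(\MoireCell;(\R^2)^{\cA_j})$ (boundedness being the $m=2$ case of Lemma~\ref{lem-mono-db}, in the spirit of \cite[Proposition~4]{ortnertheil2013}), so by homogeneity it suffices to test against $v$ with $\|\nabla v\|_2=1$, for which $\delta^2\mono_j(0)[v,v]\ge\nu_{\mathrm{Stack}}$ is precisely the defining property of $\nu_{\mathrm{Stack}}$ in the stacking-based stability assumption. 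For $\CB_j$, the second variation at $0$ is a constant-coefficient quadratic form in $\nabla v$ governed by the single-layer Born (Cauchy--Born) elasticity tensor, and the stated coercivity is \cite[Proposition~4.1]{Cazeaux-Massatt-Luskin-ARMA2020}. The only genuinely non-formal point at this stage is the identification of the coercivity constant of that elasticity tensor with $\nu_{\mathrm{Stack}}$ (in the single-sublattice regime used throughout the Cauchy--Born analysis); if one prefers, one simply replaces $\nu_{\mathrm{Stack}}$ by the minimum of the atomistic and Cauchy--Born stability constants, which changes nothing downstream.

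For the perturbation step I would fix $j$, $\bu\in\cK_\kappa$ and $v\in\Wbper^{1,2}$, write $F_j\in\{\CB_j,\mono_j\}$, and use the fundamental theorem of calculus along the homotopy $t\mapsto t\bu$:
\begin{align}
    \delta^2 F_j(u_j)[v,v] \, = \, \delta^2 F_j(0)[v,v] \, + \, \int_0^1\dx{t}\,\delta^3 F_j(t u_j)[v,v,u_j] \, .
\end{align}
Because every constraint defining $\cK_\kappa$ scales down under $u_j\mapsto t u_j$, the whole segment $\{t\bu:t\in[0,1]\}$ lies in $\cK_\kappa$; in particular the relative finite-difference stencils $\Delta_j[t u_j](x,\alpha_j)$ remain in $\frakD_j^\kappa$ along the segment, so the third derivatives above are evaluated where Assumption~\ref{ass-decay} controls them, and $\Mmono_{3,j}<\infty$ since $k\ge 5\ge 3$. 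Applying Lemma~\ref{lem-mono-db} with $m=3$ for $F_j=\mono_j$, and the analogous $C^m$-estimate for $\CB_j$ established just above, with Hölder exponents $(p_1,p_2,p_3)=(2,2,\infty)$, gives
\begin{align}
    \big|\delta^3 F_j(t u_j)[v,v,u_j]\big| \, \lesssim \, \Mmono_{3,j}\,\|\nabla v\|_2^{2}\,\|\nabla u_j\|_\infty \, \le \, \Mmono_{3,j}\,\kappa\,\|\nabla v\|_2^{2} \, ,
\end{align}
using $\|\nabla u_j\|_\infty\le\kappa$. With $C$ denoting the implied universal constant, this and the base case yield $\delta^2 F_j(u_j)[v,v]\ge(\nu_{\mathrm{Stack}}-C\,\Mmono_{3,j}\,\kappa)\|\nabla v\|_2^{2}$, and it then suffices to set $\kappa_1:=\nu_{\mathrm{Stack}}/\big(2C\max_{j=1,2}\Mmono_{3,j}\big)$.

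I expect the main obstacle to be bookkeeping rather than anything deep. The point that needs genuine care is that the full list of $\cK_\kappa$-constraints --- not only $\|\nabla u_j\|_\infty\le\kappa$ but also the bounds relating $u_j$ to its $\cR_j$- and $\cR_{3-j}$-translates --- is what keeps the stencils $\Delta_j[t u_j]$ inside $\frakD_j^\kappa$ along the whole homotopy, so that the constants $\Mmono_{3,j}$ supplied by Assumption~\ref{ass-decay} genuinely apply; one should also check that $(p_1,p_2,p_3)=(2,2,\infty)$ is an admissible Hölder triple in Lemma~\ref{lem-mono-db}, which it is since $\tfrac12+\tfrac12+0=1$. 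The only conceptual subtlety, as flagged above, is matching the Cauchy--Born coercivity constant of \cite{Cazeaux-Massatt-Luskin-ARMA2020} to the atomistic constant $\nu_{\mathrm{Stack}}$; everything else is a routine first/second/third-variation computation resting on the $C^k$-bounds already in hand.
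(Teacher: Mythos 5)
Your argument is correct and matches the route the paper intends: the paper itself gives no proof but cites \cite[Proposition~4.1]{Cazeaux-Massatt-Luskin-ARMA2020} for $\CB_j$ and \cite[Proposition~4]{ortnertheil2013} for $\mono_j$, and your Taylor expansion $\delta^2F_j(u_j)=\delta^2F_j(0)+\int_0^1\delta^3F_j(tu_j)[\cdot,\cdot,u_j]\,\dxx{t}$ with the $m=3$ bound from Lemma~\ref{lem-mono-db} and the scaling of the $\cK_\kappa$-constraints along $t\mapsto t\bu$ is exactly that standard perturbation argument, including the correct choice $\kappa_1\sim\nu_{\mathrm{Stack}}/\Mmono_{3}$. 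Your flag about identifying the Cauchy--Born coercivity constant with the atomistic $\nu_{\mathrm{Stack}}$ (or else taking the minimum of the two constants) is a legitimate point that the paper glosses over by citation, and handling it as you propose changes nothing downstream.
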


	\begin{proposition}[Stacking-based stability of displaced lattices]\label{prop-relaxed-stability}
		Let $\kappa_1$ be as in Lemma \ref{lem-mono-coercive}. Then there exists a constant $\delta_0>0$ dependent on $|\Gamma_1|$, $|\Gamma_2|$ such that 
		\begin{align}
			\Minter_2 \, \leq \, 
            \frac{\nu_{\mathrm{Stack}}}{\max\Big\{\frac{|\Gamma_1|}{|\Gamma_2|},\frac{|\Gamma_2|}{|\Gamma_1|}\Big\}|\MoireCell|}\delta_0 
		\end{align}
		implies that, for all $\kappa\in(0,\kappa_1)$, $\bu\in\cK_\kappa$ and $\bv\in \Wbper^{1,2}$, we have that
		\begin{align}
			\delta^2\etot(\bu)[\bv,\bv] \, \geq \, \frac{\nu_{\mathrm{Stack}}}2\|\nabla \bv\|_2^2 \, .
		\end{align} 
	\end{proposition}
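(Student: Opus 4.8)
The plan is to decompose the second variation of $\etot$ into the sum of monolayer and interlayer contributions and to control the latter as a perturbation of the former, which is already coercive by Lemma \ref{lem-mono-coercive}. Concretely, write
\begin{align}
    \delta^2\etot(\bu)[\bv,\bv] \, = \, \sum_{j=1}^2\big(\delta^2\mono_j(u_j)[v_j,v_j] \, + \, \delta^2\inter_j(\bu)[\bv,\bv]\big) \, .
\end{align}
For the monolayer part, Lemma \ref{lem-mono-coercive} gives, for $\kappa\in(0,\kappa_1)$ and $\bu\in\cK_\kappa$, the lower bound $\sum_{j}\delta^2\mono_j(u_j)[v_j,v_j]\geq\frac{\nu_{\mathrm{Stack}}}{2}\sum_j\|\nabla v_j\|_2^2=\frac{\nu_{\mathrm{Stack}}}{2}\|\nabla\bv\|_2^2$. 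The remaining task is to show that the interlayer term cannot destroy this coercivity once $\Minter_2$ is small enough.

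Next I would estimate $|\delta^2\inter_j(\bu)[\bv,\bv]|$. Applying Lemma \ref{lem-inter-Cm-bd} with $m=2$ and $p_1=p_2=2$ yields
\begin{align}
    |\delta^2\inter_j(\bu)[\bv,\bv]| \, \lesssim \, \max\Big\{1,\tfrac{|\Gamma_j|}{|\Gamma_{3-j}|}\Big\}\Minter_{2,3-j}\,\|\bv\|_{L^2(\MoireCell)}^2 \, .
\end{align}
The mismatch to be bridged is that the monolayer coercivity is stated in the homogeneous seminorm $\|\nabla\bv\|_2$, whereas the interlayer bound involves the full $L^2$-norm $\|\bv\|_2$. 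This is where the constraint $\mavint\dx{x}\,\bu(x)=0$ built into $\Wbper^{1,2}$ enters: by the Poincar\'e--Wirtinger inequality on the moir\'e cell $\MoireCell$, $\|\bv\|_{L^2(\MoireCell)}\leq C_{\MoireCell}\|\nabla\bv\|_{L^2(\MoireCell)}$ with $C_{\MoireCell}$ depending only on the geometry of $\MoireCell$ (hence on $|\Gamma_1|$, $|\Gamma_2|$ through $\MoireRLV$). Combining, $|\delta^2\inter_j(\bu)[\bv,\bv]|\leq C'\max\{\tfrac{|\Gamma_1|}{|\Gamma_2|},\tfrac{|\Gamma_2|}{|\Gamma_1|}\}|\MoireCell|\,\Minter_{2,3-j}\,\|\nabla\bv\|_2^2$ after folding the Poincar\'e constant and summing over $j$; here I have used $|\MoireCell|$ to absorb the scale-dependence of $C_{\MoireCell}$.

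Finally, choosing $\delta_0$ so that $C'\delta_0\leq\frac{\nu_{\mathrm{Stack}}}{4}$, the smallness hypothesis $\Minter_2\leq\frac{\nu_{\mathrm{Stack}}}{\max\{|\Gamma_1|/|\Gamma_2|,|\Gamma_2|/|\Gamma_1|\}|\MoireCell|}\delta_0$ forces $\sum_j|\delta^2\inter_j(\bu)[\bv,\bv]|\leq\frac{\nu_{\mathrm{Stack}}}{4}\|\nabla\bv\|_2^2$, and hence
\begin{align}
    \delta^2\etot(\bu)[\bv,\bv] \, \geq \, \frac{\nu_{\mathrm{Stack}}}{2}\|\nabla\bv\|_2^2 \, - \, \frac{\nu_{\mathrm{Stack}}}{4}\|\nabla\bv\|_2^2 \, = \, \frac{\nu_{\mathrm{Stack}}}{4}\|\nabla\bv\|_2^2 \, .
\end{align}
To land exactly the constant $\frac{\nu_{\mathrm{Stack}}}{2}$ claimed in the proposition, I would instead start from the sharper monolayer bound $\delta^2\mono_j(u_j)[v_j,v_j]\geq(\nu_{\mathrm{Stack}}-o_\kappa(1))\|\nabla v_j\|_2^2$ (shrinking $\kappa_1$ if needed so the loss is at most $\nu_{\mathrm{Stack}}/4$) and absorb both the $\kappa$-loss and the interlayer term into one quarter of $\nu_{\mathrm{Stack}}$ each; alternatively one simply reads the proposition's $\frac{\nu_{\mathrm{Stack}}}{2}$ as the post-absorption constant and calibrates $\kappa_1$, $\delta_0$ accordingly. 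The main obstacle is the seminorm-versus-norm gap just described: it is essential that the zero-average gauge has been imposed on $\Wbper^{1,2}$, since without it $\delta^2\mono_j$ has the constants in its kernel and no coercivity in $\|\nabla\bv\|_2$ could control the full $L^2$-norm appearing in the interlayer estimate. Everything else is a routine quantitative bookkeeping of the perturbative constants.
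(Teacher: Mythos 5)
Your proposal is correct and follows essentially the same route as the paper's proof: bound $\delta^2\inter_j(\bu)[\bv,\bv]$ via Lemma \ref{lem-inter-Cm-bd} (with $m=2$, $p_1=p_2=2$), pass from $\|\bv\|_2$ to $\|\nabla\bv\|_2$ by Poincar\'e's inequality using the zero-mean constraint in $\Wbper^{1,2}$, and absorb the resulting term into the monolayer coercivity of Lemma \ref{lem-mono-coercive} by choosing $\delta_0$ small. Your closing remark about calibrating $\kappa_1$ and $\delta_0$ to recover the stated constant $\tfrac{\nu_{\mathrm{Stack}}}{2}$ rather than $\tfrac{\nu_{\mathrm{Stack}}}{4}$ addresses a bookkeeping point the paper leaves implicit, and does not change the argument.
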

	
	\begin{proof}
		By Lemma \ref{lem-inter-Cm-bd}, we have that
		\begin{align}
			\delta^2\inter_j(\bu)[\bv,\bv] \, &\lesssim\, \max\Big\{1,\frac{|\Gamma_j|}{|\Gamma_{3-j}|}\Big\}
            \Minter_{2,3-j} \|\bv\|_2^2\\
			&\lesssim \max\Big\{1,\frac{|\Gamma_j|}{|\Gamma_{3-j}|}\Big\}|\MoireCell| \, 
   \Minter_{2,3-j} \|\nabla\bv\|_2^2 \, ,
		\end{align}
		where, in the second step, we employed Poincar\'e's inequality and the fact that $\mavint\dx{x}\bv(x)=0$. In particular, there exists a constant $\delta_0>0$ dependent on $|\Gamma_1$, $|\Gamma_2|$ such that if
		\begin{align}
			\Minter_2 \, \leq \, \frac{\nu_{\mathrm{Stack}}}{\max\Big\{\frac{|\Gamma_1|}{|\Gamma_2|},\frac{|\Gamma_2|}{|\Gamma_1|}\Big\}|\MoireCell|}\delta_0 
		\end{align}
		then 
		\begin{align}
			\delta^2\etot(\bu)[\bv,\bv] \, \geq \, \frac{\nu_{\mathrm{Stack}}}2\|\nabla \bv\|_2^2 \, .
		\end{align}
		This concludes the proof.
	\end{proof}
	
	Next, we want to prove the existence of a local minimizer. For that, we employ the following generalized inverse function theorem as stated, e.g., in \cite[Lemma 14]{ortnertheil2013}.
	
	\begin{lemma}\label{lem-INV-FT}
		Let $X,Y$ be Banach spaces, $O$ an open subset of $X$, and let $T:O\to Y$ be Fr\'echet differentiable. Assume that $x_0\in O$ satisfies 
		\begin{equation}
			\|T(x_0)\|_Y \, \leq\, \eta, \quad \|\delta T(x_0)^{-1}\|_{\cL(Y,X)} \, \leq \, \sigma , \quad \overline{B_X(x_0,2\eta\sigma)}\subseteq O \, .
		\end{equation}
		Assume that $\delta T$ satisfies the Lipschitz bound $$\|\delta T(x_1)-\delta T(x_2)\|_{\cL(X,Y)}\leq L\|x_1-x_2\|_X \quad \forall x_1,x_2\in \overline{B_X(x_0,2\eta\sigma)},$$
		and that $2L\sigma^2\eta<1$. Then there exists a unique $x\in B_X(x_0,2\eta\sigma)$ such that $T(x)=0$.
	\end{lemma}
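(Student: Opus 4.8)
The plan is to convert the equation $T(x)=0$ into a fixed-point equation for the Newton-type map
\[
\Phi(x) \, := \, x \, - \, \delta T(x_0)^{-1} T(x),
\]
which is well-defined on the complete metric space $\overline{B_X(x_0,2\eta\sigma)}$ by hypothesis. Since $\delta T(x_0)^{-1}\in\cL(Y,X)$ is a linear isomorphism, $x$ is a fixed point of $\Phi$ if and only if $T(x)=0$, so it suffices to show that $\Phi$ is a contraction of $\overline{B_X(x_0,2\eta\sigma)}$ into itself and to invoke the Banach fixed-point theorem; the smallness condition $2L\sigma^2\eta<1$ will supply both the self-mapping property and the contraction constant.

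First I would check the self-mapping property. Writing
\[
\Phi(x) - x_0 \, = \, -\,\delta T(x_0)^{-1}\Big( T(x_0) \, + \, \big[\,T(x) - T(x_0) - \delta T(x_0)(x-x_0)\,\big]\Big),
\]
and using the integral (mean-value) form of the first-order remainder together with the Lipschitz bound on $\delta T$ along the segment $[x_0,x]\subseteq\overline{B_X(x_0,2\eta\sigma)}$, one gets
\[
\big\|T(x) - T(x_0) - \delta T(x_0)(x-x_0)\big\| \, \leq \, \tfrac{L}{2}\|x-x_0\|^2 \, \leq \, 2L\eta^2\sigma^2 .
\]
Hence $\|\Phi(x)-x_0\| \leq \sigma\big(\eta + 2L\eta^2\sigma^2\big) = \eta\sigma\big(1 + 2L\eta\sigma^2\big)$, which is \emph{strictly} less than $2\eta\sigma$ by the assumption $2L\sigma^2\eta<1$; in particular $\Phi$ maps $\overline{B_X(x_0,2\eta\sigma)}$ into the open ball $B_X(x_0,2\eta\sigma)$. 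Next, for the contraction estimate, for $x_1,x_2 \in \overline{B_X(x_0,2\eta\sigma)}$ I would write
\[
\Phi(x_1) - \Phi(x_2) \, = \, \delta T(x_0)^{-1}\int_0^1 \big[\delta T(x_0) - \delta T\big(x_2 + t(x_1-x_2)\big)\big](x_1-x_2)\,\mathrm{d}t ,
\]
and since $x_2 + t(x_1-x_2)$ stays in $\overline{B_X(x_0,2\eta\sigma)}$ by convexity, the Lipschitz bound gives $\|\delta T(x_0) - \delta T(x_2+t(x_1-x_2))\| \leq 2L\eta\sigma$, whence $\|\Phi(x_1)-\Phi(x_2)\| \leq 2L\eta\sigma^2\,\|x_1-x_2\|$ with Lipschitz constant $q := 2L\eta\sigma^2 < 1$. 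The Banach fixed-point theorem then yields a unique fixed point $x$ in $\overline{B_X(x_0,2\eta\sigma)}$, which by the self-mapping step actually lies in the open ball $B_X(x_0,2\eta\sigma)$; uniqueness in the closed ball a fortiori gives uniqueness in the open ball. Unwinding the definition of $\Phi$, $x$ is the unique zero of $T$ in $B_X(x_0,2\eta\sigma)$.

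There is no genuine obstacle here: the statement is a quantitative inverse/implicit function theorem of Newton--Kantorovich type, and the argument is entirely standard. The only points requiring mild care are justifying the integral representations of the first-order remainders for a merely Fr\'echet-differentiable $T$ (legitimate because the Lipschitz hypothesis makes $t\mapsto\delta T(x_0+t(x-x_0))$ continuous, hence Bochner-integrable), tracking precisely where the \emph{strict} inequality $2L\sigma^2\eta<1$ is used (to land strictly inside the open ball and to force $q<1$), and recalling that one only claims uniqueness within the prescribed ball rather than globally.
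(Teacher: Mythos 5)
Your proof is correct: the simplified-Newton map $\Phi(x)=x-\delta T(x_0)^{-1}T(x)$, the integral-form Taylor remainder giving the factor $\tfrac{L}{2}$, and the contraction constant $2L\eta\sigma^2<1$ together yield the claim via the Banach fixed-point theorem on the closed ball. The paper does not prove Lemma \ref{lem-INV-FT} at all but quotes it from \cite{ortnertheil2013}, and your argument is essentially the standard one underlying that reference, so there is nothing further to reconcile.
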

	
	Next, we first prove the existence of local minimizers for when the monolayer energy is approximated by the Cauchy-Born energy. This in turn will allow us to prove the existence of a local minimizers for the non-approximated monolayer energy in proximity of the Cauchy-Born minimizer. This strategy was also pursued in \cite{ortnertheil2013}. In particular, there are cases when the solution to the problem in the Cauchy-Born approximation is stable while the atomistic model is unstable, see \cite{wei2007cauchy,hudson2012stability,ortnertheil2013}. 
	\par For this purpose, we introduce the Cauchy-Born approximation of the total energy
	\begin{align}\label{def-etotCB}
		\etotCB(\bu) \, := \, \sum_{j=1}^2\big(\CB_j(u_j)+\inter_j(\bu)\big) \, .
	\end{align}

	\begin{theorem}\label{thm-relax-CB}
		Let $\delta_0>0$ be as in Proposition \ref{prop-relaxed-stability}. Then there exists a constant $\delCB\in(0,\delta_0)$ dependent on $|\Gamma_1$, $|\Gamma_2|$ such that if
		\begin{align}
			\Minter_2 \, \leq \, \frac{\nu_{\mathrm{Stack}}}{\max\Big\{\frac{|\Gamma_1|}{|\Gamma_2|},\frac{|\Gamma_2|}{|\Gamma_1|}\Big\}|\MoireCell|} \delCB ,
		\end{align}
		and that
		\begin{align}
			\MoveEqLeft\max_{j=1,2}\|\disregj\|_2\Minter_2\Big(1+\nu_{\mathrm{Stack}}^{-1}\max_{j=1,2}\|\disregj\|_2^2\sum_{\ell=1}^3\Minter_\ell\Big)^2\\
			&\Big(\sum_{\ell=3}^5\Mmono_\ell+\sum_{\ell=3}^4\Minter_\ell\Big)\leq \delCB \, ,
		\end{align}
		then there exists a solution $\buCB\in\Wbper^{3,2}$ of
		\begin{align}
			\delta\etotCB(\buCB)[\bv] \, = \, 0 \quad \forall \bv \in \Wbper^{3,2} \, .
		\end{align}
		This solution is \emph{stable} in the sense that
		\begin{align}
			\delta^2\etotCB(\buCB)[\bv,\bv] \, \gtrsim \, \|\nabla\bv\|_2^2 \quad \forall \bv \in\Wbper^{1,2} \, .
		\end{align}
		It also satisfies the a priori bound
		\begin{align}
			\MoveEqLeft\|\buCB\|_{W^{3,2}} \, \lesssim \, \max_{j=1,2}\|\disregj\|_2\Minter_2\Big(1+\nu_{\mathrm{Stack}}^{-1}\max_{j=1,2}\|\disregj\|_2^2\sum_{\ell=1}^3\Minter_\ell\Big) \, .
		\end{align}
	\end{theorem}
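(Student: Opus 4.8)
The plan is to obtain $\buCB$ as the zero of the map $T(\bu):=\delta\etotCB(\bu)$ --- viewed, through the natural Riesz identification of $(\Wbper^{1,2})^*$ with $\Wbper^{1,2}$, as a map $T:O\to\Wbper^{1,2}$ from the open set $O:=\{\bu\in\Wbper^{3,2}:\|\nabla\bu\|_\infty<\kappa_1\}$ --- by the quantitative inverse function theorem, Lemma~\ref{lem-INV-FT}, run at the base point $\bu=0$. The reason for working in $X:=\Wbper^{3,2}$, $Y:=\Wbper^{1,2}$ is that $\delta T(0)=\delta^2\etotCB(0)$ is not merely $\Wbper^{1,2}$-coercive but, by elliptic regularity --- $\delta^2\CB_j(0)$ being a constant-coefficient second-order elliptic operator and $\delta^2\inter_j(0)$ a nonlocal but \emph{zeroth-order} perturbation (its stencils $v_{3-j}(\omega_{3-j}+A_{3-j}A_j^{-1}x)-v_j(x)$ carry no derivatives of $\bv$) --- an isomorphism $\Wbper^{3,2}\to\Wbper^{1,2}$; moreover the two-dimensional embedding $W^{3,2}(\MoireCell)\hookrightarrow W^{1,\infty}(\MoireCell)$ places the relevant ball inside $\cK_\kappa$, and the composition and multilinear estimates for the nonlinear part of $T$ close at this regularity since $\Vmono,\Vinter\in C^k$ with $k\ge5$ (Assumption~\ref{ass-decay}) and $W^{2,2}(\MoireCell)$ is an algebra.

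I would then estimate the three quantities in Lemma~\ref{lem-INV-FT}. Since test functions are moir\'e-periodic, $\mavint\dx{x}\,\nabla\bv(x)=0$, hence $\delta\CB_j(0)=0$ and $T(0)=\sum_j\delta\inter_j(0)$; using Lemma~\ref{lem-inter-Cm-bd} with $m=1$ in a form that exploits the partial cancellation between the two layers' forces at the undeformed configuration --- a cancellation governed by $\disregj=I-A_{3-j}A_j^{-1}$ --- together with Poincar\'e's inequality on $\MoireCell$, gives $\eta:=\|T(0)\|_{\Wbper^{1,2}}\lesssim\max_{j=1,2}\|\disregj\|_2\,\Minter_2$. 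For the linearization, the $\etotCB$-analogue of Proposition~\ref{prop-relaxed-stability} --- which follows verbatim from Lemma~\ref{lem-mono-coercive} applied to $\CB_j$ and from Lemma~\ref{lem-inter-Cm-bd} --- gives $\delta^2\etotCB(0)[\bv,\bv]\ge\tfrac{\nu_{\mathrm{Stack}}}2\|\nabla\bv\|_2^2$, hence invertibility onto $(\Wbper^{1,2})^*$ with norm $\le2/\nu_{\mathrm{Stack}}$; bootstrapping to the estimate $\Wbper^{1,2}\to\Wbper^{3,2}$ (differentiate the equation twice, move the interlayer coefficients, which carry $D^2\Vinter$, $D^3\Vinter$ and powers of $A_{3-j}A_j^{-1}$, to the right-hand side and absorb them via coercivity) yields $\sigma:=\|\delta T(0)^{-1}\|_{\cL(\Wbper^{1,2},\Wbper^{3,2})}\lesssim1+\nu_{\mathrm{Stack}}^{-1}\max_{j=1,2}\|\disregj\|_2^2(\Minter_1+\Minter_2+\Minter_3)$. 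Finally the Lipschitz bound for $\delta T$ over $\overline{B_X(0,2\eta\sigma)}$ is a trilinear estimate on $\delta^3\etotCB$ which, combining the multilinear bounds of Lemma~\ref{lem-mono-db} (and its Cauchy--Born analogue) and Lemma~\ref{lem-inter-Cm-bd} with the embeddings $W^{3,2}\hookrightarrow W^{1,\infty}\cap W^{2,4}$ needed to close the H\"older exponents $\sum_\ell\frac1{p_\ell}=1$, gives $L\lesssim\sum_{\ell=3}^5\Mmono_\ell+\sum_{\ell=3}^4\Minter_\ell$.

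With these identifications the first displayed smallness hypothesis is exactly the hypothesis of the $\etotCB$-version of Proposition~\ref{prop-relaxed-stability} with $\delta_0$ replaced by the smaller $\delCB$, guaranteeing $\tfrac{\nu_{\mathrm{Stack}}}2$-coercivity of $\delta^2\etotCB$ throughout $\cK_\kappa$; the second displayed hypothesis encodes, up to a universal constant absorbed into $\delCB$, both $2L\sigma^2\eta<1$ and the requirement that $2\eta\sigma$ be small enough that $\overline{B_X(0,2\eta\sigma)}\subseteq O$ and, via $\|\nabla\bu\|_\infty\lesssim\|\bu\|_{W^{3,2}}$ and the crude bound $2\|\bu\|_\infty\lesssim\|\bu\|_{W^{3,2}}$ for the finite-shift quantities defining $\cK_\kappa$, that $\overline{B_X(0,2\eta\sigma)}\subseteq\cK_\kappa$ for some $\kappa\in(0,\kappa_1)$. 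Lemma~\ref{lem-INV-FT} then produces a unique $\buCB\in B_{\Wbper^{3,2}}(0,2\eta\sigma)$ with $T(\buCB)=0$, i.e.\ the locally unique solution of $\delta\etotCB(\buCB)[\bv]=0$ for all $\bv\in\Wbper^{3,2}$, satisfying $\|\buCB\|_{W^{3,2}}\le2\eta\sigma$, which is the asserted a priori bound; and since $\buCB\in\cK_\kappa$, the coercivity estimate above applied at $\bu=\buCB$ gives $\delta^2\etotCB(\buCB)[\bv,\bv]\ge\tfrac{\nu_{\mathrm{Stack}}}2\|\nabla\bv\|_2^2\gtrsim\|\nabla\bv\|_2^2$, the claimed stability.

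I expect the genuine obstacle to be the regularity bootstrap behind $\sigma$: one must show that the $\Wbper^{1,2}$-coercive but \emph{nonlocal} operator $\delta^2\etotCB(0)$ has a bounded inverse $\Wbper^{1,2}\to\Wbper^{3,2}$ with the stated quantitative dependence on $\Minter_1,\Minter_2,\Minter_3$ and on $\|\disregj\|_2$ --- a difference-quotient argument that must handle the incommensurate composition $x\mapsto A_{3-j}A_j^{-1}x$ and the only polynomially decaying lattice sums, rather than a black-box use of interior elliptic regularity. The refined cancellation estimate feeding $\eta$ is also somewhat delicate; the trilinear Lipschitz bound for $L$ and the verification $\buCB\in\cK_\kappa$ are laborious but routine, following the template of \cite{ortnertheil2013}.
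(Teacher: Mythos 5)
Your proposal follows essentially the same route as the paper: the quantitative inverse function theorem (Lemma \ref{lem-INV-FT}) at $x_0=0$ with $X=\Wbper^{3,2}$, $Y=\Wbper^{1,2}$, the residual bound $\eta\lesssim\max_j\|\disregj\|_2\Minter_2$ from $\delta\CB_j(0)=0$ plus Poincar\'e, the inverse bound $\sigma\lesssim1+\nu_{\mathrm{Stack}}^{-1}\max_j\|\disregj\|_2^2\sum_{\ell\le3}\Minter_\ell$ by moving the interlayer term to the right-hand side and using constant-coefficient elliptic regularity for the Cauchy--Born operator, and the Lipschitz constant $L\lesssim\sum_{\ell=3}^5\Mmono_\ell+\sum_{\ell=3}^4\Minter_\ell$ via the multilinear bounds and $W^{3,2}\hookrightarrow W^{1,\infty}\cap W^{2,4}$. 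The only difference is that the regularity step you flag as the genuine obstacle is handled in the paper exactly as you sketch parenthetically --- the interlayer part is merely a bounded map into $W^{1,2}$, so no difference-quotient argument on the nonlocal operator is needed.
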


	\begin{proof}
		The proof is standard and is only a slight modification of the arguments presented in the proof of \cite[Theorem 4.3]{Cazeaux-Massatt-Luskin-ARMA2020}.
		\par We want to apply Lemma \ref{lem-INV-FT} to $X=\Wbper^{3,2}$, $Y=\Wbper^{1,2}$, $x_0=0$, $O=B_X(0,\kappa')$, where $\kappa'>0$ is chosen small enough such that, due to the Sobolev embedding $\Wbper^{3,2}\subseteq \Wbper^{1,\infty}$, $B_X(0,\kappa')\subseteq K_{\kappa}$ for $\kappa\in(0,\kappa_1$ and $\kappa_1>0$ chosen as in Lemma \ref{lem-mono-coercive}. We define the mapping
		\begin{align}
			T:O\to Y , \, \bu\mapsto \delta\etotCB(\bu) \, .
		\end{align}
		\begin{enumerate}
			\item {\it Residual bound:} Observe that, due to
			\begin{equation}
				\delta\CB_j(0)[v] \, \propto \, \mavint\dx{x} v(x) \, = \, 0
			\end{equation}
			for $v\in \Wbper^{1,2}$, we have that 
			\begin{align}
				T({\bf 0})(x) \, =& \, \sum_{j=1}^2\delta\inter_j({\bf 0})(x)\\
				=& \,\sum_{\ell,j=1}^2\Big[\sum_{R_{3-j}}(\Vinter_{j;R_{3-j}})_\ell\big(\Delta_{j,3-j}[{\bf 0}](x)\big)\\
				& \, -\sum_{R_j}(\Vinter_{3-j;R_j})_\ell\big(\Delta_{3-j,j}[{\bf 0}]\big(A_{3-j}A_j^{-1}(x-R_j)\big)\big)\Big]e_{\ell+j}, \label{eq-T0-expression}
			\end{align} 
			where $e_k$ is the standard unit vector in the $k$-the direction, and where we recall \eqref{eq-ELE-MB-cont}. Consequently, we obtain that
			\begin{equation}
				\|T({\bf 0})\|_Y \, \leq \, \sum_{j=1}^2\|\delta\inter_j({\bf 0})\|_{W^{1,2}}  \, \lesssim \, \sum_{j=1}^2\|\nabla\delta\inter_j({\bf 0})\|_2 \, , 
			\end{equation}
			where, in the last step, we used Poincar\'e's inequality. Plugging in the expression \eqref{eq-T0-expression}, we thus obtain
			\begin{equation}\label{eq-T0-bd}
				\|T({\bf 0})\|_Y \, \lesssim \, \max_{j=1,2}\|\disregj\|_2\Minter_2 \, .
			\end{equation}
			By assumption, we thus can choose $\|T({\bf 0})\|_Y$ small enough.
			
			\item \emph{Stability estimate:} Proposition \ref{prop-relaxed-stability} implies that $\delta T(0)^{-1}\in \cL(Y,Y)$, e.g., by the Lax-Milgrams theorem. We adapt the proof of \cite[Theorem 4.3]{Cazeaux-Massatt-Luskin-ARMA2020} to the present case to prove that even $\delta T(0)^{-1}\in \cL(Y,X)$. 
			\par \hspace*{2ex} For that, let $\bv\in Y$. Due to the stability estimate described above, there then exists a unique solution $\bu\in Y$ to $\delta T(0)[\bu]=\bv$ with $\|\bu\|_Y\leq \frac2{\nu_{\mathrm{Stack}}}\|\bv\|_Y$. Let $\bw:=-\sum_{j=1}^2\delta^2\inter_j({\bf 0})[\bu,\cdot]$. A straight-forward but lengthy computation, analogously to that leading to \eqref{eq-T0-bd} yields 
			\begin{align}
				\|\bw\|_{W^{1,2}} \, \lesssim \, \max_{j=1,2}\|\disregj\|_2^2\sum_{\ell=1}^3\Minter_\ell\|\bu\|_{W^{1,2}} \, . 
			\end{align}
			Then $\bu$ solves the elliptic problem with constant coefficients
			\begin{equation}
				\sum_{j=1}^2\delta^2\CB_j({\bf 0})[\bu,\cdot] \, = \, \bv+\bw \, .
			\end{equation}
			Elliptic regularity theory as explained, e.g., in \cite{Evans-PDE-book,gilbarg-trudinger-book}, then yields that $\bu\in \in X$, with the a priori estimate $\|\bu\|_{W^{3,2}}\lesssim \|\bv+\bw\|_{W^{1,2}}$. In particular, $\delta T(0):X\to Y$ is an isomorphism with the stability estimate
			\begin{align}
				\|\delta T(0)^{-1}\|_{\cL(Y,X)} \, \lesssim \, \Big(1+\nu_{\mathrm{Stack}}^{-1}\max_{j=1,2}\|\disregj\|_2^2\sum_{\ell=1}^3\Minter_\ell\Big) \, .
			\end{align}
			
			\item \emph{Lipschitz bound:} A lengthy but straight-forward calculation, employing the Sobolev embedding $\|\bu\|_{1,\infty}\lesssim\|\bu\|_{2,4}\lesssim\|\bu\|_{3,2}$ implies that 
			\begin{align}\label{eq-CB-Lip}
				\MoveEqLeft\|\delta^2\CB_j(\bu)[\bv,\cdot]-\delta^2\CB_j(\bu')[\bv,\cdot]\|_{W^{1,2}}\\ 
				& \lesssim \, \sum_{\ell=3}^5\Mmono_{\ell,j}\|\bu-\bu'\|_{W^{3,2}}\|\bv\|_{W^{3,2}} \,.
			\end{align}
			Analogously, one proves that
			\begin{align}\label{eq-inter-Lip}
				\MoveEqLeft\|\delta^2\inter_j(\bu)[\bv,\cdot]-\delta^2\inter_j(\bu')[\bv,\cdot]\|_{W^{1,2}}\\ 
				& \lesssim \, \sum_{\ell=3}^4\Minter_{\ell,j}\|\bu-\bu'\|_{W^{3,2}}\|\bv\|_{W^{3,2}} \,.
			\end{align}
			With a computation analogous to \cite[Proof of Theorem 3]{ortnertheil2013} and similar to those proving \eqref{eq-CB-Lip}, \eqref{eq-inter-Lip}, one can also prove that
			\begin{align}
				\MoveEqLeft\|T(\bu+\bv)-T(\bu)-\delta^2\etot(\bu)[\bv,\cdot]\|_{W^{1,2}} \\
				&\lesssim \, \Big(\sum_{\ell=3}^5\Mmono_\ell+\sum_{\ell=3}^4\Minter_\ell\Big)\|\bv\|_{W^{3,2}}^2 \, .
			\end{align}
			In particular, $T:X\to Y$ is Fr\'echet differentiable and its derivative $\delta^2\etot$ satisfies the Lipschitz bound
			\begin{align}
				\|\delta T(\bu)-\delta T(\bu')\|_{\cL(X,Y)} \, \lesssim \, \Big(\sum_{\ell=3}^5\Mmono_\ell+\sum_{\ell=3}^4\Minter_\ell\Big)\|\bu-\bu'\|_{W^{3,2}} \, .
			\end{align}
		\end{enumerate}
		Consequently, Lemma \ref{lem-INV-FT} implies the existence of a local minimizer $\buCB$ and its a priori bound. The stability bound follows from Proposition \ref{prop-relaxed-stability}.
	\end{proof}
	
	\subsection{Relaxation in atomistic model}
	
	Our goal is now to show that existence of a minimizer for the full atomistic model which is a neighborhood of a minimizer for the Cauchy-Born approximation. This will require to impose additional decay assumptions on $\Vmono$. In particular, these assumptions are sufficient to justify the existence of a minimizer for the atomistic model.

	The next statement employs \cite[Lemma 12]{ortnertheil2013}. However, stating it properly would require us to introduce some notation that is similar to that in the present work. In order to avoid confusion and to keep the presentation brief, we thus omit an extensive introduction, and refer the interested reader to the original reference instead.

	\begin{theorem}\label{thm-relax-atomistic}
		Let $\delCB>0$ be given as in Theorem \ref{thm-relax-CB}. Assume that $\bUCB$ is a local minimizer of $\etotCB$ and, for any $\atd>0$, define 
		\begin{equation}
			\buCB(x) \, := \, \atd^{-1}\bUCB(\atd x) \quad \forall x\in \MoireCellres \, := \, \frac1\atd\MoireCell \, .
		\end{equation} 
        Then there exist 
        there exists a constant $\delmono\in(0,\delCB)$ dependent on $|\Gamma_1$, $|\Gamma_2|$ such that if 
        \begin{align}
            \MoveEqLeft\atd \Minter_2 \frac{\max_{j=1,2}\|\disregj\|_2}{\nu_{\mathrm{Stack}}^2}\Big(1+\frac{\max_{j=1,2}\|\disregj\|_2^2}{\nu_{\mathrm{Stack}}}\sum_{\ell=1}^3\Minter_\ell\Big)\\
            & \sum_{\ell=2}^3\Mmonoat_{\ell,2}\Big(\Mmono_3+ \max\Big\{\frac{|\Gamma_1|}{|\Gamma_2|},\frac{|\Gamma_2|}{|\Gamma_1|}\Big\}
            \Minter_3\Big) \, \leq \, \delmono 
        \end{align}
        and
        \begin{align}
			\Minter_2 \, \leq \, \frac{\nu_{\mathrm{Stack}}}{\max\Big\{\frac{|\Gamma_1|}{|\Gamma_2|},\frac{|\Gamma_2|}{|\Gamma_1|}\Big\}|\MoireCell|}
   \delCB 
		\end{align}
		and 
		\begin{align}
			\MoveEqLeft\max_{j=1,2}\|\disregj\|_2\Minter_2\Big(1+\frac{\max_{j=1,2}\|\disregj\|_2^2}{\nu_{\mathrm{Stack}}}\sum_{\ell=1}^3\Minter_\ell\Big)^2\\
			&\Big(\sum_{\ell=3}^5\Mmono_\ell+\sum_{\ell=3}^4\Minter_\ell\Big)\leq \delCB \, ,
		\end{align}
        there exists a stable local minimizer $\bueq$ of $\etot$ which satisfies
		\begin{align}
			\atd\|\nabla \bueq-\nabla\buCB\|_{L^2(\MoireCellres)} \, \lesssim \, &\atd \Minter_2 \frac{\max_{j=1,2}\|\disregj\|_2}{\nu_{\mathrm{Stack}}}\sum_{\ell=2}^3\Mmonoat_{\ell,2}\\
            &\,\Big(1+\frac{\max_{j=1,2}\|\disregj\|_2^2}{\nu_{\mathrm{Stack}}}\sum_{\ell=1}^3\Minter_\ell\Big) \, .
		\end{align}
	\end{theorem}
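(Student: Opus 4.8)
The plan is to run the inverse function theorem, Lemma~\ref{lem-INV-FT}, exactly as in the proof of Theorem~\ref{thm-relax-CB}, but now with base point $x_0 = \buCB$ rather than $0$, taking $X = \Wbper^{3,2}$ (on $\MoireCellres$), $Y = \Wbper^{1,2}$, and $T = \delta\etot$. The three hypotheses --- a residual bound $\|T(\buCB)\|_Y \le \eta$, an inverse bound $\|\delta T(\buCB)^{-1}\|_{\cL(Y,X)} \le \sigma$, and a Lipschitz bound $\|\delta T(\bu) - \delta T(\bu')\|_{\cL(X,Y)} \le L\|\bu - \bu'\|_X$ with $2L\sigma^2\eta < 1$ --- are precisely what the three displayed smallness assumptions in the statement encode, and the radius $2\eta\sigma$ of the unique zero produced by the lemma gives the claimed bound on $\|\nabla\bueq - \nabla\buCB\|_{L^2(\MoireCellres)}$ once the factor $\atd$ coming from the Jacobian of the rescaling $x \mapsto \atd x$ is accounted for.

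The crux is the residual bound. Because $\bUCB$ is a critical point of $\etotCB$ and the interlayer contribution $\inter_j$ is identical in $\etot$ and $\etotCB$, after rescaling one has $T(\buCB) = \delta\etot(\buCB) - \delta\etotCB(\buCB) = \sum_{j=1}^2\big(\delta\mono_j(\uCB_j) - \delta\CB_j(\uCB_j)\big)$, which is exactly the Cauchy--Born consistency error of the monolayer energy. Estimating it in $\Wbper^{1,2}$ is where the stronger decay assumption on $\Mmonoat_{\ell,p,j}$ (with $p = 2$, to match the $L^2$-based spaces) enters: via \cite[Lemma 12]{ortnertheil2013} one obtains a bound of the form $\|T(\buCB)\|_Y \lesssim \atd\,\big(\sum_{\ell=2}^3\Mmonoat_{\ell,2}\big)\,\|\buCB\|_{W^{3,2}}$, and inserting the a priori estimate $\|\buCB\|_{W^{3,2}} \lesssim \max_{j}\|\disregj\|_2\Minter_2\big(1 + \nu_{\mathrm{Stack}}^{-1}\max_{j}\|\disregj\|_2^2\sum_{\ell=1}^3\Minter_\ell\big)$ from Theorem~\ref{thm-relax-CB} produces the left-hand side of the first smallness assumption; thus $\eta$ is proportional to $\atd$ and can be made as small as desired.

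For invertibility, Proposition~\ref{prop-relaxed-stability} already gives $\delta^2\etot(\bu)[\bv,\bv] \ge \frac{\nu_{\mathrm{Stack}}}{2}\|\nabla\bv\|_2^2$ for every $\bu \in \cK_\kappa$, hence $\delta T(\buCB)^{-1} \in \cL(Y,Y)$ by Lax--Milgram. To upgrade to $\cL(Y,X)$ I would, as in step~2 of the proof of Theorem~\ref{thm-relax-CB}, write $\delta^2\etot(\buCB) = \sum_j\delta^2\CB_j(0) + R$, where $R$ collects the second-variation Cauchy--Born modelling error (again $O(\atd)$ by \cite[Lemma 12]{ortnertheil2013}), the difference $\delta^2\CB_j(\buCB) - \delta^2\CB_j(0)$, and $\delta^2\inter_j(\buCB)$, the latter two controlled by $\|\buCB\|_{W^{3,2}}$ together with Lemma~\ref{lem-inter-Cm-bd} and the third-variation bounds; elliptic regularity for the constant-coefficient operator $\sum_j\delta^2\CB_j(0)$ then bootstraps a solution of $\delta T(\buCB)[\bu] = \bv$ from $Y$ into $X$ with $\sigma \lesssim 1 + \nu_{\mathrm{Stack}}^{-1}\max_{j}\|\disregj\|_2^2\sum_{\ell=1}^3\Minter_\ell$. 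The Lipschitz bound $L \lesssim \sum_{\ell=3}^5\Mmono_\ell + \sum_{\ell=3}^4\Minter_\ell$ follows verbatim from the $m = 3,4,5$ cases of Lemmas~\ref{lem-mono-db} and~\ref{lem-inter-Cm-bd} and the Sobolev embedding $\Wbper^{3,2} \hookrightarrow \Wbper^{1,\infty}$, as in Theorem~\ref{thm-relax-CB}; the condition $2L\sigma^2\eta < 1$ is then the remaining displayed inequality with right-hand side $\delmono$. Lemma~\ref{lem-INV-FT} now yields a unique $\bueq \in B_X(\buCB, 2\eta\sigma)$ with $\delta\etot(\bueq) = 0$; choosing $\delmono$ small enough that this ball lies in $\cK_\kappa$ for some $\kappa < \kappa_1$, Proposition~\ref{prop-relaxed-stability} applies at $\bueq$ and delivers the stated stability, while $2\eta\sigma$ gives the a priori bound. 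The main obstacle is precisely the consistency estimate --- controlling $\delta\mono_j - \delta\CB_j$ and $\delta^2\mono_j - \delta^2\CB_j$ at the only $W^{3,2}$-regular configuration $\buCB$ in powers of the atomic length scale $\atd$ --- which is the content of \cite[Lemma 12]{ortnertheil2013} and the reason the extra $\Mmonoat$-decay assumption is imposed.
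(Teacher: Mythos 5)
Your overall skeleton is right --- inverse function theorem centered at $\buCB$, residual identified as the Cauchy--Born consistency error of the monolayer energy controlled through $\Mmonoat_{2,2},\Mmonoat_{3,2}$ and the a priori bound from Theorem~\ref{thm-relax-CB}, stability from Proposition~\ref{prop-relaxed-stability} --- but your choice of spaces breaks the argument at the inverse bound. The paper runs Lemma~\ref{lem-INV-FT} with $X=\Wbper^{1,2}$ and $Y=\Wbper^{-1,2}$ (with the homogeneous norms), defining $T(\bw):=\delta\etot(\buCB+\bw)$, so that Proposition~\ref{prop-relaxed-stability} plus Lax--Milgram immediately gives $\|\delta T(0)^{-1}\|_{\cL(Y,X)}\le 2/\nu_{\mathrm{Stack}}$, and the Lipschitz constant needs only $\Mmono_3$ and $\max\{|\Gamma_1|/|\Gamma_2|,|\Gamma_2|/|\Gamma_1|\}\Minter_3$ --- which is exactly what appears in the first smallness hypothesis of the theorem. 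Your proposal instead takes $X=\Wbper^{3,2}$, $Y=\Wbper^{1,2}$ and tries to repeat step~2 of the proof of Theorem~\ref{thm-relax-CB} by writing $\delta^2\etot(\buCB)=\sum_j\delta^2\CB_j(0)+R$ and bootstrapping via elliptic regularity. That works for the Cauchy--Born operator because it is a (constant-coefficient) elliptic differential operator, but it fails for the atomistic Hessian: $\delta^2\mono_j$ is a nonlocal finite-difference operator that does not gain derivatives, the modelling-error piece $\delta^2\mono_j(\buCB)-\delta^2\CB_j(\buCB)$ applied to a solution $\bu$ that is a priori only in $W^{1,2}$ is neither small nor even representable in $W^{1,2}$ without already controlling $\nabla^2\bu,\nabla^3\bu$ (circular), and atomistic critical points have no $W^{3,2}$ regularity theory. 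So the required bound $\|\delta T(\buCB)^{-1}\|_{\cL(\Wbper^{1,2},\Wbper^{3,2})}\le\sigma$ cannot be established; this is precisely why the paper confines the higher ($W^{3,2}$) regularity to the Cauchy--Born stage and performs the atomistic correction only at the $W^{1,2}$/$W^{-1,2}$ level, concluding merely $\|\nabla\bueq-\nabla\buCB\|_{L^2(\MoireCellres)}\lesssim 2\eta\sigma$.

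Two further mismatches follow from the same choice. First, your Lipschitz constant $L\lesssim\sum_{\ell=3}^5\Mmono_\ell+\sum_{\ell=3}^4\Minter_\ell$ is the one for the $W^{3,2}\to W^{1,2}$ setting of Theorem~\ref{thm-relax-CB}; it does not reproduce the stated condition $2L\sigma^2\eta<1$, which in the theorem involves only $\Mmono_3+\max\{|\Gamma_1|/|\Gamma_2|,|\Gamma_2|/|\Gamma_1|\}\Minter_3$, i.e.\ the Lipschitz bound of $\delta^2\etot$ on $\cL(\Wbper^{1,2},\Wbper^{-1,2})$ furnished by Lemmata~\ref{lem-mono-db} and~\ref{lem-inter-Cm-bd}. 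Second, in the residual step the consistency error must be measured in the negative norm $\Wbper^{-1,2}$: the paper tests $T(0)$ against quasi-interpolants $\widetilde{\bv}=\zeta* \bv$ (using $\|\nabla\tilde v_j\|_2\sim\|\nabla v_j\|_2$), rewrites $\delta\mono_j(\uCB_j)[\tilde v_j]$ and $\delta\CB_j(\uCB_j)[v_j]$ through the atomistic stress density $\smono_j$ of \eqref{def-smono} and the Piola--Kirchhoff stress $\sPK_j$, and then invokes the analogue of \cite[Proposition 3]{ortnertheil2013} to get $|T(0)[\widetilde{\bv}]|\lesssim\atd\big(\Mmonoat_{2,2}\|\nabla^3\bUCB\|_{L^2(\MoireCell)}+\Mmonoat_{3,2}\|\nabla^2\bUCB\|_{L^4(\MoireCell)}^2\big)\|\nabla\bv\|_2$; your sketch gestures at the right reference but omits the quasi-interpolation/stress-representation mechanism that makes the estimate a genuine $\Wbper^{-1,2}$ bound. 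With the spaces corrected and the residual handled this way, the rest of your outline (smallness of $\eta\propto\atd$, stability of $\bueq$ from Proposition~\ref{prop-relaxed-stability}, the a priori distance $2\eta\sigma$) does match the paper.
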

	
	\begin{proof}
        We want to apply Lemma \ref{lem-INV-FT} for the case $x_0 =\buCB$, $X=\Wbper^{1,2}$, endowed with the $\dot{W}^{1,2}$ norm, $Y=\Wbper^{-1,2}$, endowed with the $\dot{W}^{-1,2}$ norm, $O:=B_{\Wbper^{1,2}}(0,\eta_0)$. Let $\kappa_1>0$ be as in Lemma \ref{lem-mono-coercive}. We choose $\eta_0$ small enough s.t. $B_{\Wbper^{1,2}}(x_0,\eta_0)\subseteq \Wbper^{1,2}\cap\cD_{\kappa_1}$. Then we define
		\begin{align}
			T(\bw) \, := \, \delta \etot(\buCB+\bw) \, : \, \Wbper\to\R \, .
		\end{align}
		Lemmata \ref{lem-mono-db} and \ref{lem-inter-Cm-bd} imply that $T$ is Fr\'echet differentiable on $O$ and that $\delta T$ is Lipschitz continuous on $O$ with Lipschitz constant 
		\begin{align}
			C \Big(\Mmono_3+\max\Big\{\frac{|\Gamma_1|}{|\Gamma_2|},\frac{|\Gamma_2|}{|\Gamma_1|}\Big\}\Minter_3\Big)    
		\end{align}
		for some universal constant $C>0$.
		\par The stability estimate follows immediately from Proposition \ref{prop-relaxed-stability}. In particular, we have that
		\begin{align}
			\|\delta T(0)^{-1}\|_{\cL(Y,X)} \, \leq \, \frac2{\nu_{\mathrm{Stack}}} \, .
		\end{align}
		\par Next, we want to establish a residual bound. Given $\bv\in \Wbper$ and recalling \eqref{def-quasi-interpolant}, we abbreviate
        \begin{align}
            \widetilde{\bv} \, := \, (\tilde{v}_j)_{j=1}^2 \, , \quad \tilde{v}_j(x) \, := \, \zeta*v_j(x) \, = \, \int_{\MoireCellres}\dx{y}\zeta(x-y)v_j(y) \, ,
        \end{align}
        where we extended $v_j$ trivially by $0$ outside of $\MoireCell$.
        \par Observe that 
		\begin{align}\label{eq-T0-mono-relax-exp-1}
			T(0)[\widetilde{\bv}] \, = \, \delta\etot(\buCB)[\widetilde{\bv}] \, = \, \sum_{j=1}^2\big(\delta\mono_j(\uCB_j)[\tilde{v}_j]-\delta\CB_j(\uCB_j)[v_j]\big) \, ,
		\end{align}
		where we used the fact that $\etotCB(\buCB)=0$. Recalling the localization formula \eqref{eq-localization}, we have that
        \begin{align}
            \delj_{R_j} \tilde{v}_j \, = \, \int_{\MoireCellres}\dx{y}\underbrace{\frac1{|\Gamma_j|}\int_0^1\dx{t}\zeta_j(x+tR_j-y)}_{=:\zeta_{j,x,R_j}(y)}\nabla_{R_j}v(y)
        \end{align}
        we define, as in \cite{ortnertheil2013}, the atomistic stress density tensor 
        \begin{equation}\label{def-smono}
            \smono_j(u;y) \, := \, \frac1{|\Gamma_j|}\mavresint\dx{x}\sum_{R_j}\big[\Vmono_{j;R_j}\big(\Delta_j[u](x)\big)\otimes R_j\big]\zeta_{j,x,R_j}(y) \, ,
        \end{equation}
        which allows us to compute that 
		\begin{align}
			\delta\mono_j(\uCB_j)[\zeta_j*v_j] \, &= \, \frac1{|\Gamma_j|}\mavresint\dx{x}\sum_{R_j}\Vmono_{j,R_j}\big(\Delta_j[\uCB_j](x)\big)\cdot \delj_{R_j}(\zeta_j*v_j)(x)\\
			&= \, \int_{\MoireCellres}\dx{y}\smono_j(\uCB_j;y):\nabla v_j(y) \, .
		\end{align}
        Similarly, abbreviating the first Piola–Kirchhoff stress density tensor
        \begin{align}
            \sPK_j(u;y) \, := \, \frac1{|\Gamma_j|}\sum_{R_j\in\cR_j}\Vmono_{j;R_j}\big((\nabla u(x)\cdot R_j)_{R_j\in\cR_j}\big)\otimes R_j \, ,
        \end{align}
        we can rewrite
        \begin{align}
            \delta\CB_j(\uCB_j)[v_j] \, = \,\int_{\MoireCell}\dx{y}\sPK_j(\uCB_j;y):\nabla v_j(y) \, .
        \end{align}
        Consequently, \eqref{eq-T0-mono-relax-exp-1} implies the estimate
        \begin{align}
            |T(0)[\widetilde{\bv}]| \, \leq \, \sum_{j=1}^2\|\smono_j(\uCB_j)-\sPK_j(\uCB_j)\|_{L^2(\MoireCellres)}\|\nabla v_j\|_{L^2(\MoireCellres)} \, .
        \end{align}
        An analogous computation to \cite[Proposition 3]{ortnertheil2013} now implies that, given $\bv\in\Wbper$ with $\|\nabla \bv\|_{L^2(\MoireCellres)}=1$, 
        \begin{align}
            |T(0)[\widetilde{\bv}]| \, &\lesssim \, \big(\Mmonoat_{2,2}\|\nabla^3\buCB\|_{L^2(\MoireCellres)}+\Mmonoat_{3,2}\|\nabla^2\buCB\|_{L^4(\MoireCellres)}^2\big)\\
            &\lesssim \, \atd\big(\Mmonoat_{2,2}\|\nabla^3\bUCB\|_{L^2(\MoireCell)}+\Mmonoat_{3,2}\|\nabla^2\bUCB\|_{L^4(\MoireCell)}^2\big) \, ,
        \end{align}
        where we used the definition of $\buCB$ and the scaling properties of the involved norms. Recalling the a priori bound in Theorem \ref{thm-relax-CB} and employing the fact that $\|\nabla \tilde{v}_j\|_2\sim \|\nabla v_j\|_2$, we thus obtain the upper bound
        \begin{align}
            \|T(0)\|_Y \, \lesssim \, & \atd\sum_{\ell=2}^3\Mmonoat_{\ell,2}\max_{j=1,2}\|\disregj\|_2\Minter_2\\
            & \Big(1+\frac{\max_{j=1,2}\|\disregj\|_2^2}{\nu_{\mathrm{Stack}}}\sum_{\ell=1}^3\Minter_\ell\Big) \, .
        \end{align}
        In particular, there exists a universal constant $\delmono>0$ such that if we thus choose $\atd>0$ such that
        \begin{align}
            \MoveEqLeft\atd \Minter_2 \frac{\max_{j=1,2}\|\disregj\|_2}{\nu_{\mathrm{Stack}}^2}\Big(1+\frac{\max_{j=1,2}\|\disregj\|_2^2}{\nu_{\mathrm{Stack}}}\sum_{\ell=1}^3\Minter_\ell\Big)\\
            & \sum_{\ell=2}^3\Mmonoat_{\ell,2}\Big(\Mmono_3+\max\Big\{\frac{|\Gamma_1|}{|\Gamma_2|},\frac{|\Gamma_2|}{|\Gamma_1|}\Big\}
            \Minter_3\Big) \, \leq \, \delmono \, ,
        \end{align}
        Lemma \ref{lem-INV-FT} implies that there exists a locally unique solution $\bv$ of $T[\bv]=0$. We define $\bueq:=\buCB+\bv$. Lemma \ref{lem-INV-FT} yields
        \begin{align}
            \|\nabla\bueq-\nabla\buCB\|_2 \, = \, \|\bv\|_2 \, \lesssim \, &\atd \Minter_2 \frac{\max_{j=1,2}\|\disregj\|_2}{\nu_{\mathrm{Stack}}}\sum_{\ell=2}^3\Mmonoat_{\ell,2}\\
            &\,\Big(1+\frac{\max_{j=1,2}\|\disregj\|_2^2}{\nu_{\mathrm{Stack}}}\sum_{\ell=1}^3\Minter_\ell\Big) \, .
        \end{align}
        This finishes the proof.
	\end{proof}
	
	\section{Phonon stability\label{sec-phonon-stab}}

    As we have previously seen, a critical point for the stacking-based energy density $\etot$ yields a critical point for the offset energy $\detot$. A natural question that we then need to answer is whether stability is also inherited. Recall that an important property of $\etot$ is that it is stable in a neighborhood of zero displacement $\bu={\bf 0}$, which allowed us to obtain a locally unique minimizer.
    
    \par As we shall see, stability is not necessarily inherited in the following sense: First, we will see that $\delta^2\detot_{\bf 0}(0)<0$ despite $\delta^2\etot (0)>0$. After developing a stability criterion, we then show that for purely twisted bilayer systems, at twist angle tending to $0$, the minimizer $\bueq_{\theta\to0}$ does indeed satisfy $\delta^2\detot_{\bueq_{\theta\to0}}(0)>0$.

    \par For the next assumption, recall the interpolants \eqref{def-fo-interpol}--\eqref{def-quasi-interpolant} of functions defined on the lattices $\cR_j$. 
 
	\begin{assumption}[Phonon stability of monolayer]
		We assume that 
		\begin{equation*}
			\nu_{\mathrm{Phonon}} \, := \, \inf_{\|\nabla v\|_2=1}\delta^2\demono_j[0][v,v] \, > \, 0 \, .
		\end{equation*}
	\end{assumption}
	In particular, we have that
	\begin{align*}
		\delta^2 \demono_{j;\du_j}(\du_j) \, \geq \, \nu \|\nabla \du_j\|_2^2 \, .
	\end{align*}
	For the interlayer potential, we consider the simplified problem for interatomic pair potentials $v$, and with only one sublattice degree of freedom. In this case, we have
	\begin{align*}
		\delta^2 \deinter_{j;\bueq}[{\bm 0}][\bdu,\bdv] \, &= \, \frac12 \sum_{R_1,R_2}(D^2v)(R_1+\ueq_1(R_1)-R_2-\ueq_2(R_2))\\
		&\qquad [\du_1(R_1)-\du_2(R_2),\dv_1(R_1)-\dv_2(R_2)] \, .
	\end{align*}

	\subsection{Continuum approximation\label{sec-cont-approx}}
	
	We now want to study the continuum approximation, in order to find an appropriate test function to approximate the equilibrium energy. 
 
    \par We start by \emph{excluding} the relaxation and fixing the the equilibrium distance $d$. We focus on the interlayer energy in the case of pair potentials
	\begin{equation}
		\sum_{R_1,R_2} (v_1(R_1)-v_2(R_2))^T D^2v(R_1-R_2,d) (v_1(R_1)-v_2(R_2)) \, .
	\end{equation}
	Observe that we can rewrite
	\begin{align*}
		\sum_{R_1,R_2}h(R_1,R_2) \, = \, \int \dxx{x}\dx{y} \sum_{R_1}\delta(x-R_1)\sum_{R_2}\delta(y-R_2) h(x,y) \, .
	\end{align*}
	Notice that we can approximate $\delta(x-R_j)$ by $\zeta_j(x-R_j)$. Abbreviating the approximate lattice restrictions
	\begin{equation}\label{def-chi-Rj}
		\chi_{\cR_j}(x) \, := \, \sum_{R_j\in\cR_j}\zeta_j(x-R_j) \, ,
	\end{equation} 
	and the approximate interaction potential
	\begin{equation*}
		A(x,y) \, := \, \chi_{\cR_1}(x)\chi_{\cR_2}(y)(D^2v)(x-y,d)
	\end{equation*}
	we thus approximate $\delta^2 \deinter_{j;\bueq}[{\bm 0}][\bdu,\bdu]$ by
	\begin{align*}
		\int\dxx{x}\dx{y} (\du_1(x)-\du_2(y))^TA(x,y)(\du_1(x)-\du_2(y)) \, .
	\end{align*}
	For simplicity, we even consider
	\begin{equation*}
		A(x,y) \, \leftarrow \,  D^2v(x-y,d) \, .
	\end{equation*}
	Formally, we can justify this simplification by rescaling the interlayer energy by the unit cell volume. In particular, we rescale and take the continuum limit
	\begin{equation}
		\chi_{\cR_j}(x) \, \rightarrow \, \atd^2 \sum_{R_j\in\atd\cR_j}\zeta_j(x-R_j) \xrightarrow{\atd\to0} \int \dx{y}\zeta_j(x-y) \, = \, 1 \, ,
	\end{equation}
	by the normalization condition on $\zeta_j$.
	
	We expand the interlayer energy according to
	\begin{align}
		\label{eq-inter-decomp}
		&\int \dxx{x}\dx{y} (v_1(x)-v_2(y))^T D^2v(x-y,d) (v_1(x)-v_2(y)) \, =\\
		& \quad \int \dxx{x}\dx{y} (v_1(x)-v_1(y))^T D^2v(x-y,d) (v_1(x)-v_1(y)) \label{eq-inter-decomp-mono}\\
		& \quad + \int \dxx{x}\dx{y} (v_1(y)-v_2(y))^T D^2v(x-y,d) (v_1(y)-v_2(y))\label{eq-inter-decomp-inter-av}\\
		& \quad +2 \int \dxx{x}\dx{y} (v_1(x)-v_1(y))^T D^2v(x-y,d) (v_1(y)-v_2(y)) \, . \label{eq-inter-decomp-CSI}
	\end{align}
	This allows us to make the following observations:
	\begin{enumerate}
		\item \eqref{eq-inter-decomp-mono} corresponds to a \emph{monolayer} contribution in layer 1. In particular, if $v$ satisfies some relative compactness property w.r.t. to the monolayer energy, we can treat this contribution perturbatively.
		\item For \eqref{eq-inter-decomp-inter-av}, we notice that $v_1$ and $v_2$ are evaluated at the \emph{same} position, which allows us to \emph{average} over $D^2v$. Negative eigenvalues of this average could lead to negative total energies, dependent on the eigenfunctions.
		\item \eqref{eq-inter-decomp-CSI} can be controlled via Cauchy-Schwarz' inequality by distributing it as corrections to the two previous contributions coming from \eqref{eq-inter-decomp-mono} and \eqref{eq-inter-decomp-inter-av}.
	\end{enumerate}

	As an interlayer potential, we fix a potential that is radial in the horizontal direction, i.e.,
	\begin{equation}
		v(x,z) \, = \, w(|x|,z) \, ,
	\end{equation}
	where $x\in\R^2$. A straight-forward calculation yields
	\begin{align}
		D^2v(x,z) \, = \, \begin{pmatrix}
			\frac{\partial_\rho w(\rho,z)}\rho Q_x \, + \, \partial_\rho^2w(\rho,z) P_x & \partial_{\rho z}^2w(\rho,z) \frac{x}\rho \\
			\partial_{\rho z}^2w(\rho,z) \frac{x^T}\rho & \partial_z^2 w(\rho,z)
		\end{pmatrix}  \, ,
	\end{align}
	where $P_x (y^T,z)^T=\frac{\scp{x}{y}}{|x|}$ and $Q_x=I-P_x$. 
	Another straight-forward computation implies that 
	\begin{align}
		\int_{\R^2} \dx{x}D^2v(x,z) \, &= \, \pi \int_0^\infty \dxx{\rho} \begin{pmatrix}
			\big(\rho\partial_\rho^2w(\rho,z)+\partial_\rho w(\rho,z)\big)I_2 & 0\\
			0 & 2\rho\partial_z^2w(\rho,z)
		\end{pmatrix}\\
		&= \, \pi \begin{pmatrix}
			\rho\partial_\rho w(\rho,z)\Big|_{\rho=0}^{\rho=\infty} & 0\\
			0 & 2\int_0^\infty \dx{\rho}\rho \partial_z^2w(\rho,z)
		\end{pmatrix}\\
		&=: \, \diag(0,0,m(z)) \, , \label{eq-D2v-av}
	\end{align}
	where, in the second line, we integrated by parts and used the fact that $w(\rho,\cdot)\lesssim\frac1{\rho^{2+}}$ for $\rho\to\infty$ and $\|\partial\rho w(\cdot,z)\|_\infty<\infty$ for $|z|\neq 0$. 
	We now consider two special cases.
	\begin{enumerate}
		\item $w(\rho,z)=f_1(\rho)f_2(z)$: Then $m(z)=2\pi \Big(\int_0^\infty \dx{\rho}\rho f_1(\rho)\Big)f_2''(z)$.
		\item $w(\rho,z)=g(r)$, where $r=\sqrt{\rho^2+z^2}$: A straight-forward calculation yields
		\begin{align}
			\partial_z^2w(\rho,z) \, &= \, \frac{\rho^2}{r^3}g'(r) + \Big(\frac{z}{r}\Big)^2g''(r) \, = \, \frac{g'(r)}r + g''(r)- \partial_\rho^2 w(\rho,z) \, ,\\
			\frac{\rho}rg'(r) \, & = \, \partial_\rho w(\rho,z) \, , \quad 
			\rho g''(r) \, = \, \frac{z^2}\rho \partial_\rho^2w(\rho,z)-\frac{z^2}{\rho^2}\partial_\rho w(\rho,z) \, .
		\end{align}
		Consequently, we obtain that
		\begin{align}
			m(z) \, &= \, 2\pi \int_0^\infty \Big(\partial_\rho w(\rho,z)+\frac{z^2}\rho \partial_\rho^2w(\rho,z)-\frac{z^2}{\rho^2}\partial_\rho w(\rho,z)- \rho\partial_\rho^2 w(\rho,z)\Big)\\
			&= \, 2\pi\Big[\Big(\frac{z^2}\rho \partial_\rho w(\rho,z)-\rho\partial_\rho w(\rho,z)\Big)\Big|_{\rho=0}^{\rho=\infty}+2\int_0^\infty \dx{\rho} \partial_\rho w(\rho,z)\Big] \, .
		\end{align}
		Employing $\partial_\rho w(\rho,z)=\frac{\rho}rg'(r)$ and the fundamental theorem of calculus, we thus obtain
		\begin{equation}
			m(z) \, = \, 2\pi\big(|z|g'(|z|)+2g(|z|)\big) \, .
		\end{equation}
	\end{enumerate}
	
	\begin{example}[3D Lennard-Jones]\label{ex-LJ}
		We have that
		\begin{equation}
			g_{\mathrm{LJ}}(r) \, = \, 4\vep_0\Big[\Big(\frac{\sigma}{r}\Big)^{12}-\Big(\frac{\sigma}{r}\Big)^6\Big]
		\end{equation}
		for $\vep_0,\sigma>0$. $2^{\frac16}\sigma\approx 1.12\sigma$ is the equilibrium distance. Then we obtain
		\begin{equation}
			m_{\mathrm{LJ}}(z) \, = \, 8\pi \vep_0\Big[5\Big(\frac{\sigma}{|z|}\Big)^6 - 11\Big(\frac{\sigma}{|z|}\Big)^{12}\Big] \, .
		\end{equation}
		$m_{\mathrm{Lj}}$ changes its sign from negative to positive at $|z|=\big(\frac{11}{5}\big)^{\frac16}\sigma\approx 1.14\sigma$.
        \par In the case of bilayer graphene, we have that the equilibrium distance is given by $z=d_{\mathrm{BG}}=3.35 \mbox{\ \emph{\AA}}$. The parameters are given by $\vep_0=2.39\mbox{\ \emph{meV}}$, $\sigma=3.41\mbox{\ \emph{\AA}}$, see \cite{KolmogorovCrespi}. We compute
        \begin{equation}
            m_{\mathrm{LJ}}^{\mathrm{(BG)}} \, = \, -483.5 \mbox{\ \emph{meV}} \, < \, 0 \, .
        \end{equation}
	\end{example}
	
	\begin{example}[3D Morse potential]\label{ex-3dmorse}
		We have that
		\begin{equation}
			g_{\mathrm{Morse}}(r) \, = \, E_0\big(e^{-2\kappa(r-r_0)}-2e^{-\kappa(r-r_0)}\big)
		\end{equation}\begin{equation}
		g_{\mathrm{Morse}}(r) \, = \, E_0\big(e^{-2\kappa(r-r_0)}-2e^{-\kappa(r-r_0)}\big)
		\end{equation}
		for some dissociation energy $E_0>0$ and decay rate $\kappa>0$. Then we obtain
		\begin{equation}
			m_{\mathrm{Morse}}(z) \, = \, -4\pi E_0\Big((\kappa|z|-1)e^{-2\kappa(|z|-r_0)}-(\kappa|z|-2)e^{-\kappa(|z|-r_0)}\Big) \, .
		\end{equation}
		$m_{\mathrm{Morse}}$ changes its sign from negative to positive once in the interval $[0,\min\{\frac2\kappa,r_0\}]$ and from positive to negative once in $[\max\{\frac2\kappa,r_0\},\infty)$, see fig. \ref{fig-Morse-transition}. The transition points are given by the solutions of
		\begin{equation}
			e^{\kappa(|z|-r_0)} -1 \, = \, \frac1{\kappa|z|-2} \, .
		\end{equation}
        In case of bilayer graphene, we have that , see \cite{oconnor2015airebo}, $\kappa= 1.8168\mbox{\ \emph{\AA}}^{-1}$, $E_0=2.8437\mbox{\ \emph{meV}}$, $r_0=3.6891\mbox{\ \emph{\AA}}$ and again $z=d_{\mathrm{BG}}=3.35 \mbox{\ \emph{\AA}}$ as above. Consequently, we obtain
        \begin{equation}
            m_{\mathrm{Morse}}^{\mathrm{(BG)}} \, = \, -352.8\mbox{\ \emph{meV}} \, < \, 0 \, .
        \end{equation}
	\end{example}
	
	\begin{example}[Combined potential]\label{ex-LJ-Morse}
		Let 
		\begin{equation}
			v_{\mathrm{M+LJ}}(\bx,z) \, = \, v_{\mathrm{Morse}}(|\bx|)v_{\mathrm{LJ}}(z) \, ,
		\end{equation}
		where $v_{\mathrm{Morse}}$ denotes a Morse potential with parameters $(E_0,\kappa,r_0)$, and $v_{\mathrm{LJ}}$ denotes a Lennard-Jones potential with parameters $(E_0',\sigma)=(1,\sigma)$. Then we obtain
		\begin{equation}
			m_{\mathrm{M+LJ}}(z) \, = \, \frac{3\pi E_0}{\kappa^2|z|^2} \big(e^{2\kappa r_0}-8e^{\kappa r_0}\big)\Big[-7\Big(\frac{\sigma}{|z|}\Big)^6+26\Big(\frac{\sigma}{|z|}\Big)^{12}\Big] \, .
		\end{equation}
		Now there is a transition parameter $r_0=\log 8 \, \kappa^{-1}$ above which $m_{\mathrm{Comb}}$ transitions from positive to negative at $|z|=\big(\frac{26}7\big)^{\frac16}\sigma\approx1.24\sigma$, and below which the converse transition occurs. With the above parametrization, we obtain, in the case of bilayer graphene,
        \begin{equation}
            m_{\mathrm{M+LJ}}^{\mathrm{(BG)}}\, = \, 11.6\mbox{\ \emph{keV}} \, > \, 0 \, .
        \end{equation}
	\end{example}
	
	\begin{figure}[htp]
		\centering
		\begin{tikzpicture}
			\begin{axis}[
				axis x line=center,
				axis y line=center,
				xtick={0.193,0.95,1,2,2.35,4,4.354},
				ytick={-1},
				xlabel={$|z|$},
				xticklabels = {$\textcolor{blue}{d_1^{\pm}}$,$\textcolor{red}{d_2^{\pm}}$,$r_1$,$\nicefrac2\kappa$,$\textcolor{blue}{d_1^{\mp}}$,$r_2$,$\textcolor{red}{d_2^{\mp}}$},
				x tick label style={font=\small,yshift={1.75*mod(\ticknum+1,2)*1em}},
				xlabel style={below right},
				ylabel style={above left},
				xmin=0,
				xmax=5.5,
				ymin=-2.5,
				ymax=5.5,
				legend pos=outer north east]
				\addplot [mark=none,samples=200,domain=-3.5:1.9] {1/(x-2)};
				\addlegendentry{$(\kappa|z|-2)^{-1}$}
				\addplot [mark=none,samples=200,blue,domain=-5.5:7] {exp(x-1)-1};
				\addlegendentry{$\exp(\kappa(|z|-r_1))-1$}
				\addplot [mark=none,red,samples=200,domain=-5.5:7] {exp(x-4)-1};
				\addlegendentry{$\exp(\kappa(|z|-r_2))-1$}
				
				\addplot [mark=none,black,samples=200,domain=2.1:7] {1/(x-2)}; %% Second half of hyperbola
				\addplot [mark=none,black,dashed,domain=-3.5:7] {-1};
				\addplot +[mark=none,black,dashed] coordinates {(2, -5.5) (2, 5.5)}; %% vertical asymptote
			\end{axis}
		\end{tikzpicture}
		\caption{Transition distances for the Morse potential are given by the intersection of the hyperbola with the respective exponential function}
		\label{fig-Morse-transition}
	\end{figure}
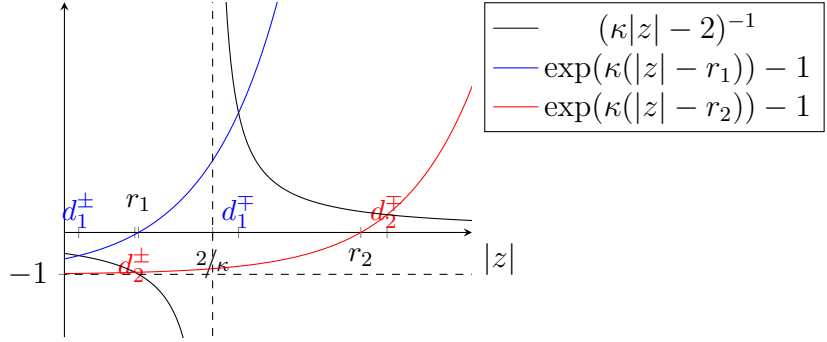
	
	In each of the examples, we determine regions where phonon instabilities are possible. In particular, we now assume that $m(d)<0$.
	
	\par Decomposing the interlayer energy according to \eqref{eq-inter-decomp} and employing \eqref{eq-D2v-av}, we obtain
	\begin{align}
		\Eapprox[\bv] \, :=& \, \sum_{j=1}^2\Big(\delta^2\demono_j(v_j)+ \frac\lambda2\int \dxx{x}\dx{y} D^2v(x-y,d) [(v_j(x)-v_j(y))^{\otimes 2}]\\
		& \quad + \, \frac\lambda2 (-1)^j\int \dxx{x}\dx{y} (v_j(x)-v_j(y))^T D^2v(x-y,d) (v_2(y)-v_1(y))\Big)\\
		& \quad + \, \lambda m(d)\int\dx{x}|(v_1(x)-v_2(x))_z|^2  \, .
	\end{align}
	In order to minimize the first line, we choose $v_j$ in each layer aligned. With that choice, the second line also vanishes in the limit. In order to minimize the last line, we choose $v_1$ and $v_2$ anti-parallel. In particular, choosing 
	\begin{equation}
		\bv_R(x) \, := \, \frac{\phi_R(x)}{(2\pi)^{\frac12}R}(e_z,-e_z) \, ,
	\end{equation}
	where we choose $\phi_R$ as an approximate normalized characteristic function of $B_R(0)$, i.e.
	\begin{align}
		\phi_R\Big|_{B_R(0)}\, = \, 1+O(R^{-1}) \, , \quad \phi_R\Big|_{B_{R+1}(0)^c}=0 \, , \quad \|\nabla\phi_R\|_\infty \lesssim 1 \, , 
	\end{align} 
	we obtain
	\begin{equation}
		\inf_{\|v_j\|_2=1}\Eapprox[\bv] \, \leq\, \limsup_{R\to\infty}\Eapprox[\bv_R] \, = \, 4\lambda m(d)\, < \, 0 \, ,
	\end{equation}
	for we chose $m(d)<0$.

	\subsection{Discrete model}
	
	Analogously to \eqref{eq-inter-decomp}, we decompose the interlayer energy as
	\begin{align}
		\begin{split}\label{eq-inter-decomp-discr}
			&\sum_{R_1,R_2} (v_1(R_1)-v_2(R_2))^T D^2v(R_1-R_2) (v_1(R_1)-v_2(R_2)) \, =\\
			& \quad \sum_{R_1,R_2} \Big((v_1(R_1)-v_1(\Lfloor{R_2}{1}))^T D^2v(R_1-R_2) (v_1(R_1)-v_1(\Lfloor{R_2}{1})) \\
			& \quad + (v_1(\Lfloor{R_2}{1})-v_2(R_2))^T D^2v(R_1-R_2) (v_1(\Lfloor{R_2}{1})-v_2(R_2))\\
			& \quad +2 (v_1(R_1)-v_1(\Lfloor{R_2}{1}))^T D^2v(R_1-R_2) (v_1(\Lfloor{R_2}{1})-v_2(R_2))\Big) \, .
		\end{split}
	\end{align}
	After finding a sequence of phonon modes with a limiting negative energy in the continuum approximation, we would like to study the functional in the true discrete case. Analogously to above, we now set
	\begin{equation}\label{eq-discrete-ansatz}
		v_1(R_1) \, := \,  \frac{\phi_R(R_1)|\Gamma_1|^{\frac12}}{(2\pi)^{\frac12}R}e_z \, , \quad v_2(R_2) \, := \, -\frac{\phi_R(R_2)|\Gamma_2|^{\frac12}}{(2\pi)^{\frac12}R}e_z \, .
	\end{equation}
	Plugging this ansatz into \eqref{eq-inter-decomp-discr}, we obtain
	\begin{align}
		\begin{split}
			&\sum_{R_1,R_2} (v_1(R_1)-v_2(R_2))^T D^2v(R_1-R_2) (v_1(R_1)-v_2(R_2)) \, =\\
			& \quad  \frac{(|\Gamma_1|^{\frac12}+|\Gamma_2|^{\frac12})^2}{2\pi R^2}\sum_{\substack{R_2\in \cR_2\cap B_R(0)\\R_1\in \cR_1}} {\partial_z^2}v(R_1-R_2) \, + \, O(R^{-1}) \, ,
		\end{split}
	\end{align}
	where $C_\phi$ is the normalization constant for $\phi$. Observe that we have that
	\begin{align}
		\sum_{\substack{R_2\in \cR_2\cap B_R(0)\\R_1\in \cR_1}} D^2v(R_1-R_2) \, &= \, \sum_{\substack{R_2\in \cR_2\cap B_R(0)\\R_1\in \cR_1}} D^2v(R_1-\Lfrac{R_2}{1}) \, .
	\end{align}
	In particular, employing the Birkhoff ergodic theorem, we obtain
	\begin{align}
		&\sum_{R_1,R_2} (v_1(R_1)-v_2(R_2))^T D^2v(R_1-R_2) (v_1(R_1)-v_2(R_2))\\
		&\quad = \, \frac{(|\Gamma_1|^{\frac12}+|\Gamma_2|^{\frac12})^2|\Gamma_2|}{2\pi|\Gamma_1|}\int_{\Gamma_1}\dx{x}\sum_{R_1\in\cR_1}\partial_z^2v(R_1-x) \, + \, O(R^{-1})\\
		&\quad = \, \frac{(|\Gamma_1|^{\frac12}+|\Gamma_2|^{\frac12})^2}{2\pi|\Gamma_1|/|\Gamma_2|} \int_{\R^2} \dx{x} \partial_z^2v(x) \, + \, O(R^{-1})\\
		&\quad = \frac{(|\Gamma_1|^{\frac12}+|\Gamma_2|^{\frac12})^2}{2\pi|\Gamma_1|/|\Gamma_2|}m(d) \, + \, O(R^{-1}) \, , 
	\end{align}
	where we used \eqref{eq-D2v-av}. Observe that the monolayer energy contributions are each of size $O(R^{-1})$. As a consequence and bounding the crossing term as above via Cauchy-Schwarz, we obtain
	\begin{equation}
		\inf_{\|\du_j\|_2=1}\detot(0)(\bdu,\bdu) \, \leq\, \lim_{R\to\infty}\Delta E_\infty(0)(\bv_R,\bv_R) \, \leq \, \lambda\frac{(|\Gamma_1|^{\frac12}+|\Gamma_2|^{\frac12})^2}{2\pi|\Gamma_1|/|\Gamma_2|}m(d) \, < \, 0 \, ,
	\end{equation}
	provided that $m(d)<0$.

	With analogous steps and following the ideas of \cite[Lemma 3.6]{hott2023incommensurate}, we obtain
	\begin{align}
		&\inf_{\substack{\|\du_j\|_2=1\\ j=1,2}}\detot(\bueq)(\bdu,\bdu) \\
		& \quad \leq \, \frac{(|\Gamma_1|^{\frac12}+|\Gamma_2|^{\frac12})^2}{2\pi|\Gamma_1|/|\Gamma_2|} \int_{\R^2} \dx{x} \partial_a^2 v\big(x+\ueq_2(\MoireRLV A_1^{-1}x)-\ueq_1(\MoireRLV A_2^{-1}x)\big)
	\end{align}
	for any unit vector $a\in\S^2$. With that, we obtain the following condition:
	
	\begin{definition}[Phonon instability]\label{defi-phonon-instable}
		Let $\bueq$ denote the equilibrium displacements. We say the bilayer system is \emph{phonon-instable} iff there exists a direction $a\in\S^2$
		\begin{align}\label{def-phonon-instable}
			\int_{\R^2} \dx{x} \partial_a^2 v\big(x+\ueq_2(\MoireRLV A_1^{-1}x)-\ueq_1(\MoireRLV A_2^{-1}x)\big) \, < \, 0 \, .
		\end{align}
	\end{definition}

	\paragraph{Sufficient condition for stability} Above, we showed a sufficient condition for instability. Conversely, we would like to establish a sufficient condition for stability. Using \eqref{eq-inter-decomp-discr}, we obtain positive (semi-) definiteness of $\delta^2\detot(0)$ if
	\begin{align}
		\sum_{R_1\in\cR_1} D^2v(x-R_1) \, \geq \, 0
	\end{align}
	is positive definite for all $x\in \Gamma_1$. Including relaxation, this condition becomes 
	\begin{align}
		\sum_{R_1\in\cR_1} D^2v\big(x-R_1+\ueq_2(\MoireRLV A_1^{-1}x)-\ueq_1(R_1+\MoireRLV A_2^{-1}x)\big) \, \geq \, 0 \, .
	\end{align}
	Since $\ueq_j$ is $\MoireRL$-periodic and
	\begin{equation}
		R_1 \, = \, -\MoireRLV (A_2^{-1}-A_1^{-1})R_1 \, = \, -\MoireRLV A_2^{-1}+\MoireRLV A_1^{-1}R_1 \, ,
	\end{equation}
	where $\MoireRLV A_1^{-1}R_1\in\MoireRL$, we have that
	\begin{align}\label{eq-u-def}
		&\ueq_2(\MoireRLV A_1^{-1}x)-\ueq_1(R_1+\MoireRLV A_2^{-1}x) \\
		&\quad = \, \ueq_2(\MoireRLV A_1^{-1}(x-R_1))-\ueq_1(\MoireRLV A_2^{-1}(x-R_1)) \, =: \, \ueq(x-R_1) \, .
	\end{align}
	With this, we obtain the following condition.
	
	\begin{definition}[Phonon stability]\label{defi-phonon-stable}
		Let $\bueq$ denote the equilibrium displacements, and $\ueq$ be defined as in \eqref{eq-u-def}. We say the bilayer system is \emph{phonon-stable} iff 
		\begin{align}\label{def-phonon-stable}
			\sum_{R_1\in\cR_1} D^2v(x-R_1+\ueq(x-R_1)) \, \geq \, 0 
		\end{align}
		for all $x\in\Gamma_1$, in the sense of quadratic forms.
	\end{definition}

	\paragraph{GSFE based offset energy} In order to find a model based on the \emph{generalized stacking fault energy} determined in \cite{Cazeaux-Massatt-Luskin-ARMA2020}, the interlayer energy needs to be replaced by
	\begin{align}
		\mavint\dx{x} \GSFE(\bot x+u_1(x)-u_2(x)) \, .
	\end{align}

	The corresponding interlayer phonon energy term is given by
	\begin{align}
		\sum_{R_1\in\cR_1} D^2\GSFE(\bot R_1+\ueq_1(R_1)-\ueq_2(R_1))\big[(\du_1(R_1)-\du_2(R_1))^{\otimes 2}\big] \, .
	\end{align}
	Analogously, one could replace the summation over $\cR_1$ by summation over $\cR_2$, or a combination of these. Notice that here, we would need to construct a smooth interpolant of $\du_2$. This could be avoided, if one instead approximated the energy by
	\begin{align}
		\sum_{R_1\in\cR_1} D^2\GSFE(\bot R_1+\ueq_1(R_1)-\ueq_2(R_1))\big[(\du_1(R_1)-\du_2(\Lfloor{R_1}{2}))^{\otimes 2}\big] \, ,
	\end{align}
	or, analogously, evaluating $\du_2$ at the nearest lattice point in $\cR_2$ instead. $\GSFE$ takes the form
    \begin{align}
        \GSFE(v,w) \, :=& \GSFE(2\pi v,2\pi w) \, ,\\
        \GSFE_0(v,w) \, :=& \, c_0 +c_1[\cos(v)+\cos(w)+\cos(v+w)]\\
        & \, +c_2[\cos(v+2w)+\cos(v-w)+\cos(2v+w)]\\
        & \, + c_3 [\cos(2v) + \cos(2w)+\cos(2v+2w)] \, ,
    \end{align}
	where $c_0=6.832$, $c_1=4.064$, $c_2=-0.374$, and $c_3=-0.095$ (all in meV/unit area), see \ref{fig-GSFE-ARMA}. 
    In particular, the stability criterion \ref{defi-phonon-stable} translates to
    \begin{align}
        D^2\GSFE(\bot x+\ueq_1(x)-\ueq_2(x)) \, \geq \, 0 \, .
    \end{align}
    
    \begin{figure}
        \centering
        \includegraphics[width=0.5\linewidth]{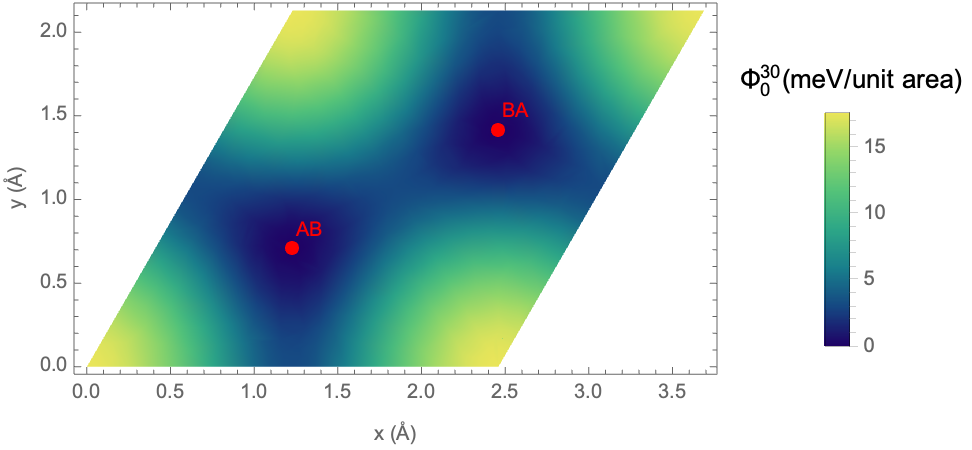}
        \caption{Profile of $\GSFE$, at $30^\circ$ rotated position, with potential minima AB/BA.}
        \label{fig-GSFE-ARMA}
    \end{figure}

    \section{Asymptotic analysis\label{sec-asymp-analysis}}

    To test our criterion, we shall restrict to purely twisted systems, i.e., we assume
    \begin{align}
        A_1 \, = \, R_{\theta/2}A \, , \quad A_2 \, = \, R_{-\theta/2}A \, .
    \end{align}
    Moreover, to obtain an analytical result, we study the energy in the limit $\theta\to0$. As we will see, one can reduce the functional to an Allen-Cahn functional.
    
    \subsection{Reduction to Allen-Cahn functional}
    
    For that, let us consider the energy in the Cauchy-Born and linear elasticity approximation and with interlayer pair potentials. In this case, the total energy reads, see \cite{hott2023incommensurate},
    \begin{align} \label{our-energy}
	   \frac12 \sum_{j=1}^2\mavint \dx{x} \nabla u_j(x):C_j:\nabla u_j(x) + \int_{\R^2} \frac{\dxx{\xi}}{|\MoireCell|} v\big((A_1-A_2) \xi + u_1(A_1\xi) -u_2(A_2\xi)\big) \, .
    \end{align}
    We rescale $x=\MoireRLV y$, $\xi=(A_1-A_2)^{-1}A y$ and abbreviate
    \begin{align}
     U_j(y) &:=A^{-1}u_j(\MoireRLV y), \quad \vep:=2\sin(\sfrac\theta2), \quad I_\vep := A^{-1}R_{\frac\theta2}A , \quad V(x):=\frac{v(Ax)}{|\det A|}, \\
     \cE_{j,abcd}&:=(A^{-1}R_{\frac\pi2})_{a'a}A_{b'b}(A^{-1}R_{\frac\pi2})_{c'c}A_{d'd}C_{j,abcd}\\
     &\,= \, \lambda(R_{-\frac\pi2}A^{-T}A)_{ab}(R_{-\frac\pi2}A^{-T}A)_{cd}+\mu\big[(R_{-\frac\pi2}A^{-T}A)_{ad}(A^TA^{-1}R_{\frac\pi2})_{bc}\\
     & \qquad +(R_{-\frac\pi2}A^{-T}A^{-1}R_{\frac\pi2})_{ac} (A^TA)_{bd}\big] \, .
    \end{align}
    Observe, that $U_j$ is $\Z^2$-periodic, $j=1,2$. After dividing by $\vep$, we rewrite the energy functional as
    \begin{align}
        \MoveEqLeft\frac\vep2\sum_{j=1}^2\int_{\T^2}\dx{y}\nabla U_j(y):\cE_j:\nabla U_j(y) \, + \, \frac1\vep\int_{\R^2}\dx{y}V\big(y+U_1(I_\vep^{-1}y)-U_2(I_\vep y)\big)\\
        &= \, \int_{\T^2}\dx{y}\Big(\frac\vep2\sum_{j=1}^2\nabla U_j(y):\cE_j:\nabla U_j(y)\\
        &\qquad +\frac1\vep\sum_{n\in\Z^2}V\big(y-n+U_1(I_\vep^{-1}(y-n))-U_2(I_\vep (y-n))\big)\Big) \, . \label{eq-resc-inter}
    \end{align}
    If we formally take the limit $\vep\to0$, we obtain that the interlayer misfit energy, multiplied by $\vep$, converges to
    \begin{align}
        \int_{\T^2}\dx{y}\sum_{n\in\Z^2}V\big(y-n+U_1(y-n)-U_2(y-n)\big) \, = \, \int_{\T^2}\dx{y}\Phi_0\big(y+U_1(y)-U_2(y)\big) \, ,
    \end{align}
    where we used the fact that $U_i$ is $\Z^2$-periodic and where $\Phi_0(y):=\sum_{n\in\Z^2}V(y-n)$ is the \emph{misfit energy}, akin to the \emph{generalized stacking fault energy} (GSFE) obtained in \cite{Cazeaux-Massatt-Luskin-ARMA2020}. It turns out that $\Phi_0$ is a two-well potential with minima exactly at AB and BA stacking, corresponding to the positions $x_{AB}=\frac13(1,1)^T$ and $x_{BA}=2x_{AB}$. For simplicity, let us assume homobilayers, i.e., $\cE_1=\cE_2$. Using the fact that
    \begin{align}
        \frac12\sum_{j=1}^2\nabla U_j:\cE:\nabla U_j \, = \, \nabla u:\cE:\nabla u \, + \, \nabla \bar{u}:\cE:\nabla \bar{u} \, ,
    \end{align}
    where $u:=\frac{U_1-U_2}2$ and $\bar{u}:=\frac{U_1+U_2}2$. In particular, we can rewrite the energy as 
    \begin{align}
        \int_{\T^2}\dx{x} \big(\vep\nabla u:\cE:\nabla u+\frac1\vep\Phi_0(x+2u) \, + \, \vep \nabla \bar{u}:\cE:\nabla \bar{u} \big) \, .
    \end{align}
    Consequently, the minimum is attained at $\bar{u}=C$ for some constant $C$. Due to the constraint $\int_{\T^2}\dx{x}U_j=0$, we have that $C=0$, i.e., $u=U_1=-U_2$. Then the remaining energy is given by
    \begin{equation}
        \int_{\T^2}\dx{x} \big(\vep\nabla u_\vep:\cE:\nabla u_\vep+\frac1\vep\Phi_0(x+2u_\vep) \, .
    \end{equation}
    From the Allen-Cahn theory, see, \cite{sternberg1991vector}, we find that, as $\vep\to0$, the sequence of minimizers $x+2u_\vep$ needs to converge to 
    \begin{align}\label{eq-minimizer-gamma-limit}
        x_{AB}\mathds{1}_E+x_{BA}\mathds{1}_{E^c} 
    \end{align}
    for some appropriate set $E\subseteq \T^2$. Here, we set the minima of $\Phi_0$ to zero. A crucial assumption for the theory to be applicable is that the Hessian of $\Phi_0$ is positive at its wells. In particular, we require the assumption
    \begin{align}\label{eq-asymp-phonon-stable-crit}
        (D^2\Phi_0)(x_{AB}),(D^2\Phi_0)(x_{BA}) \, > \, 0 \, .
    \end{align}
    In particular, we have in this case that
    \begin{align}
        (D^2\Phi_0)(x_{AB}\mathds{1}_E+x_{BA}\mathds{1}_{E^c}) \, > \, 0 \, ,
    \end{align}
    i.e., by Definition \ref{defi-phonon-stable}, in the limit $\theta\to0$, the bilayer system is \emph{phonon stable}.

    \subsection{Model evaluation\label{sec-model-eval}}

    We now want to evaluate the asymptotic stability criterion \eqref{eq-asymp-phonon-stable-crit} in the models presented in Section \ref{sec-inter-models}. For simplicity, we will restrict to two-dimensional interlayer coupling pair-potentials, where we keep the vertical distance fixed. In the case of bilayer graphene, the geometry is given by 
    \begin{equation}
    	A \, = \, \atd\begin{pmatrix}
    		\frac{\sqrt{3}}{2} & \frac{\sqrt{3}}{2} \\
    		-\frac{1}{2} & \frac{1}{2} \\
    	\end{pmatrix} , \quad \atd \, = \, \sqrt{3}\cdot 1.42 \mathrm{\ nm} \, \approx \, 2.46\mathrm{\ nm} \, , \quad  d_{\mathrm{BG}} \, \approx \, 3.35 \mbox{\ \AA} \, ,
    \end{equation}
    where $d_{\mathrm{BG}}$ is the equilibrium distance and $\atd/\sqrt{3}$ is the lattice constant.
    For a radial potential $v^{(3D)}(x,z)=f(|(\bx,z)|)$, the restricted potential is then given by 
    \begin{equation}
    	v^{(2D)}(x) \, = \, f(\sqrt{|x|^2+d_{\mathrm{BG}}^2}) \, =: \, g(|x|) \quad \Rightarrow \quad V(x) \, = \, \frac{v^{(2D)}(Ax)}{|\det A|} \, .
    \end{equation}
    Observe that we have that
    \begin{align}
		D^2v^{(2D)}(x) \, = \, g''(r) e_r\otimes e_r \, + \, \frac{g'(r)}r e_\vphi\otimes e_\vphi, \quad D^2V(x) \, = \, \frac{A^TD^2v^{(2D)}(Ax)A}{|\det A|} \, , 
    \end{align}
    where $x=re_r$, $e_r=(\cos\vphi,\sin\vphi)^T$, $e_\vphi=(-\sin\vphi,\cos\vphi)^T$. To approximate $\Phi_0$, we truncate the summation at some $N$, i.e., we compute
    \begin{equation}
    	(D^2\Phi_0)(AB) \, \approx \, \sum_{\substack{n\in \Z^2\\ \|n\|_\infty \leq N}} D^2V(AB) \, .
    \end{equation}
    We provided the profiles of the GSFEs corresponding to the respective models, and the evaluation of the Hessians at the potential wells in fig. \ref{fig-stab-eval}. We obtain that each of the considered models exhibits phonon stability. 
    \par In case of $\GSFE$, an analogous argument to above shows that the minimizer needs to satisfy $u_2=-u_1$. With analogous notation to above, the argument of $\GSFE$ is given by $x+2u_\vep$, which, again, needs to converge to \eqref{eq-minimizer-gamma-limit}. In particular, we compute 
    \begin{align}
        D^2\GSFE (AB) \, \geq \, 117 \mbox{\ meV},
    \end{align}
    i.e., the GSFE-based model obtained in \cite{Cazeaux-Massatt-Luskin-ARMA2020} is phonon-stable in the limit $\theta\to0$.
    \begin{figure}
    	\centering
    	\begin{subfigure}{0.4\textwidth}
    		\includegraphics[width=\textwidth]{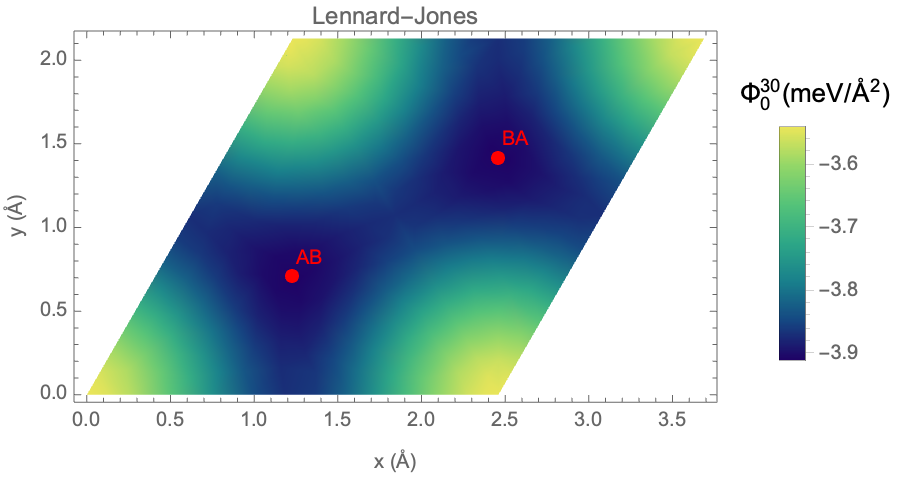}
    	\end{subfigure}
    	\begin{subfigure}{0.4\textwidth}
    		\includegraphics[width=\textwidth]{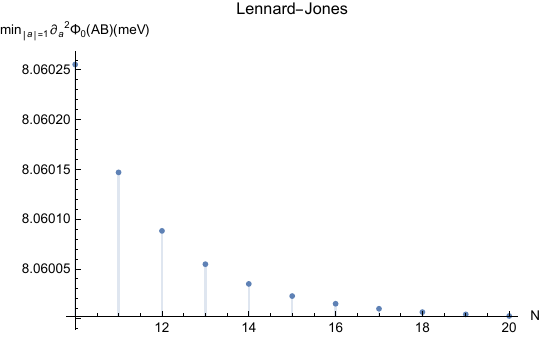}
    	\end{subfigure}
    	\hfill
    	\begin{subfigure}{0.4\textwidth}
    		\includegraphics[width=\textwidth]{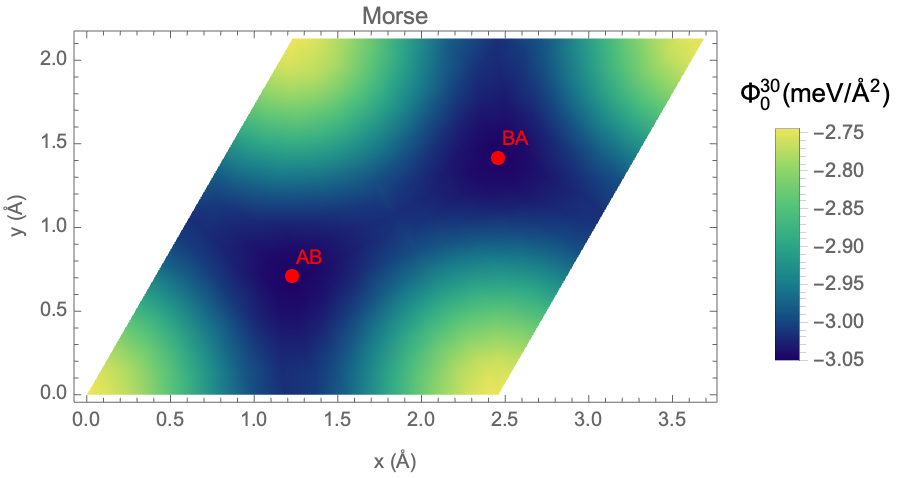}
    	\end{subfigure}
    	\begin{subfigure}{0.4\textwidth}
    		\includegraphics[width=\textwidth]{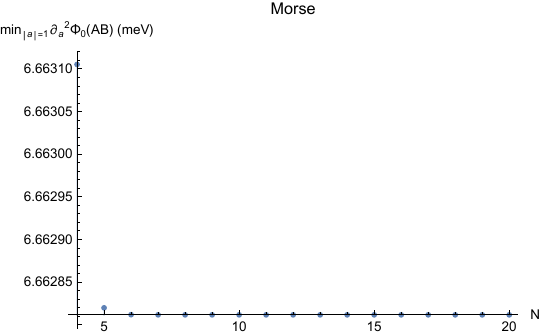}
    	\end{subfigure}
    	\hfill
    	\begin{subfigure}{0.4\textwidth}
    		\includegraphics[width=\textwidth]{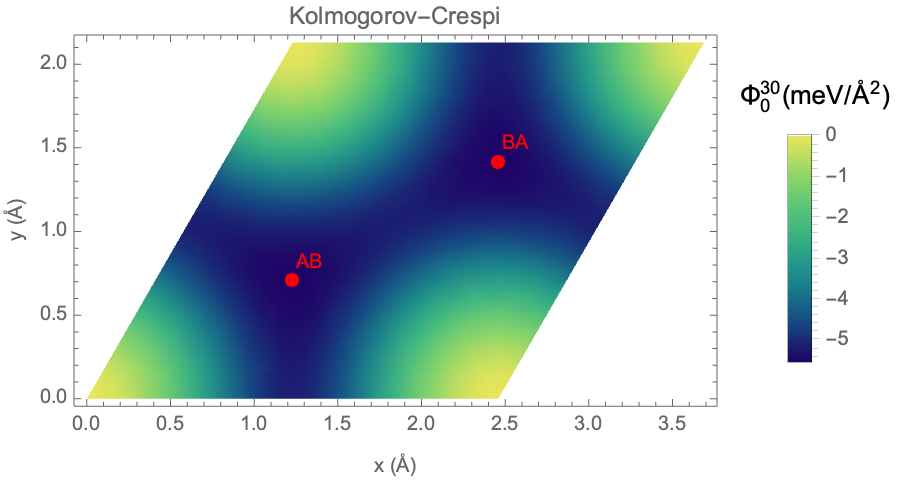}
    	\end{subfigure}
    	\begin{subfigure}{0.4\textwidth}
    		\includegraphics[width=\textwidth]{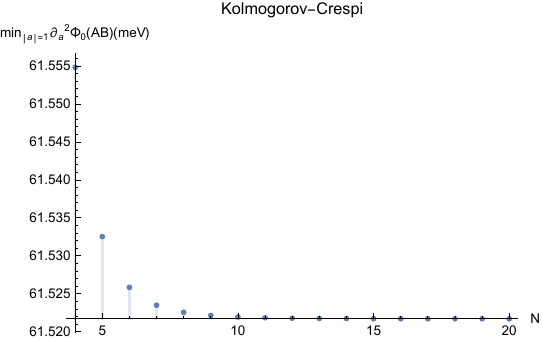}
    	\end{subfigure}
    	\caption{Misfit energy for bilayer graphene evaluated at 30$^\circ$ rotated positions with potential wells AB/BA. The truncation is chosen to be $N=10$.  Parametrization given in Sections \ref{sec-models} and \ref{sec-cont-approx}. \emph{Right:} Convergence of minimal eigenvalue of Hessian at AB/BA point for growing potential sum truncation $N$. From top to bottom: Lennard-Jones, Morse, and Kolmogorov-Crespi potential}
    	\label{fig-stab-eval}
    \end{figure}

	% \noindent

	\bibliographystyle{acm}  % Here the bibliography 		     %
	\bibliography{references}

\begin{thebibliography}{10}

\bibitem{aubry1983frenkel-kontorova}
{\sc Aubry, S., and Le~Daeron, P.-Y.}
\newblock The discrete {F}renkel-{K}ontorova model and its extensions: I. {E}xact results for the ground-states.
\newblock {\em Physica D: Nonlinear Phenomena 8}, 3 (1983), 381--422.

\bibitem{bellissard2002coherent}
{\sc Bellissard, J.}
\newblock Coherent and dissipative transport in aperiodic solids: An overview.
\newblock {\em Dynamics of Dissipation\/} (2002), 413--485.

\bibitem{bellissard-quasi-per1982}
{\sc Bellissard, J., and Testard, D.}
\newblock Quasi periodic {Hamiltonians}: a mathematical approach.
\newblock {\em Operator Algebras and Applications 2\/} (1982), 579.

\bibitem{bellissard1994noncommutative}
{\sc Bellissard, J., van Elst, A., and Schulz-Baldes, H.}
\newblock The noncommutative geometry of the quantum {Hall} effect.
\newblock {\em Journal of Mathematical Physics 35}, 10 (1994), 5373--5451.

\bibitem{bistritzer2011moire}
{\sc Bistritzer, R., and MacDonald, A.~H.}
\newblock Moir{\'e} bands in twisted double-layer graphene.
\newblock {\em Proceedings of the National Academy of Sciences 108}, 30 (2011), 12233--12237.

\bibitem{CaoJarilloHerero2018}
{\sc Cao, Y., Fatemi, V., Fang, S., Watanabe, K., Taniguchi, T., Kaxiras, E., and Jarillo-Herrero, P.}
\newblock {Unconventional superconductivity in magic-angle graphene superlattices}.
\newblock {\em Nature 556}, 7699 (2018), 43--50.

\bibitem{Carr2018}
{\sc Carr, S., Massatt, D., Torrisi, S.~B., Cazeaux, P., Luskin, M., and Kaxiras, E.}
\newblock Relaxation and domain formation in incommensurate two-dimensional heterostructures.
\newblock {\em Physical Review B 98\/} (12 2018).

\bibitem{cazeauxclark23domainwall}
{\sc Cazeaux, P., Clark, D., Engelke, R., Kim, P., and Luskin, M.}
\newblock Relaxation and domain wall structure of bilayer {M}oir\'{e} systems.
\newblock {\em J. Elasticity 154}, 1-4 (2023), 443--466.

\bibitem{Cazeaux-Massatt-Luskin-ARMA2020}
{\sc Cazeaux, P., Luskin, M., and Massatt, D.}
\newblock {E}nergy minimization of two dimensional incommensurate heterostructures.
\newblock {\em Archive for Rational Mechanics and Analysis 235\/} (2 2020), 1289--1325.

\bibitem{cazeauxrippling2017}
{\sc Cazeaux, P., Luskin, M., and Tadmor, E.~B.}
\newblock Analysis of {Rippling} in {Incommensurate} {One}-{Dimensional} {Coupled} {Chains}.
\newblock {\em Multiscale Modeling \& Simulation 15}, 1 (Jan. 2017), 56--73.

\bibitem{chen2022qm}
{\sc Chen, H., Ortner, C., and Wang, Y.}
\newblock {QM/MM} methods for crystalline defects. part 3: machine-learned {MM} models.
\newblock {\em Multiscale Modeling \& Simulation 20}, 4 (2022), 1490--1518.

\bibitem{wei2007cauchy}
{\sc E, W., and Ming, P.-b.}
\newblock Cauchy-{Born} rule and the stability of crystalline solids: dynamic problems.
\newblock {\em Acta Mathematicae Applicatae Sinica, English Series 23}, 4 (2007), 529--550.

\bibitem{malena2018}
{\sc Espa{\~{n}}ol, M.~I., Golovaty, D., and Wilber, J.~P.}
\newblock Discrete-to-continuum modelling of weakly interacting incommensurate two-dimensional lattices.
\newblock {\em Proc. R. Soc. A.\/} (2018), 20170612.

\bibitem{malena2023}
{\sc Espa{\~{n}}ol, M.~I., Golovaty, D., and Wilber, J.~P.}
\newblock A discrete-to-continuum model of weakly interacting incommensurate two-dimensional lattices: The hexagonal case.
\newblock {\em Journal of the Mechanics and Physics of Solids 173\/} (2023), 105229.

\bibitem{Evans-PDE-book}
{\sc Evans, L.~C.}
\newblock {\em Partial differential equations}, second~ed., vol.~19 of {\em Graduate Studies in Mathematics}.
\newblock American Mathematical Society, Providence, RI, 2010.

\bibitem{gilbarg-trudinger-book}
{\sc Gilbarg, D., and Trudinger, N.~S.}
\newblock {\em Elliptic partial differential equations of second order}.
\newblock Classics in Mathematics. Springer-Verlag, Berlin, 2001.
\newblock Reprint of the 1998 edition.

\bibitem{hainzl2007QEDmean-no-photon}
{\sc Hainzl, C., Lewin, M., and Solovej, J.~P.}
\newblock The mean-field approximation in quantum electrodynamics: The no-photon case.
\newblock {\em Communications on Pure and Applied Mathematics: A Journal Issued by the Courant Institute of Mathematical Sciences 60}, 4 (2007), 546--596.

\bibitem{hott2023incommensurate}
{\sc Hott, M., Watson, A.~B., and Luskin, M.}
\newblock From incommensurate bilayer heterostructures to {Allen-Cahn}: An exact thermodynamic limit.
\newblock {\em Arch. Rat. Mech. Anal.\/} (to appear).

\bibitem{hudson2012stability}
{\sc Hudson, T., and Ortner, C.}
\newblock On the stability of {Bravais} lattices and their {Cauchy--Born} approximations.
\newblock {\em ESAIM: Mathematical Modelling and Numerical Analysis 46}, 1 (2012), 81--110.

\bibitem{KerrLi-erg-th}
{\sc Kerr, D., and Li, H.}
\newblock {\em Ergodic theory : independence and dichotomies}.
\newblock Springer monographs in mathematics. Springer, Cham, Switzerland, 2016.

\bibitem{KolmogorovCrespi}
{\sc Kolmogorov, A.~N., and Crespi, V.~H.}
\newblock Registry-dependent interlayer potential for graphitic systems.
\newblock {\em Phys. Rev. B 71\/} (Jun 2005), 235415.

\bibitem{koshinonam2019phonons}
{\sc Koshino, M., and Son, Y.-W.}
\newblock Moir{\'e} phonons in twisted bilayer graphene.
\newblock {\em Physical Review B 100}, 7 (2019), 075416.

\bibitem{leven2016interlayer}
{\sc Leven, I., Maaravi, T., Azuri, I., Kronik, L., and Hod, O.}
\newblock Interlayer potential for {graphene/h-BN} heterostructures.
\newblock {\em Journal of chemical theory and computation 12}, 6 (2016), 2896--2905.

\bibitem{liu2021coulombscreening}
{\sc Liu, X., Wang, Z., Watanabe, K., Taniguchi, T., Vafek, O., and Li, J.}
\newblock Tuning electron correlation in magic-angle twisted bilayer graphene using {Coulomb} screening.
\newblock {\em Science 371}, 6535 (2021), 1261--1265.

\bibitem{zoe-moirephonon2022low}
{\sc Lu, J.~Z., Zhu, Z., Angeli, M., Larson, D.~T., and Kaxiras, E.}
\newblock Low-energy moir{\'e} phonons in twisted bilayer van der waals heterostructures.
\newblock {\em Physical Review B 106}, 14 (2022), 144305.

\bibitem{zhu2022lowenergyphonon}
{\sc Lu, J.~Z., Zhu, Z., Angeli, M., Larson, D.~T., and Kaxiras, E.}
\newblock Low-energy moir{\'e} phonons in twisted bilayer van der {W}aals heterostructures.
\newblock {\em Physical Review B 106}, 14 (2022), 144305.

\bibitem{Hod10}
{\sc Marom, N., Bernstein, J., Garel, J., Tkatchenko, A., Joselevich, E., Kronik, L., and Hod, O.}
\newblock Stacking and registry effects in layered materials: The case of hexagonal boron nitride.
\newblock {\em Phys. Rev. Lett. 105\/} (Jul 2010), 046801.

\bibitem{ortner21}
{\sc Musil, F., Grisafi, A., Bartók, A.~P., Ortner, C., Csányi, G., and Ceriotti, M.}
\newblock Physics-inspired structural representations for molecules and materials.
\newblock {\em Chemical Reviews 121}, 16 (2021), 9759--9815.

\bibitem{naik2019kolmogorov}
{\sc Naik, M.~H., Maity, I., Maiti, P.~K., and Jain, M.}
\newblock Kolmogorov--{Crespi} potential for multilayer transition-metal dichalcogenides: capturing structural transformations in moir{\'e} superlattices.
\newblock {\em The Journal of Physical Chemistry C 123}, 15 (2019), 9770--9778.

\bibitem{oconnor2015airebo}
{\sc O'connor, T.~C., Andzelm, J., and Robbins, M.~O.}
\newblock {AIREBO-M}: A reactive model for hydrocarbons at extreme pressures.
\newblock {\em The Journal of chemical physics 142}, 2 (2015).

\bibitem{oh2021unconventionalSC}
{\sc Oh, M., Nuckolls, K.~P., Wong, D., Lee, R.~L., Liu, X., Watanabe, K., Taniguchi, T., and Yazdani, A.}
\newblock Evidence for unconventional superconductivity in twisted bilayer graphene.
\newblock {\em Nature 600}, 7888 (2021), 240--245.

\bibitem{ortner2018CB-multilattice}
{\sc Olson, D., Li, X., Ortner, C., and Van~Koten, B.}
\newblock Force-based atomistic/continuum blending for multilattices.
\newblock {\em Numer. Math. 140}, 3 (2018), 703--754.

\bibitem{ortnertheil2013}
{\sc Ortner, C., and Theil, F.}
\newblock Justification of the {Cauchy}-{Born} {Approximation} of {Elastodynamics}.
\newblock {\em Archive for Rational Mechanics and Analysis 207}, 3 (Mar. 2013), 1025--1073.
\newblock Publisher: Springer Science and Business Media, LLC.

\bibitem{peng2020strain}
{\sc Peng, Z., Chen, X., Fan, Y., Srolovitz, D.~J., and Lei, D.}
\newblock Strain engineering of {2D} semiconductors and graphene: from strain fields to band-structure tuning and photonic applications.
\newblock {\em Light: Science \& Applications 9}, 1 (2020), 190.

\bibitem{schmidt2015interatomic}
{\sc Schmidt, K.~M., Buettner, A.~B., Graeve, O.~A., and Vasquez, V.~R.}
\newblock Interatomic pair potentials from {DFT and molecular dynamics for Ca, Ba, and Sr hexaborides}.
\newblock {\em Journal of Materials Chemistry C 3}, 33 (2015), 8649--8658.

\bibitem{shapeev16}
{\sc Shapeev, A.~V.}
\newblock Moment tensor potentials: A class of systematically improvable interatomic potentials.
\newblock {\em Multiscale Modeling \& Simulation 14}, 3 (2016), 1153--1173.

\bibitem{sternberg1991vector}
{\sc Sternberg, P.}
\newblock Vector-valued local minimizers of nonconvex variational problems.
\newblock {\em The Rocky Mountain Journal of Mathematics 21}, 2 (1991), 799--807.

\bibitem{Srolovitz2015}
{\sc Zhou, S., Han, J., Dai, S., Sun, J., and Srolovitz, D.~J.}
\newblock van der {W}aals bilayer energetics: Generalized stacking-fault energy of graphene, boron nitride, and graphene/boron nitride bilayers.
\newblock {\em Phys. Rev. B 92\/} (Oct 2015), 155438.

\bibitem{zoe-2024microscopic}
{\sc Zhu, Z., and Devereaux, T.~P.}
\newblock Microscopic theory for electron-phonon coupling in twisted bilayer graphene.
\newblock {\em arXiv preprint arXiv:2407.03293\/} (2024).

\end{thebibliography}

\end{document}